\title{RYDE: A Digital Signature Scheme based on Rank-Syndrome-Decoding Problem with MPCitH Paradigm}
\author{
    Loïc Bidoux\inst{1} \and Jesús-Javier Chi-Domínguez\inst{1} \and Thibauld Feneuil\inst{2 \and 3} \and \\
    Philippe Gaborit\inst{4} \and Antoine Joux\inst{5} \and Matthieu Rivain\inst{3} \and Adrien Vinçotte\inst{4}
}
\institute{
    Technology Innovation Institute, UAE \and
    Sorbonne Université, CNRS, INRIA, Institut de Mathématiques\\ de Jussieu-Paris Rive Gauche, Ouragan, Paris, France \and
    CryptoExperts, Paris, France \and
    University of Limoges, France \and
    CISPA, Helmholtz Center for Information Security, Saarbrücken
}
\begin{document}

\let\oldaddcontentsline\addcontentsline
\def\addcontentsline#1#2#3{}
\maketitle
\def\addcontentsline#1#2#3{\oldaddcontentsline{#1}{#2}{#3}}

\begin{abstract}
    We present a signature scheme based on the Syndrome-Decoding problem in rank metric. It is a construction from multi-party computation (MPC), using a MPC protocol which is a slight improvement of the linearized-polynomial protocol used in \cite{F22}, allowing to obtain a zero-knowledge proof thanks to the MPCitH paradigm. We design two different zero-knowledge proofs exploiting this paradigm: the first, which reaches the lower communication costs, relies on additive secret sharing and uses the hypercube technique \cite{AGHHJY22}; and the second relies on low-threshold linear secret sharing as proposed in \cite{FR22}. These proofs of knowledge are transformed into signature schemes thanks to the Fiat-Shamir heuristic \cite{FS}, allows us to obtain signatures above 6kB. These performances prompted us to propose this signature to the Post-Quantum Cryptography Standardization process organized by NIST.
\end{abstract}

\newpage

\section{Introduction}

Zero-Knowledge (ZK) Proofs of Knowledge (PoK) have become significant cryptographic primitives thanks to their various applications. Such protocols allow a party (the prover) to convince an other one (the verifier) that he knows a secret information without reveal anything about it. ZK proofs are useful in many contexts: they allow to construct identification schemes, they can be turned in signature schemes thanks to the Fiat-Shamir transform \cite{FS} or the Unruh transform \cite{Unr}. In recent years, many signatures have been constructed using this approach, for example from Zero-knowledge proofs of knowledge of the solution to an instance of the Syndrome Decoding problem \cite{Stern93}, the Multivariate Quadratic problem \cite{SSH11}, or the Permuted Kernel Problem \cite{SHA89}.

We propose here a code based signature scheme, whose security relies on the difficulty to solve the syndrome decoding problem: for a random matrix $\bm{H}\in\Fqm^{(n-k)\times n}$ and a vector $\bm{y}\in\Fqm^{n-k}$, this problem asks to find a vector $\bm{x}\in\Fqm^n$ of small weight such that $\bm{H}\bm{x}^T=\bm{y}$. This problem is known to be NP-hard and accepted as secured for reasonable parameters.\\
We consider here the rank weight of $\bm{x}$, which is equal to the dimension of the $\Fq$-linear subspace generated by its coefficients in $\Fqm$. The notion of error in this metric is completely different from that of Hamming: counting the number of non-zero coordinates induces a local notion of error, while determining the dimension of a vector subspace generated by a set of coordinates implies having a general vision of it. Beyond their use in cryptography, rank metric codes are therefore effective in practice for correcting errors in the form of blocks.\\
A first proposition of PoK for Syndrome Decoding problem has be done by \cite{Stern93} in Hamming metric, but its soundness error, which corresponds to the probability that a malicious prover convince the verifier that he knows the witness of the Syndrome Decoding instance, is equal to $2/3$. Although by repeating the protocol $\tau$ times the soundness equal to $(2/3)^\tau$ can be arbitrary close to 0, it results from the high amount of information to transmit a too long signature. His approach has been generalized within the framework of the rank metric, and has given rise to numerous protocol proposals, for example \cite{GSZ11, FJR21, BG22}. Among these, some of which have a soundness equal to $1/N$ for some parameter $N$ (the quantity of information to be transmitted increases with N, but with a better trade-off).

A possible technique for design efficient Proofs of Knowledge is to use the MPC-in-the-Head paradigm, introduced in \cite{IKOS07}, that relies on symmetric primitives. The idea which will be improved by \cite{KKW18}, by using MPC in their pre-processing model. This technique relies on the multi-party computation: the prover emulates "in his head" an additive MPC protocol where all information are split in $N$ shares. It commits all the emulated parties and reveals the views of $N-1$ among them. The only way for a malicious prover to cheat would be to do so on the unopened part only, which happen with probability $1/N$. More recently, \cite{AGHHJY22} suggests structuring the additive shares in a hypercube to improve the performances of MPC protocols. It is also possible to construct signatures from threshold multi-party computation \cite{FJR22}, which allows a gain in the speed of signature verification at the cost of bigger signatures.\\
This framework has the advantage of being very general and can be adapted to prove knowledge of the witness of an instance for any type of problem. For example, the MPC-in-the-Head paradigm has been used to construct several signatures, relying on various problems, submitted to the NIST for their post-quantum standardization process: SDitH \cite{sdith} relies on the Syndrome Decoding problem in Hamming metric, MIRA \cite{mira} and MiRitH \cite{mirith} relies on the Min-Rank problem, MQOM \cite{mqom} and Biscuit \cite{biscuit} relies on the Multivariate Quadratic problem (or a structured variant), and PERK \cite{perk} relies on the Permuted Kernel problem.

\textbf{Contributions.} We propose two complete signature schemes which adapt protocol for Rank-Syndrome-Decoding using linearized polynomials described in \cite{F22} to the methods of hypercube \cite{AGHHJY22} and threshold MPC \cite{FJR22}, whose adaptation isn't straightforward. The performance of the first of these two signatures surpasses those of all the others relying on Rank Syndrome Decoding problem mentioned above: the hypercube optimization allows us to obtain an efficient quantum-resistant signature whose size is below 6kB. These results prompted us to propose this signature to the Post-Quantum Cryptography Standardization process organized by NIST \cite{ryde}\footnote[1]{All proposed signatures can be found here: https://csrc.nist.gov/projects/pqc-dig-sig/round-1-additional-signatures}. The use of the threshold-MPC in the second signature allows a gain in the speed of the verification algorithm. We also provide complete security proofs of these two signature schemes.

\textbf{Paper organization.} In Section 2, we begin by defining all cryptographic and mathematical notions necessary for understanding. In Section 3, we give a high-level overview of the signature scheme (and leave low-level instructions to the specifications document). In Section 4, we describe the chosen parameters and their associated performances. In Section 5, we detail security proofs of the schemes. In Section 6, we detail attacks against signatures designed with the Fiat-Shamir transform and attacks against the Rank-Syndrome-Decoding problem.

\section{Preliminary notions}

\subsection{Notations and conventions}

Let $A$ a randomized algorithm. We write $y\leftarrow A(x)$ the output of the algorithm for the input $x$. If $S$ is a set, we write $x\sampler S$ the uniform sampling of a random element $x$ in $S$. We write $x\samples{s} S$ the pseudo-random sampling in $S$ with seed $s$.

We denote by $\mathbb{F}_q$ the finite field of order $q$. We use bold letters to denote vectors and matrices (for example, $\bm{u}\in\mathbb{F}_q^n$ and $u\in\mathbb{F}_q$).

A function $\mu:\mathbb{N}\rightarrow\mathbb{R}$ is said \textit{negligible} if, for every positive polynomial $p(\cdot)$, there exists an integer $N_{p} > 0$ such that for every $\lambda > N_{p}$, we have $|\mu(\lambda)| < 1/{p(\lambda)}$. When not made explicit, a negligible function in $\lambda$ is denoted $\mathsf{negl}(\lambda)$ while a polynomial function in $\lambda$ is denoted $\mathsf{poly}(\lambda)$. We further use the notation $\mathsf{poly}(\lambda_1, \lambda_2, ...)$ for a polynomial function in several variables.

Two distributions $\{D_\lambda\}_\lambda$ and $\{E_\lambda\}_\lambda$ indexed by a security parameter $\lambda$ are $(t,\varepsilon)$-\textit{indistinguishable} (where $t$ and $\varepsilon$ are $\mathbb{N} \to \mathbb{R}$ functions) if, for any algorithm $\mathcal{A}$ running in time at most $t(\lambda)$ we have 
$$\big| \Pr[\mathcal{A}^{D_\lambda}()=1] - \Pr[\mathcal{A}^{E_\lambda}()=1 ]\big| \leq \varepsilon(\lambda)~, $$ 
with $\mathcal{A}^{Dist}$ meaning that $\mathcal{A}$ has access to a sampling oracle of distribution $Dist$.
The two distributions are said
\begin{itemize}
    \item \textit{computationally indistinguishable} if $\varepsilon \in \mathsf{negl}(\lambda)$ for every $t \in \mathsf{poly}(\lambda)$;
    \item \textit{statistically indistinguishable} if $\varepsilon \in \mathsf{negl}(\lambda)$  for every (unbounded) $t$;
    \item \textit{perfectly indistinguishable} if $\varepsilon = 0$ for every (unbounded) $t$.
\end{itemize}

\subsection{Security notions}

\subsubsection{Digital signature schemes}

We begin by remembering basic definitions and notions of security about signature schemes.

\begin{definition}[Digital signature scheme]
    A digital signature scheme $\mathsf{DSS}$ with security parameter $\lambda$ is a triplet of polynomial time algorithms $(\mathsf{KeyGen}, \mathsf{Sign}, \mathsf{Verif})$ such that:\begin{itemize}
        \item The key-generation algorithm $\mathsf{KeyGen}$ is a probabilistic algorithm which outputs a pair of keys $(\mathsf{pk},\mathsf{sk})$;
        \item The signing algorithm $\mathsf{Sign}$, possibly probabilistic, which takes as inputs a message $m$ to be signed and the secret key $\mathsf{sk}$, and outputs a signature $\sigma$;
        \item The verification algorithm $\mathsf{Verif}$ which takes as inputs the public key $\mathsf{pk}$, the message $m$ and its signature $\sigma$, and outputs a bit $b$. Output $1$ indicate that the signature is considered as valid.
    \end{itemize}
\end{definition}

A correct signature scheme verifies the following property: if $(\pk,\sk)\leftarrow\mathsf{KeyGen}$, for all messages $m$ signed by $\sigma\leftarrow\mathsf{Sign}(\pk,m)$, we have $1\leftarrow\mathsf{Verif}(\sk,m,\sigma)$. This means that a signature which is correctly generated is always accepted (when using the right keys).

The standard security notion for digital signature schemes is existential unforgeability under adaptive chosen message attacks (EUF-CMA) is defined as follows: 
\begin{definition}[EUF-CMA]
    We can define the following game $G_{\textsc{euf-cma}}$ where ${\mathsf{Sign}(\sk,\cdot)}$ is an oracle that sign a message $m^*$: \\
    $\textsl{(\pk,\sk)} \leftarrow \mathsf{KeyGen}()$ \\
    $(m,\sigma) \leftarrow \mathcal{A}^{Sign(\sk,\cdot)}(\textsf{pk})$ \\
    \line(1,0){100}\\
    returns $1$ if $\textsf{Verif}(m,\sigma,\textsf{pk}) = 1$ and $m$ was not queried to ${\mathsf{Sign}(\sk,\cdot)}$ \\
    The signature scheme is EUF-CMA if, for every polynomial adversary $\mathcal{A}$, $\prb\big(G_{\textsc{euf-cma}}(\mathcal{A}) =1\big)$ is negligible.
\end{definition}

\subsubsection{Commitment schemes}

The security of the signature relies on a commitment scheme, which allows to commit on a value while hiding it until a possible future opening. We consider two properties: the commitment reveals no information about what is committed (hiding property), and there is only one (computationally tractable) way to open the commitment (binding property).

A commitment scheme takes as input any element $x$ (the value to commit) and a random tape $\rho\in\{0,1\}^\lambda$. When the secret value $x$ is revealed, the commitment $\cmt=\Com(x,\rho)$ can be verified by revealing $\rho$, which prove that $x$ has not been modified in the meantime.

\begin{definition}[Hiding]
    A commitment scheme $\Com$ is computationally hiding if, for every $m_0,m_1$, the distributions of $$\big\{ \Com(m_0,\rho), \rho \sampler\{0,1\}^\lambda\big\}\text{ and }\big\{ \Com(m_1,\rho), \rho \sampler\{0,1\}^\lambda\big\}$$ are computationally indistinguishable.
\end{definition}

\begin{definition}[Binding]
    A commitment scheme $Com$ is computationally biding if, for every PPT algorithm $\mathcal{A}$, we have: $$ \prb\big(\Com(m,\rho) = \Com(m',\rho') \cap m \ne m', (m,\rho,m',\rho') \leftarrow \mathcal{A}\big)<\mu(\lambda)$$ where $\mu$ is a negligible function.
\end{definition}

\subsubsection{Pseudo-random generators}

\begin{definition}[Pseudo-random Generators]
Let $G : \{0,1\}^* \rightarrow \{ 0,1\}^*$, $\ell$ a polynomial such that $G(s) \in \{0,1\}^{\ell(\lambda)}$ for $s\in\{0,1\}^\lambda$. G is a $(t,\epsilon)$-secure pseudo-random generator if: \begin{itemize}
    \item $\ell(\lambda) > \lambda$
    \item The distributions $\{ G(s)\vert\: s\leftarrow \{ 0,1\}^\lambda\}$ and $\{r\:\vert\: r \sampler \{ 0,1\}^{\ell(\lambda)}\}$ are $(t,\epsilon)$-indistinguishable
\end{itemize}
\end{definition}

In our constructions, we need \emph{puncturable pseudo-random functions} (puncturable PRF). Such a primitive produces $N$ pseudo-random values and provides a way to reveal all the values but one. A standard construction of puncturable PRFs can be derived from the tree-based construction of \cite{ggm86}, usually denoted as the GGM construction. The idea is to use a tree PRG in which one uses a pseudorandom generator to expand a root seed into $N$ subseeds in a structured way. This can be illustrated quite easily with the following figure:

\begin{figure}[H]
\Tree[.$(seed_{\text{root}},\salt)$ [.$(seed_{int_1},\salt)$ [.$\qquad seed_{1}\qquad$ ]
               [.$\qquad seed_{2}\qquad$ ]]
          [.($seed_{int_2},\salt)$ [.$\qquad seed_{3}\qquad$ ]
               [.$\qquad seed_{4}\qquad$ ]]]
               \captionof{figure}{Example of GGM tree of depth 2}
               \label{TreePRG}
\end{figure}
where each intermediate seed generates two new seeds, and $\salt$ is a random value in $\{0, 1\}^{\lambda}$.

Now, imagine one is looking to reveal $seed_{1}$, $seed_{3}$, and $seed_{4}$, and hide $seed_{2}$. Then, all he has to do is reveal $seed_{int_2}$ and $seed_{1}$ ($\salt$ being known). It is impossible to retrieve $seed_{2}$, as we don't know the previous seed, but it is possible to retrieve the others. In this small example, we took $N=4$. This is especially interesting as, in the additive-based MPCitH transformation we see later, we reveal $N-1$ leaves. This means that we can do this operation by revealing only $\log_2N$ leaves instead of $N-1$. More generally, for a GGM tree with $N$ leaves and given a subset $I \subset \oneto{N}$, it is possible to reveal all the leaves but the ones in $I$ by revealing at most $|I|\log_2(\frac{N}{|I|})$ nodes.

\subsubsection{Merkle trees}

A collision-resistant hash function (that we note $\hash_M$) can be used to build a \textit{Merkle Tree}. Given inputs $v_1 \dots v_N$, we define $\mathsf{Merkle}(v_1 \dots v_N)$ as:  \begin{equation}
    \mathsf{Merkle}(v_1 \dots v_N) =
    \begin{cases}
    \hash_M( {\mathsf{Merkle}(v_1 \dots v_{\frac{N}{2}}) } \| {\mathsf{Merkle}(v_{\frac{N}{2}+1} \dots v_N)}) \text{ if } N>1 \\
    \hash_M(v_1) \text{ if } N=1 \\
    \end{cases}
\end{equation}

A Merkle Tree is a special case of vector commitment scheme allowing efficient opening of a subset of the committed values $v_i$. Given $I \subset \oneto{N}$, it is possible to verify that the $(v_i)_{i\in I}$ were indeed used to build the Merkle Tree only by revealing at most $|I|\log_2\Big(\frac{N}{|I|}\Big)$ hash values. To proceed, we just need to reveal the sibling paths of the $(v_i)_{i\in I}$. Those paths are called the authentication paths and are denoted $\mathsf{auth}((v_1 \dots v_N),I)$.

\subsubsection{Zero-knowledge proofs of knowledge}

We define here the concept of proof of knowledge, following \cite{AFK21} notations. Let $R$ an NP-relation. $(x,\omega)\in R$ is a statement-witness pair where $x$ is the statement and $\omega$ an associated witness. The set of valid witness for a statement $x$ is $R(x)=\{\omega  \mid (x,\omega)\in R\}$. A prover $\mathcal{P}$ wants to use a proof of knowledge to convince a verifier $\mathcal{V}$ that he knows a witness $\omega$ for a statement $x$.

\begin{definition}[Proof of knowledge]
    A proof of knowledge for a relation $R$ with soundness $\epsilon$ is a two-party protocol between a prover $\mathcal{P}$ and a verifier $\mathcal{V}$ with a public statement $x$, where $\mathcal{P}$ wants to convince $\mathcal{V}$ that he knows $\omega$ such that $(x,\omega)\in R$.\\
    Let us denote $\langle \mathcal{P}(x,\omega),\mathcal{V}(x)\rangle$ the transcript between $\mathcal{P}$ and $\mathcal{V}$. A proof of knowledge must be perfectly complete, i.e.: $$\prb\big(\mathsf{Accept}\leftarrow \langle \mathcal{P}(x,\omega),\mathcal{V}(x)\rangle\big) = 1$$
    If there exists a polynomial time prover $\Tilde{\mathcal{P}}$ such that: $$\Tilde{\epsilon}=\prb \big(\mathsf{Accept}\leftarrow \langle \Tilde{\mathcal{P}}(x),\mathcal{V}(x)\rangle\big) >\epsilon$$ then there exists an algorithm $\mathcal{E}$, called extractor, which given rewindable access to $\Tilde{\mathcal{P}}$, outputs a valid witness $\omega' \in R(x)$  in polynomial time in terms of $(\lambda,\frac{1}{\Tilde{\epsilon}-\epsilon})$ with probability at least $\frac{1}{2}$.
\end{definition}

We now introduce the notion of honest-verifier zero-knowledge for a proof of knowledge (PoK):

\begin{definition}[Honest-Verifier Zero-Knowledge]
    A PoK satisfies the Honest-Verifier Zero-Knowledge (HZVK) property if there exists a polynomial time simulator $\mathsf{Sim}$ that given as input a statement $x$ and random challenges $(\ch_1,...,\ch_n)$, outputs a transcript $\lbrace \mathsf{Sim}(x,\ch_1,...,\ch_n),\mathcal{V}(x)\rbrace$ which is computationally indistinguishable from the probability distribution of transcripts of honest executions between a prover $\mathcal{P}(x,w)$ and a verifier $\mathcal{V}(x)$.
\end{definition}

\subsubsection{Fiat-Shamir transform}

The Fiat-Shamir transform is a generic process allowing to transform an HVZK proof of knowledge into a signature scheme. The main adaptation lies in the removal of the interaction in the protocol: one needs to pull the challenges in a pseudo-random (and verifiable) way to sign a message without the assistance of a verifier. Note that the protocol must be repeated several times to achieve the desired level of security. We note $\tau$ the number of repetitions. One see below that there is an effective attack against such signatures and so, $\tau$ must be chosen carefully (not too small).

Let us describe this transformation for the case of a 5-round protocol. The prover begins as in the zero-knowledge proof by committing all the auxiliary information. At the end of the first step, the prover computes: $$h_1=\hash_1(\salt, m, (h^{(e)}_0)_{e \in \oneto{\tau}})$$
where $\hash_1$ is an hash function, $h^{(e)}_0$ is the first-step commitment of the $e^{\text{th}}$ execution of the protocol, and $\salt$ a random value in $\{0, 1\}^{2\lambda}$. The prover obtains the first challenge from $h_1$ by using a XOF (extendable-output function).

The second challenge is generated in a similar way: the prover computes an element $h_2$ thanks to an other hash function $\hash_2$ and the other information computed during the step 3. The prover obtains the second challenge from $h_2$ by using a XOF.

The signature $\sigma$ therefore consists of sending:\begin{itemize}
    \item $\salt$ which allows to compute commitments and the Fiat-Shamir hashes,
    \item $h_1$ as commitment of the initial values,
    \item $h_2$ as hash of all responses of the first challenge,
    \item each response $\rsp$ of the second challenge.
\end{itemize}
The signature is:
$$\sigma = (\salt,h_1, h_2, (\rsp^{(e)})_{e \in \oneto{\tau}})$$

\subsection{MPC for proofs of knowledge}

\subsubsection{Multi-Party Computation}

We recall here the formalism proposed by \cite{FR22}. The sharing of a value $x$ into $N$ parties is denoted $(\share{x}_1,...,\share{x}_N)$, where $\share{x}_i$ is the share of index $i$ for $i\in\oneto{N}$, and $\share{x}_J=(\share{x}_i)_{i\in J}$ is the subset of shares for $J\subset\oneto{N}$.

\begin{definition}[Threshold LSSS]
    Let $\mathbb{F}$ a finite field and $t$ an integer in $\oneto{N}$. A $(t,N)$-threshold Linear Secret Sharing Scheme is a method to share a secret $s\in\mathbb{F}$ into $N$ shares $\share{s}=(\share{s}_1,...,\share{s}_N)\in\mathbb{F}^N$ such that the secret can be rebuilt from any subset of $t$ shares, while the knowledge of any subset of $t-1$ shares (or less) gives no information on $s$.
\end{definition}

Formally, a $(t,N)-$threshold LSSS is a pair of algorithms: $$\begin{cases}
\Share: \mathbb{F}\times R \rightarrow \mathbb{F}^N\\
\reconstruct_J: \mathbb{F}^t \rightarrow \mathbb{F}
\end{cases}$$

where $R$ denotes some randomness space, and $\reconstruct_J$ is indexed by a subset $J\subset\oneto{N}$ of $t$ elements. These algorithms must verify the following properties: \begin{itemize}
    \item \textbf{Correctness:} for every $x\in\mathbb{F}$, $r\in R$ and $J\in\oneto{N}$ such that $|J|=t$: let $\share{x}=\Share(x,r)$. We have: $$\reconstruct_J(\share{x}_J)=x$$
    \item \textbf{Perfect $(t-1)$-privacy:} For every $s_0,s_1 \in \mathbb{F}, I\subset \oneto{N}$ with $|I|=t-1$, the two distributions: \begin{align*}\Big\{ \share{s_0}_I \text{ }\vert\text{ } \genfrac{}{}{0pt}{0}{r \sampler R}{\text{ } \share{s_0}_{\oneto{N}} \longleftarrow \mathsf{Share}(s_0,r)}\Big\} \text{ and } \Big\{ \share{s_1}_I \text{ }\vert \text{ }\genfrac{}{}{0pt}{0}{r \sampler R}{\text{ } \share{s_1}_{\oneto{N}} \longleftarrow \mathsf{Share}(s_1,r)}\Big\}\end{align*} are perfectly indistinguishable.
\item \textbf{Linearity:} for every $v,v'\in\mathbb{F}^t$, $\alpha\in\mathbb{F}$ and subset $J\subset\oneto{N}$ of $t$ elements: $$\reconstruct_J(\alpha v+v')=\alpha\reconstruct_J(v)+\reconstruct_J(v')$$
\end{itemize}

Below are the two specific secret sharing which are used in RYDE:

\begin{definition}[Additive secret sharing]
    An additive secret sharing over a finite field $\mathbb{F}$ is an $(N,N)$-threshold LSSS where $R=\mathbb{F}^{N-1}$. The $\Share$ algorithm is defined as $$\Share: (s,(r_1,...,r_{N-1}))\rightarrow\share{s}=\left( r_1,...,r_{N-1},s-\sum_{i=1}^{N-1}r_i\right)$$
    The $\reconstruct_{\oneto{N}}$ algorithm consists in computing the sum of all the shares.
\end{definition}

\begin{definition}[Shamir's secret sharing]
    The Shamir's secret sharing over a finite field $\mathbb{F}$ is an $(\ell+1,N)$-threshold LSSS with $R=\mathbb{F}^{\ell}$. Let $e_1,...,e_N$ be public distinct elements of $\mathbb{F}^*$. The $\Share$ algorithm is defined as follows:\begin{itemize}
        \item Build a polynomial: $P(X)=s+\sum_{i=1}^\ell r_iX^i$
        \item For each $i\in\oneto{N}$: compute $\share{s}_i=P(e_i)$
    \end{itemize}
    The $\reconstruct_{J}$ algorithm consists in recovering $P$ with polynomial interpolation from known evaluations $\share{s}_J=(P(e_i))_{i\in J}$. The reconstructed secret is then given by the constant term $s$ of $P$.
\end{definition}

A multi-party computation protocol is an interactive protocol involving several parties whose objective is to jointly compute a function $f$ on the share $\share{x}$ they received at the begin of the protocol, so that they each get a share of $\share{f(x)}$ at the end of the execution. In the restricted MPC model described in~\cite{FR22}, at each step the parties can perform one of the following actions:
\begin{itemize}
\item Receiving elements: it can be randomness sent by a random oracle, or the shares of an hint which depends on the witness $\omega$ and the previous elements sent.
\item Computing: since the sharing is linear, the parties can perform linear transformations on the shared values.
\item Broadcasting: it is sometimes necessary to open some values (which give no information about the witness $\omega$) to execute the protocol.
\end{itemize}

For application of the MPCitH paradigm, this MPC protocol is only required to be secure in the semi-honest setting: we suppose that all parties rightfully perform their computation. The view of a party is made up of all the information available to this party when executing the protocol: its input shares and the messages it receives from other parties.

\begin{proposition}[\cite{FR22}]
    Given a $(\ell+1,N)$-threshold LSSS, the following property holds; for each $v\in\mathbb{F}^{\ell+1}$ and each subset $J\subset\oneto{N}$ of $\ell+1$ elements, there exists a unique sharing $\share{x}\in\mathbb{F}^{N}$ such that $\share{x}_J=v$, and such that for all $\mathcal{J}\subset\oneto{N}$ of $\ell+1$ elements: $$\reconstruct_\mathcal{J}(\share{x}_\mathcal{J})=\reconstruct_J(v).$$
\end{proposition}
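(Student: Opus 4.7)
The plan is to reinterpret the claim in terms of the linear code $V \subseteq \mathbb{F}^N$ of valid sharings, and to show that for every size-$(\ell+1)$ index set $J$, the coordinate projection $\pi_J : V \to \mathbb{F}^{\ell+1}$ is a linear bijection. Once this is done, existence and uniqueness of $\share{x}$ with $\share{x}_J = v$ are immediate, and the reconstruction-agreement clause follows from correctness: for any other size-$(\ell+1)$ subset $\mathcal{J}$, both $\reconstruct_\mathcal{J}(\share{x}_\mathcal{J})$ and $\reconstruct_J(v)$ recover the unique secret attached to $\share{x}$, hence are equal. By linearity of $\Share$, the set $V$ is a linear subspace of $\mathbb{F}^N$ and $\pi_J$ is linear, so the task reduces to proving injectivity and surjectivity of $\pi_J$.

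For injectivity, I would take $\share{z} \in V$ with $\share{z}_J = 0$. Correctness forces the secret of $\share{z}$ to be $\reconstruct_J(0) = 0$. Now pick any $j^{\ast} \notin J$ and any $j' \in J$, and set $J' := (J \setminus \{j'\}) \cup \{j^{\ast}\}$. Writing $\reconstruct_{J'}$ in coordinates as $\sum_{i \in J'} \lambda_i (\cdot)_i$, the coefficient $\lambda_{j^{\ast}}$ must be nonzero; otherwise $\reconstruct_{J'}$ would recover the secret from only the $\ell$ shares indexed by $J \setminus \{j'\}$, contradicting perfect $\ell$-privacy. Applying $\reconstruct_{J'}$ to $\share{z}_{J'}$ then yields $\lambda_{j^{\ast}} \share{z}_{j^{\ast}} = 0$, hence $\share{z}_{j^{\ast}} = 0$. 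Since $j^{\ast} \notin J$ is arbitrary, $\share{z} = 0$.

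For surjectivity, given $v \in \mathbb{F}^{\ell+1}$ I would set $s = \reconstruct_J(v)$, pick any sharing $\share{x}^0$ of $s$, and reduce to finding a zero-secret sharing $\share{y}$ in $V_0 := \{\Share(0,r) : r \in R\}$ with $\pi_J(\share{y}) = v - \share{x}^0_J$; this difference lies in $\ker \reconstruct_J$, an $\ell$-dimensional subspace. A dual version of the privacy argument above—varying $j'$ across $J$ rather than varying $j^{\ast}$—shows that any linear functional on $\mathbb{F}^{\ell+1}$ vanishing on $\pi_J(V_0)$ must have every coordinate coefficient equal to zero, so $\pi_J(V_0)$ fills $\ker \reconstruct_J$ and surjectivity follows. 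The main obstacle in this plan is precisely this surjectivity step: rather than invoking a dimension bound on the randomness space, one has to squeeze the required structure out of perfect $\ell$-privacy applied to every $\ell$-subset of $J$, which is subtler than the injectivity direction.
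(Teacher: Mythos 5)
Your injectivity argument is sound. Using linearity of $\reconstruct$, you correctly observe that for $J' = (J\setminus\{j'\})\cup\{j^\ast\}$ the coefficient of $j^\ast$ in $\reconstruct_{J'}$ must be nonzero (otherwise $\ell$ shares would determine the secret, contradicting perfect $\ell$-privacy), and combined with correctness this forces any $\share{z}$ with $\share{z}_J = 0$ and consistent reconstructions to vanish entirely; this also handles the uniqueness clause of the proposition over all of $\mathbb{F}^N$. You do, however, lean on linearity of $\Share$ to make the set $V$ of sharings a linear subspace; the paper's definition of a threshold LSSS only postulates linearity for $\reconstruct$, so that assumption should be stated explicitly.

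The surjectivity step contains a genuine error. You assert that every linear functional on $\mathbb{F}^{\ell+1}$ vanishing on $\pi_J(V_0)$ must have all coordinate coefficients zero, and then conclude $\pi_J(V_0) = \ker\reconstruct_J$. But that premise, if true, would give $\pi_J(V_0) = \mathbb{F}^{\ell+1}$, which is impossible: by correctness $\pi_J(V_0) \subseteq \ker\reconstruct_J$, a proper hyperplane, and in fact $\reconstruct_J$ itself is a nonzero functional vanishing on $\pi_J(V_0)$. So the asserted annihilator claim is false, and the stated conclusion does not follow from it. What is actually needed is the weaker fact that $\pi_J(V_0)^\perp$ is spanned by $\reconstruct_J$, equivalently that every coordinate axis $e_{j'}$ with $j'\in J$ lies in $\pi_J(V)$. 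A genuine consequence of perfect $\ell$-privacy that gets you there is that for every $\ell$-subset $I\subset J$ the distribution of $\share{s}_I$ is secret-independent, which in the linear setting forces $\pi_I(V) = \pi_I(V_0)$; combining this with injectivity of $\pi_J$ on $V$, the identity $\dim V = \dim V_0 + 1$, and the observation that the projection $\ker\reconstruct_J \to \mathbb{F}^I$ forgetting coordinate $j'$ is an isomorphism (because the $j'$-coefficient of $\reconstruct_J$ is nonzero) gives, by a short dimension count, that the $j'$-axis lies in $\pi_J(V)$ for each $j'\in J$, hence $\pi_J(V)=\mathbb{F}^{\ell+1}$. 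As written, though, your surjectivity paragraph does not establish its conclusion, as you yourself flag.
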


One deduces that there exists an algorithm $\expand_J$ which returns the unique sharing from a subset $J$ of the shares. For example, in the Shamir's secret sharing, $\expand$ builds the Lagrange polynomial from the known evaluations and outputs the image of each party's point.

\subsubsection{MPC-in-the-Head paradigm}

The MPC-in-the-Head paradigm, introduced by \cite{IKOS07}, is a framework allowing to convert a generic secure multi-party computation (MPC) protocol to a zero-knowledge proof. Let us assume we have a $\ell$-private MPC protocol in which $N$ parties $\mathcal{P}_1, \ldots, \mathcal{P}_N$ securely and correctly evaluate a function $f$ on a secret input $x$. The verifier asks the prover to reveal the views of $\ell$ parties (which include input shares and communications with other parties). In this construction, the zero-knowledge property comes from the $\ell$-privacy of the MPC protocol.

In what follows, we assume that all the manipulated MPC protocols fit the model decribed in~\cite{FR22}: they use $(\ell+1,N)$-threshold LSSS and only uses the operations decribed in the previous subsection.

We want to build a zero-knowledge proof of knowledge of a witness for a statement $x$ such that $(x,\omega)\in\mathcal{R}$. Assume that we have a MPC protocol which evaluates a function $f$ on the witness $\omega$ and has the security properties mentioned above. The protocol works as follows:
\begin{itemize}
\item each party receives as input a share $\share{\omega}_i$, where $\share{\omega}$ is a threshold LSSS sharing of $\omega$,
\item the function $f$ outputs ACCEPT if $(x,\omega)\in\mathcal{R}$, REJECT otherwise,
\item the view of $\ell$ parties gives no information about the witness $\omega$.
\end{itemize}

In our case, the parties take as input a linear sharing $\share{\omega}$ of the secret $\omega$ (one share per party) and they compute one or several rounds in which they perform three types of actions:
\begin{description}
    \item[Receiving randomness:] the parties receive a random value $\epsilon$ from a randomness oracle $\oracle_R$. When calling this oracle, all the parties get the same random value $\epsilon$.
    \item[Receiving hint:] the parties can receive a sharing $\share{\beta}$ (one share per party) from a hint oracle $\oracle_H$. The hint $\beta$ can depend on the witness $w$ and the previous random values sampled from $\oracle_R$.
    \item[Computing \& broadcasting:] the parties can locally compute $\share{\alpha} := \share{\varphi(v)}$ from a sharing $\share{v}$ where $\varphi$ is an $\mathbb{F}$-linear function, then broadcast all the shares $\share{\alpha}_1$, \ldots, $\share{\alpha}_N$ to publicly reconstruct $\alpha := \varphi(v)$. The function $\varphi$ can depend on the previous random values $\{\epsilon^i\}_i$ from $\oracle_R$ and on the previous broadcast values. One should note that in the case of additive sharing, a single party needs to add a constant.
\end{description}


To build the zero-knowledge proof, the prover begins by building a sharing $\share{\omega}$ of $\omega$. The prover then simulates all the parties of the protocol until the computation of $\share{f(\omega)}$ (which should correspond to a sharing of ACCEPT, since the prover is supposed to be honest), and sends a commitment of each party's view. The verifier then chooses $\ell$ parties and asks the prover to reveal their views. The verifier checks the computation of their commitments. Since only $\ell$ parties have been opened, the revealed views give no information about $\omega$.

A naive way for the prover to execute the protocol would be to simulate all parties, but each party computes LSSS sharing of all involved elements. It implies that for each $a$ in these elements, a sharing $\share{a}=(\share{a}_1,...,\share{a}_N)$ contains redundancy when $\ell < N-1$. One deduces that it is necessary to perform the calculations of the associated MPC protocol for only a subset $S$ of $\ell+1$ out of $N$ parties, and commit them is enough to ensure the soundness of the proof. 
When $\ell+1$ is small, emulating the MPC protocol is cheap and leads to good overall performances: it is the high-level idea in Threshold-MPCitH. This technique to save computation is not possible when we use the additive sharing (since $\ell=N-1$). However, in the additive setting, \cite{AGHHJY22} showed that the prover only need to emulate $1+\log_2 N$ parties, and not $N$.


\subsubsection{Soundness}

Consider a malicious prover who want to convince the verifier that he knows a witness $\omega$ for the instance $x$ (although it does not). Two cases can happen:

\begin{itemize}
    \item The challenge of the verifier are such that the protocol results in a false positive,
    \item The malicious prover cheats for some party $i\in\oneto{N}$ (and hence has inconsistent view w.r.t. the commitments).
\end{itemize}
Feneuil and Rivain describe in \cite{FR22} a strategy allowing a malicious prover to convince the verifier with probability: $$\dfrac{1}{\binom{N}{\ell}}+p_\eta\:\dfrac{\ell\:(N-\ell)}{\ell+1}$$ where $p_\eta$ is the false positive rate of the associated MPC protocol (\textit{i.e.} is the probability that the MPC protocol output ACCEPT even if $(x,\omega)\not\in\mathcal{R}$). In the additive case ($\ell=N-1$), it leads to a soundness error of
$$\dfrac{1}{N}+p_\eta\:\left(1-\dfrac{1}{N}\right).$$

\subsection{Technical lemmas}

We present here technical lemmas that are useful in following proofs. The proofs of these lemmas can be found in the original references.

\begin{lemma}[Splitting Lemma \cite{PS}]
Let $A \subset X\times Y$ such that $\prb[(x,y)\in A] \ge \epsilon$. \\
For any $\alpha<\epsilon$, define $$B = \Big\{  (x,y)\in X\times Y \mid \prb_{y'\in Y}[(x,y')\in A] \ge \epsilon-\alpha  \Big\} \text{ and } \Bar{B} = (X\times Y)\setminus B$$ then:
\begin{itemize}
    \item $\prb[B]\ge \alpha$
    \item $\forall (x,y) \in B, \prb_{y' \in Y}[(x,y') \in A] \ge \epsilon - \alpha$
    \item $\prb[B | A] \ge \dfrac{\alpha}{\epsilon}$
\end{itemize}
\end{lemma}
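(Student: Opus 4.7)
The plan is to prove the three bullets in order, using the key structural observation that membership in $B$ depends only on the $x$-coordinate. That is, if one writes $B_X = \{x \in X : \prb_{y'}[(x,y') \in A] \ge \epsilon - \alpha\}$, then $B = B_X \times Y$ and $\bar{B} = \bar{B_X} \times Y$. The second bullet is then immediate: it is literally the definition of $B$ (and hence of $B_X$), so no argument is needed there.

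For the first bullet, I would argue by contradiction. Assume $\prb[B] < \alpha$ and decompose $\prb[A] = \prb[A \cap B] + \prb[A \cap \bar{B}]$. For the first term, use the trivial bound $\prb[A \cap B] \le \prb[B] < \alpha$. For the second term, condition on the $x$-marginal: $\prb[A \cap \bar{B}] = \sum_{x \in \bar{B_X}} \prb[x]\,\prb_{y'}[(x,y') \in A]$, and by definition of $\bar{B_X}$ each factor $\prb_{y'}[(x,y') \in A]$ is strictly less than $\epsilon - \alpha$, giving $\prb[A \cap \bar{B}] < (\epsilon - \alpha)\,\prb[\bar{B}] \le \epsilon - \alpha$. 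Summing, $\prb[A] < \alpha + (\epsilon - \alpha) = \epsilon$, contradicting the hypothesis $\prb[A] \ge \epsilon$.

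For the third bullet, I would reuse the same decomposition in the form
$$\prb[B \mid A] \;=\; 1 - \frac{\prb[\bar{B} \cap A]}{\prb[A]}.$$
The numerator is bounded as above by $\epsilon - \alpha$, and the denominator is at least $\epsilon$ by hypothesis, so $\prb[B \mid A] \ge 1 - (\epsilon - \alpha)/\epsilon = \alpha/\epsilon$.

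The only place that requires a moment of care is the upper bound on $\prb[A \cap \bar{B}]$, where one must remember that the event $\bar{B}$ depends only on $x$ so that conditioning on the $x$-marginal and applying the defining inequality of $\bar{B_X}$ is legitimate. Everything else is routine; there is no real obstacle, and the argument transfers verbatim from the original \cite{PS} reference.
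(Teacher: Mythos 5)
Your proof is correct and is the standard argument from the cited reference \cite{PS}; the paper itself does not reprove this lemma but explicitly defers to the original source, and your decomposition of $\prb[A]$ over $B$ and $\bar{B}$ together with the observation that membership in $B$ depends only on the $x$-coordinate is exactly the argument found there. As a minor stylistic point, the contradiction in your first bullet can be avoided by writing $\prb[A] \le \prb[B] + (\epsilon-\alpha)\prb[\bar B] \le \prb[B] + \epsilon - \alpha$ and rearranging to $\prb[B] \ge \prb[A] - (\epsilon-\alpha) \ge \alpha$ directly, which also sidesteps the edge case $\prb[\bar B]=0$ where your strict inequality would degenerate; but as you already note, that case is immediately fine anyway, so your version is sound.
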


\subsection{Mathematical background}

\subsubsection{$q$-polynomials}

We begin by recalling here the main properties of finite fields. Let $q$ a prime number and $m\in\mathbb{N}$. Remember that $\Fqm$ is a linear space over $\mathbb{F}_q$. The $q$-polynomials, introduced in \cite{ore}, are very useful tools in finite vector spaces: in particular, they allow to characterize linear subspaces.

\begin{definition}[q-polynomial]
A q-polynomial of q-degree $r$ is a polynomial in $\Fqm[X]$ of the form: \\
$$P(X) = X^{q^r}+\sum_{i=0}^{r-1}X^{q^i}p_i \qquad \text{with } p_i \in \Fqm$$
\end{definition}

\begin{proposition}
Let $P$ a $q$-polynomial in $\Fqm[X]$, $\alpha,\beta\in\Fq$, $x,y\in\Fqm$. Then: $$P(\alpha x+\beta y)=\alpha P(x)+\beta P(y)$$
\end{proposition}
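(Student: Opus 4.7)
The plan is to reduce the claim to two standard facts about the Frobenius map $\varphi : x \mapsto x^q$ on $\Fqm$, then assemble them term by term in the definition of $P$. Since every summand of $P(X)$ has the form $p_i X^{q^i}$ (with $p_r = 1$), it suffices to prove that each monomial $x \mapsto x^{q^i}$ is $\Fq$-linear; the general claim then follows by $\Fqm$-linearly combining these monomials with the coefficients $p_i$.

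The two facts I would establish are: (i) $(x+y)^{q^i} = x^{q^i} + y^{q^i}$ for every $x,y \in \Fqm$ and every $i \geq 0$, and (ii) $(\alpha x)^{q^i} = \alpha \, x^{q^i}$ for every $\alpha \in \Fq$ and $x \in \Fqm$. For (i), since $q$ is prime, the binomial identity together with the divisibility $q \mid \binom{q}{k}$ for $0 < k < q$ yields the Frobenius identity $(x+y)^q = x^q + y^q$ in characteristic $q$; iterating this identity $i$ times gives $(x+y)^{q^i} = x^{q^i} + y^{q^i}$. For (ii), Fermat's little theorem ensures $\alpha^q = \alpha$ for every $\alpha \in \Fq$, and iterating gives $\alpha^{q^i} = \alpha$; combined with the multiplicativity of the $q^i$-th power map, $(\alpha x)^{q^i} = \alpha^{q^i} x^{q^i} = \alpha \, x^{q^i}$.

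Combining, I would write
\[
P(\alpha x + \beta y) = (\alpha x + \beta y)^{q^r} + \sum_{i=0}^{r-1} p_i \, (\alpha x + \beta y)^{q^i},
\]
and apply (i) then (ii) to each power: $(\alpha x + \beta y)^{q^i} = (\alpha x)^{q^i} + (\beta y)^{q^i} = \alpha\, x^{q^i} + \beta\, y^{q^i}$. Factoring $\alpha$ and $\beta$ out of the resulting sums gives $P(\alpha x + \beta y) = \alpha P(x) + \beta P(y)$. There is no real obstacle here: the only subtle point is making sure one uses $\alpha, \beta \in \Fq$ (not $\Fqm$) when invoking $\alpha^{q^i} = \alpha$, which is precisely the reason the statement is restricted to scalars in the base field $\Fq$.
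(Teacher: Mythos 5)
Your proof is correct and takes essentially the same approach as the paper: the paper's proof simply invokes the $\Fq$-linearity of the Frobenius endomorphism $x \mapsto x^q$, and your argument spells out why that linearity holds (binomial divisibility for additivity, $\alpha^q = \alpha$ for $\Fq$-homogeneity) before assembling the monomials of $P$.
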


\begin{proof}
Comes directly from the linearity over $\Fq$ of the Frobenius endomorphism: $x\rightarrow x^q$.
\end{proof}

One can then define a linear subspace in $\Fqm$ with a $q$-polynomial.

\begin{proposition}\label{sevroot1}
The set of roots of a non zero $q$-polynomial of degree $r$ is a linear subspace of dimension lower than or equal to $r$.
\end{proposition}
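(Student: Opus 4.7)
The proposition has two parts to establish: that the root set is an $\mathbb{F}_q$-linear subspace, and that its dimension is bounded by $r$. The plan is to handle each in turn, invoking only the preceding proposition and the elementary degree bound on the number of roots of a univariate polynomial.

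For the first part, let $V = \{x \in \overline{\mathbb{F}_q} : P(x) = 0\}$ denote the root set. The preceding proposition tells us that $P$ is $\mathbb{F}_q$-linear, i.e.\ $P(\alpha x + \beta y) = \alpha P(x) + \beta P(y)$ for any $x, y$ in an extension of $\mathbb{F}_{q^m}$ and any $\alpha, \beta \in \mathbb{F}_q$. So if $x, y \in V$, then $P(\alpha x + \beta y) = \alpha \cdot 0 + \beta \cdot 0 = 0$, hence $\alpha x + \beta y \in V$. Since $V$ also contains $0$ (as $P$ has no constant term), it is an $\mathbb{F}_q$-linear subspace.

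For the dimension bound, the key observation is to compare two different measures of size. Viewed as an ordinary polynomial in $\mathbb{F}_{q^m}[X]$, the $q$-polynomial $P(X) = X^{q^r} + \sum_{i=0}^{r-1} X^{q^i} p_i$ has (ordinary) degree $q^r$. By the standard fact that a nonzero univariate polynomial over a field has at most as many roots as its degree, we get $|V| \le q^r$. On the other hand, an $\mathbb{F}_q$-linear subspace of dimension $d$ has exactly $q^d$ elements. Writing $d = \dim_{\mathbb{F}_q} V$, we therefore have $q^d = |V| \le q^r$, which immediately gives $d \le r$.

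There is no real obstacle here: the only thing to be careful about is distinguishing the $q$-degree $r$ of $P$ from its ordinary polynomial degree $q^r$, and making sure one uses the latter when invoking the root count bound. The proof fits in a few lines and is essentially a combination of the $\mathbb{F}_q$-linearity from the previous proposition together with counting.
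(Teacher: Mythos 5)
Your proof is correct and follows essentially the same route as the paper's: $\mathbb{F}_q$-linearity of $P$ (from the preceding proposition) makes the root set a subspace, and the ordinary degree $q^r$ bounds the number of roots, hence the dimension. You merely spell out the cardinality comparison $q^d = |V| \le q^r \Rightarrow d \le r$ more explicitly than the paper does, which is a welcome clarification rather than a deviation.
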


\begin{proof}
Let $P$ a $q$-polynomial of degree $r$. One can see $P$ as a linear application from $\Fqm$ to $\Fq^m$. As an endomorphism kernel, the set of zeros forms a linear space.\\
Since it is a polynomial of degree $q^r$, $P$ has at most $q^r$ roots, which allows to obtain the majoration of the dimension.
\end{proof}

\begin{proposition}[\cite{ore}]\label{sevroot2}
Let $E$ a linear subspace of $\Fqm$ of dimension $r\leq m$. Then there exists a unique monic $q$-polynomial of $q$-degree $r$ such that all element in $E$ is a root of $P$.\\
$P$ is called the annihilator polynomial of $E$.
\end{proposition}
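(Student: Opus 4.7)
The plan is to prove existence by induction on $r = \dim E$, then derive uniqueness from the preceding root-bound in Proposition~\ref{sevroot1}.

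For existence, I start with the base case $r=0$, where $E = \{0\}$ and $P(X) = X$ is the monic $q$-polynomial of $q$-degree $0$ annihilating $E$. For the inductive step, assume the result up to dimension $r-1$, and let $E \subseteq \Fqm$ have dimension $r$. I pick a hyperplane $E' \subset E$ of dimension $r-1$ and an element $\alpha \in E \setminus E'$, so that $E = \bigsqcup_{c \in \Fq}(E' + c\alpha)$. By induction, there exists a monic $q$-polynomial $P'$ of $q$-degree $r-1$ whose roots contain $E'$. Setting $\beta := P'(\alpha)$, I claim that
$$P(X) := P'(X)^q - \beta^{q-1}\,P'(X)$$
is the desired polynomial.

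The key computations are the following. First, $P$ is a $q$-polynomial of $q$-degree $r$: $P'$ being a $q$-polynomial, its image under the $q$-polynomial $Y^q - \beta^{q-1}Y$ is again a $q$-polynomial by composition (each $q^i$-power monomial raised to $q$ yields a $q^{i+1}$-power monomial), and the leading term $X^{q^r}$ comes from $P'(X)^q$, giving monicity. Second, $P$ vanishes on $E$: for any $e = e' + c\alpha$ with $e' \in E'$ and $c \in \Fq$, the $\Fq$-linearity of $P'$ (Proposition preceding) gives $P'(e) = P'(e') + cP'(\alpha) = c\beta$, so $P(e) = (c\beta)^q - \beta^{q-1}(c\beta) = c^q\beta^q - c\beta^q = 0$ since $c^q = c$ in $\Fq$.

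For uniqueness, suppose $P_1$ and $P_2$ are two monic $q$-polynomials of $q$-degree $r$ both vanishing on $E$. Their difference $P_1 - P_2$ is a linearized polynomial $\sum_{i=0}^{r-1} c_i X^{q^i}$ of $q$-degree strictly less than $r$, vanishing on all $q^r$ elements of $E$. If it were nonzero, say with leading coefficient $c_s \neq 0$ for some $s < r$, then dividing by $c_s$ would yield a monic $q$-polynomial of $q$-degree $s$ with $q^r$ roots; this contradicts Proposition~\ref{sevroot1}, which caps the number of roots at $q^s < q^r$. Hence $P_1 = P_2$.

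The main obstacle is the verification of the inductive step, and specifically the correct choice of the outer $q$-polynomial $Y^q - \beta^{q-1}Y$ that both preserves the $q$-polynomial structure and exactly cancels the coset values $\{c\beta : c \in \Fq\}$; the rest (base case, monicity bookkeeping, uniqueness via the root bound) is essentially routine once this construction is in place.
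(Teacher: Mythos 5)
Your proof is correct. Note, though, that the paper does not actually prove this proposition — it is attributed to Ore by citation and left without argument — so you are supplying a proof rather than matching one in the text. Your inductive construction is the classical one and it works as written: the formula
$$P(X) = P'(X)^q - \beta^{q-1}P'(X) = \prod_{c\in\Fq}\bigl(P'(X) - c\beta\bigr)$$
(the second equality follows from $\prod_{c\in\Fq}(Y - c\beta) = Y^q - \beta^{q-1}Y$), which both makes the vanishing-on-$E$ computation transparent and connects directly to the product form $L_U = \prod_{u\in U}(X-u)$ that the paper itself uses when it introduces the annihilator in the MPC protocol, since your induction in effect unwinds that product one coset at a time. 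Your uniqueness argument is also right and leans exactly where it should on Proposition~\ref{sevroot1}, after normalizing $P_1 - P_2$ by its leading coefficient so that the proposition (stated for monic $q$-polynomials) applies. One small remark: you need not worry about the degenerate case $\beta = P'(\alpha) = 0$. It cannot occur, since otherwise $P'$, of $q$-degree $r-1$, would vanish on the $r$-dimensional space $E$, contradicting Proposition~\ref{sevroot1}; and even if it did, $P = (P')^q$ would still be a monic $q$-polynomial of $q$-degree $r$ vanishing on $E$, so the construction is unconditionally valid.
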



\subsubsection{Rank metric}

While the Hamming metric -- which counts the number of nonzero coordinates of a code word -- is relevant in the case of a prime field (especially $\mathbb{F}_2$), the rank metric relies on an extension field $\mathbb{F}_{q^m}$.

\begin{definition}
Let $\bm{x}=(x_1,...,x_n)\in\mathbb{F}_{q^m}^n$, and $\mathcal{B}=(b_1,...,b_m)$ an $\mathbb{F}_q$-basis of $\mathbb{F}_{q^m}$. Each coordinate $x_i$ can be associated with a vector $(x_{i,1},...,x_{i,m})\in\mathbb{F}_{q}^m$ such that: $$x_i=\sum_{j=1}^m x_{i,j}b_j$$
The matrix $\bm{M}(\bm{x})=(x_{i,j})_{(i,j)\in\oneto{n}\times\oneto{n}}$ is called matrix associated to the vector $\bm{x}$.\\
The rank weight is defined as: $W_R(\bm{x})=\rank(\bm{M}(\bm{x}))$.\\
The distance between two vectors is given by: $d(\bm{x},\bm{y})=W_R(\bm{x}-\bm{y})$.\\
The support of $\bm{x}$ is the linear subspace of $\mathbb{F}_{q^m}$ generated by its coordinates: $\supp (\bm{x})=\langle x_1,...,x_n\rangle$
\end{definition}

\textit{Remark:} The rank weight and the support of a vector is independent of the choice of the basis. On the other hand, the rank weight of $\bm{x}$ is obviously equal to the dimension of its support.\\

It is clear that the metric takes on its full meaning in the case of field extensions: if we were to place ourselves on a prime field, any non zero vector would have weight exactly 1. It also follows that the rank metric is less discriminating since increased by the Hamming weight.

The number of possible $r$-dimensional supports in $\Fqm$ is given by the gaussian binomial coefficient: $$\cbg
=\prod_{i=0}^{r-1}\dfrac{q^m-q^i}{q^r-q^i}$$
One often use the following approximation: $\cbg\approx q^{r(m-r)}$.

\subsection{Rank Syndrome Decoding problem}

We want to build a signature based on a standard difficult problem: the syndrome decoding. However, rather than using the Hamming metric, we use here the rank metric, which allows to build similar constructions, but with different properties. We begin by recalling the fundamental definitions inherent to coding theory.

\begin{definition}
A linear code $\mathcal{C}$ on a finite field $\mathbb{F}$ of dimension $k$ and length $n$ is a linear subspace of $\mathbb{F}^n$ of dimension $k$. The elements in $\mathcal{C}$ are called the code words. One says that $\mathcal{C}$ is a $[n,k]_\mathbb{F}$ code.
\end{definition}

Two types of matrices can define a code:

\begin{definition}
Let $\mathcal{C}$ a $[n,k]_\mathbb{F}$ code.
A matrix $\bm{G}\in\mathbb{F}^{k\times n}$ is a generator matrix of $\mathcal{C}$ if its rows form a basis of $\mathcal{C}$, or equivalently: $$\mathcal{C}=\lbrace \bm{m}\bm{G}, \bm{m}\in\mathbb{F}^k\rbrace$$
A matrix $\bm{H}\in\mathbb{F}^{(n-k)\times n}$ is a parity check matrix of $\mathcal{C}$ if its rows form a basis of $\mathcal{C}^\perp$, or equivalently: $$\mathcal{C}=\lbrace \bm{c}\in\mathbb{F}^n,\bm{H}\bm{c}^\intercal=\bm{0}^\intercal\rbrace$$
\end{definition}

\textit{Remark:} In order to lighten the notations, we often omit the transpose symbol for the vertical vectors

As in the case of the Hamming metric with random codes, syndrome decoding is a hard problem in the rank metric:

\begin{definition}
Let $\mathbb{F}_{q^m}$ be the finite field of size $q^m$. Let $(n,k,r)$ be positive integers such that $k\leq n$. Let $I \in \mathbb{F}_{q^m}^{(n-k)\times (n-k)}$ the identity matrix of dimension $n-k$ on $\mathbb{F}_{q^m}$. The Rank Syndrom Decoding (Rank-SD) problem with parameters $(q,m,n,k,r)$ is the following one:\\
Let $\bm{H}$, $\bm{x}$ and $\bm{y}$ such that:\begin{itemize}
    \item $\bm{H}'$ is uniformly sampled from $\mathbb{F}_{q^m}^{(n-k)\times k}$, and $\bm{H} = (\bm{I} \,||\, \bm{H}') \in \Fqm^{\nmktn}$
    \item $\bm{x}$ is uniformly sampled from $\lbrace \bm{x}\in\Fqmn, W_R (\bm{x})\leq r\rbrace$
    \item $\bm{y}=\bm{H}\bm{x}$
\end{itemize}
From $(\bm{H},\bm{y})$, find $\bm{x}$.
\end{definition}

\textit{Remark:} Choosing the parity-check matrix $\bm{H}$ in standard form does not decrease the hardness of the problem, because obtaining this form from a random matrix can be done in polynomial time. As we see below, constraining this form allows us to share only the first $k$ coordinates of $\bm{x}$.

As we see later, there are two main approaches to solve the Rank-SD problem. The first one relies on combinatorial arguments: it consists in guessing the support of the error. The second one relies on algebraic computations, which consist in putting the problem into polynomial equations, then solving them using a Gröbner basis algorithm.

\section{Description of the signature scheme}

Here we explain how to build our signature scheme from a MPC protocol checking the solution of a Rank-SD instance. The MPC-in-the-Head paradigm allows us to convert the protocol first into a zero-knowledge proof, then into a signature scheme using the Fiat-Shamir transform.

\subsection{MPC rank checking protocol}

From the Proposition \ref{sevroot2} in the previous section, one can deduce a polynomial characterization of the rank of a vector: for all $\bm{x}\in\Fqm$, $W_R(\bm{x})\leq r$ if and only if there exists a $q$-polynomial $L$ of $q$-degree $r$ such that for all $i\in\oneto{n}$: $L(x_i)=0$. The following MPC protocol is an optimization of the one proposed in \cite{F22}.

We now describe a MPC protocol to check the maximum rank of a vector $\bm{x}=(x_1,...,x_n)\in\Fqm$. Let $U=\supp (\bm{x})=\langle x_1,...,x_n\rangle$ the vector space generated by its coordinates, and $L_U$ the annihilator polynomial of $U$: $$L_U=\prod_{u\in U} (X-u)\in\Fqm[X]$$
$L_U$ is a $q$-polynomial, so it can be written as: $$L_U(X)=\sum_{i=0}^{r-1}\beta_iX^{q^i}+X^{q^r}$$
Since the support is defined up to multiplication by a constant, one can always assume that it contains 1 (see \cite{RQC}) without impacting the security of the scheme. So, we get that $1$ is a root of the linearized polynomial $L_U$, we can consequently deduce a relation between its coefficients, and send one fewer value. We can suppose that $\bm{\beta}=(\beta_1,...,\beta_{r-1})\in\Fqm^{r-1}$ and $\beta_0= -\sum_{i=1}^{r-1}\beta_i-1$.

Rather than checking separately that each $x_i$ is a root of $L_U$, we batch all these checks by uniformly sampling $\gamma_1,...,\gamma_n$ in an extension $\mathbb{F}_{q^{m\cdot \eta}}$ of $\Fqm$ and checking that: $$\sum_{j=1}^n\gamma_jL_U(x_j)=0$$

If one of the $x_i$ is not a root of $L_U$, the equation above is satisfied only with probability $\frac{1}{q^{m\eta}}$. We have
\begin{align*}
\sum_{j=1}^n\gamma_jL(x_j) &= \sum_{j=1}^n\gamma_j \left(\sum_{i=0}^{r-1}\beta_ix_j^{q^i}+x_j^{q^r}\right)\\
&= \sum_{j=1}^n\gamma_jx_j^{q^r} +\sum_{i=0}^{r-1}\beta_i \cdot \sum_{j=1}^n\gamma_jx_j^{q^i}\\
&= \sum_{j=1}^n\gamma_jx_j^{q^r} + \beta_0\sum_{j=1}^n\gamma_jx_j + \sum_{i=1}^{r-1}\beta_i\cdot\sum_{j=1}^n\gamma_jx_j^{q^i} \\
&= \sum_{j=1}^n\gamma_jx_j^{q^r} + \Big(-\sum_{i=1}^{r-1}\beta_i-1\Big)\sum_{j=1}^n\gamma_jx_j + \sum_{i=1}^{r-1}\beta_i\cdot\sum_{j=1}^n\gamma_jx_j^{q^i} \\
&= \sum_{j=1}^n\gamma_jx_j^{q^r} + \Big(-\sum_{i=1}^{r-1}\beta_i\Big)\sum_{j=1}^n\gamma_jx_j + \sum_{i=1}^{r-1}\beta_i\cdot\sum_{j=1}^n\gamma_jx_j^{q^i} - \sum_{j=1}^n\gamma_jx_j \\
&= \sum_{j=1}^n\gamma_j(x_j^{q^r}-x_j) + \sum_{i=1}^{r-1}\beta_i\cdot\sum_{j=1}^n\gamma_j(x_j^{q^i}-x_j)
\end{align*}

Defining $z = -\sum_{j=1}^n\gamma_j(x_j^{q^r}-x_j)$, and $\omega_i= \sum_{j=1}^n\gamma_j(x_j^{q^i}-x_j) $, proving the equation is equivalent to proving that: $$z=\langle\beta,\omega\rangle$$ 
with $\beta$ the vector of coefficients of $L_U$. This can be checked in the same way as in the multiplication checking protocol from \cite{BN20}. Therefore, it is necessary to introduce a random vector $\bm{a}\in\Fqme^r$ and $c=-\langle\bm{\bm{\beta}},\bm{a}\rangle\in\Fqme$.

Since we consider a Rank-SD instance with $\bm{H}$ in standard form $\bm{H} = (\bm{I} \,||\, \bm{H}')$, the secret solution $\bm{x}\in\Fqm^n$ can be split it in two parts $\bm{x}=(\bm{x}_A\parallel \bm{x}_B)$ with $\bm{x}_A\in\Fqm^k$ and $\bm{x}_B\in\Fqm^{n-k}$, such that 
$$
\bm{y}=\bm{H}\bm{x} ~~~\Leftrightarrow~~~ \bm{x}_A = \bm{y}-\bm{H}'\bm{x}_B~.
$$
For a given instance, $(\bm{H},\bm{y})$, the secret solution can then be fully (and linearly) recovered from $\bm{x}_B$ by $\bm{x}= (\bm{y}-\bm{H}'\bm{x}_B\parallel\bm{x}_B)$.

The resulting MPC protocol, denoted $\Pi^\eta$, is presented in Figure \ref{mpc_opti}. In a high-level point of view, the difference between this MPC protocol and the one described in~\cite{F22} comes from the fact that we assume that $1$ is in the support of $x$ to save communication. However, this difference does not impact the security analysis (\textit{i.e.} the false-positive rate) of the protocol, see below.

\begin{figure}[h!]
    \pcb[codesize=\scriptsize, minlineheight=0.75\baselineskip, mode=text, width=0.98\textwidth] { 
  \textbf{Public values.} instances of a Rank-SD problem : $\bm{H} = (I ~ \bm{H}') \in \Fqm^{\nmktn}$ and $\bm{y}\in \Fqm^{\nmk}$\\
  \textbf{Inputs.} Each party takes a share of $\share{\bm{x}_B}$ where $\bm{x}\in\Fqmn$. Let $U$ a $\Fq$-linear subspace of $\Fqm$ of dimension $r$ which contains $x_1,\dots,x_n$ and has for annihilator polynomial $L=\prod_{u\in U}(X-u)$. Each party takes a share of the following sharing as inputs : $\share{L}=\sum_{i=1}^{r-1}\share{\beta_i}(X^{q^i}-X)+(X^{q^r}-X)$, $\share{\bm{a}}$ where $\bm{a}$ is uniformly sampled from $\Fqme^{r-1}$ and $\share{c}$ such that $c=-\langle\bm{\bm{\beta}},\bm{a}\rangle$.\\[\baselineskip]
  \textbf{MPC Protocol :}\\[\baselineskip]
  1. The parties sample uniformly random : $(\gamma_1,...,\gamma_n,\varepsilon)\sampler\Fqme^{n+1}$.\\
  2. The parties locally compute $\share{\bm{x}_A}=\bm{y}-\bm{H}'\share{\bm{x}_B}$ and $\share{\bm{x}} = (\share{\bm{x}_A} \,||\, \share{\bm{x}_B})$.\\
  3. The parties locally compute $\share{z} = -\sum_{j=1}^n\gamma_j\left(\share{x_j}^{q^r}-\share{x_j}\right)$.\\
  4. The parties locally compute for all $k$ from 1 to $r-1$ : $\share{\omega_k} = \sum_{j=1}^n\gamma_j\left(\share{x_j}^{q^k}-\share{x_j}\right)$.\\
  5. The parties locally compute $\share{\bm{\alpha}}=\varepsilon\cdot \share{\bm{\omega}} + \share{\bm{a}}$.\\
  6. The parties open $\bm{\alpha}$.\\
  7. The parties locally compute $\share{v} = \varepsilon\cdot \share{z} - \langle\bm{\alpha},\share{\bm{\beta}}\rangle-\share{c}$.\\
  8. The parties open $v$.\\
  9. The parties output ACCEPT if $v=0$, and reject otherwise. 
  }
\vspace{-\baselineskip}
\captionof{figure}{\footnotesize{Protocol $\Pi^\eta$ : checking that an input is solution of an instance of Rank-SD problem}}
\label{mpc_opti}
\end{figure}

\begin{proposition}[\cite{F22}] If $ \:W_R(\bm{x})\leq r$, then the protocol $\Pi^\eta$ always accepts. If $W_R(\bm{x})> r$, the protocol accepts with  probability at most: $p_\eta=\frac{2}{q^{m\eta}}-\frac{1}{q^{2m\eta}}$.\\
We call $p_\eta$ the false positive rate of $\Pi^\eta$.
\end{proposition}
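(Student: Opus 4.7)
The first (completeness) half is a direct algebraic check. Starting from the derivation in the text, when $W_R(\bm{x}) \leq r$ the polynomial $L$ whose shares are distributed in the input vanishes on every $x_j$, so the identity $\sum_j \gamma_j L(x_j)=0$ rewrites, using $\beta_0=-\sum_{i\geq 1}\beta_i-1$, as $z=\langle \bm{\beta},\bm{\omega}\rangle$. Substituting $\bm{\alpha}=\varepsilon\bm{\omega}+\bm{a}$ and $c=-\langle\bm{\beta},\bm{a}\rangle$ into the definition of $v$ yields $v=\varepsilon(z-\langle\bm{\omega},\bm{\beta}\rangle)-\langle\bm{a},\bm{\beta}\rangle-c=0$, so the protocol always accepts.

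For the soundness bound, the plan is to reduce everything to a single affine expression in the random values $\bm{\gamma}$ and $\varepsilon$. Fix arbitrary (possibly malicious) inputs $(\bm{x},\bm{\beta},\bm{a},c)$. Let
\begin{equation*}
    L(X) = (X^{q^r}-X) + \sum_{i=1}^{r-1}\beta_i(X^{q^i}-X),\qquad f = -\sum_{j=1}^{n}\gamma_j L(x_j),\qquad d = \langle\bm{a},\bm{\beta}\rangle+c.
\end{equation*}
The same manipulation as in the completeness part shows $v = \varepsilon f - d$, where $f$ depends only on $\bm{\gamma}$ and $d$ is a fixed constant determined by the inputs.

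The crucial step is to bound $\Pr[f=0]$. Because $W_R(\bm{x})>r$, Proposition~\ref{sevroot1} (roots of a nonzero $q$-polynomial of $q$-degree $r$ form an $\mathbb{F}_q$-subspace of dimension $\leq r$) prevents $L$, whichever $\bm{\beta}$ the malicious prover has chosen, from vanishing on the $\mathbb{F}_q$-span of $x_1,\dots,x_n$; hence there is some index $j_0$ with $L(x_{j_0})\neq 0$. Then $f$ is a nontrivial $\mathbb{F}_{q^{m\eta}}$-affine form in the uniformly random $\gamma_{j_0}$, so $\Pr[f=0]=1/q^{m\eta}$.

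Finally, condition on $f$. If $f\neq 0$, then $v=0$ iff $\varepsilon=d/f$, which occurs with probability $1/q^{m\eta}$ over the independent choice of $\varepsilon$. If $f=0$, then $v=-d$, which is $0$ only if $d=0$ (a property the prover may or may not arrange). Combining,
\begin{equation*}
    \Pr[v=0] \leq \left(1-\tfrac{1}{q^{m\eta}}\right)\tfrac{1}{q^{m\eta}}+\tfrac{1}{q^{m\eta}}\cdot\mathbb{1}[d=0] \leq \tfrac{2}{q^{m\eta}}-\tfrac{1}{q^{2m\eta}},
\end{equation*}
with equality exactly when the prover supplies a correctly formed triple $(\bm{a},\bm{\beta},c)$ with $c=-\langle\bm{a},\bm{\beta}\rangle$. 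The main conceptual obstacle is the bound on $\Pr[f=0]$; once the subspace-dimension argument from Proposition~\ref{sevroot1} is invoked, the rest is bookkeeping.
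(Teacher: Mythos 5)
Your proof is correct and follows the same structure as the paper's argument: split on whether $\sum_j\gamma_j L(x_j)=0$, use the independence of $\bm{\gamma}$ and $\varepsilon$, and combine the two conditional probabilities. The only difference is that where the paper simply cites the false-positive rate of the \cite{BN20} multiplication-checking subprotocol, you unfold that citation into the explicit identity $v=\varepsilon f - d$ and read off the probabilities directly, which makes the argument self-contained and also identifies precisely when the bound is tight ($d=0$, i.e.\ $c=-\langle\bm{a},\bm{\beta}\rangle$).
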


\begin{proof}
The protocol takes place in two stages: from steps 1 to 4, it computes the values $z$ and $\omega$. From step 5, it checks that $z=\langle\bm{\beta},\bm{\omega}\rangle$. If $W_R(\bm{x})>r$, there exists $k\in \oneto{n}$ such that $L(x_k)\neq 0$.\\
There are two possibilities for the protocol to accept in this case:
\begin{itemize}
    \item either $\sum_{j=1}^n\gamma_jL(x_j)=0$, which occurs with probability $\frac{1}{q^{m\eta}}$;
    \item or $z\neq\langle \bm{\beta},\bm{\omega}\rangle$ but the protocol still accepts, which occurs with probability $\frac{1}{q^{m\eta}}$ (according to the false-positive rate of the \cite{BN20} multiplication checking protocol adapted for the matrix setting, see \cite{F22} for details).
\end{itemize}
Since these two events are independent, the false positive rate is: $$p_{\eta}=\frac{1}{q^{m\eta}}+\left(1-\frac{1}{q^{m\eta}}\right)\frac{1}{q^{m\eta}}\:\leq\:\frac{2}{q^{m\eta}}.$$
\end{proof}

\subsection{Conversion to zero-knowledge proof optimized with the hypercube technique} \label{mpc_to_zk}

We start by presenting the main idea of a proof of knowledge optimized with the hypercube technique as proposed by \cite{AGHHJY22}: sharing a secret in $N=2^D$ additive shares, and arranging them in a geometrical structure. Let us consider an hypercube of dimension $D$, each dimension having $2$ slots. These $2^D$ shares are called ``leaf parties''. They can be indexed in two ways: either by an integer in $\oneto{2^D}$, or by a vector of coordinates $(i_1,...,i_D)\in\oneto{2}^D$.

Since we use an additive secret sharing, one can intertwine several executions where each party (what we call a ``main party'') can be the sum of a subset of leaves, such that these subsets form a partition of the leaves. We build here $D$ intertwined protocols, and each of them has $2$ main parties. For each dimension $d\in\oneto{D}$, each main share is the sum of the $2^{D-1}$ leaves which have the same value for $i_d$. The main parties are indexed by $(d,k)\in\oneto{D}\times\oneto{2}$.

The leaf parties are the first to be generated. The process is the same as in the generation of additive shares in a standard MPCitH transformation: the prover uses a GGM tree (as a puncturable pseudo-random function) to derive $N$ seeds from a root seed $\seed$. From the $i$\textsuperscript{th} leaf seed, each party $i\in\oneto{N-1}$ pseudo-randomly generates the input shares $\share{\bm{x}_B}_i,\share{\bm{\beta}}_i,\share{\bm{a}}_i$ and $\share{c}_i$ (see Figure~\ref{mpc_opti} for the definition of these shares) . The last share for $i=N$ is computed such that we obtain a valid secret sharing of the values (except for $\bm{a}$ which is random and therefore can be sample from $\seed_N$). Then the prover can derive the shares of the ``main parties'' by summing the shares according to the partitions. For example, the share of $c$ of the main party $(1,k)$ is $\sum_{i_2,...,i_D}\share{c}_{(k,i_2,...,i_D)}$, which is the sum of the $2^{D-1}$ leaf shares such that $i_1=k$.

The resulting protocols are presented in Figure \ref{pzk_hypercube}, Figure \ref{exec_pi} and Figure \ref{check_pi}.

\begin{figure}[h!]
    \pcb[codesize=\scriptsize, minlineheight=0.75\baselineskip, mode=text, width=0.98\textwidth] { 
  \textbf{Public data}: An instance of a Rank-SD problem: $\bm{H} = (I ~ \bm{H}') \in \Fqm^{\nmktn}$ and $\bm{y}\in \Fqm^{\nmk}$\\
  The prover wants to convince the verifier that he knows the solution $\bm{x}\in\Fqmn$ of the instance, i.e. such that $\bm{y} = \bm{H}\bm{x}$. \\[\baselineskip]
  \textbf{Step 1: Commitment} \\
  1. The prover computes $\bm{\beta} = (\bm{\beta}_k)_{k \in [1, r - 1]} \in \Fqm^r$ the coefficients of the annihilator q-polynomial $L(X)$ associated to $\bm{x}$ such that $L(X)=\prod_{u\in U}(X-u) = (X^{q^r} - X) + \sum_{k = 1}^{r - 1} \bm{\beta}_k (X^{q^k} - X)$ and $\forall j \in \oneto{n}, L(\bm{x}_j) = 0$ \\
  2. The prover samples a seed $\seed\sampler\{0, 1\}^\lambda$ and a random salt $\salt\sampler\{0, 1\}^\lambda$.\\
  3. The prover expand $\seed$ recursively using a GGM tree with $\salt$ to obtain $N$ leaves and seeds $(\seed_{i'},\rho_{i'})$.\\
  4. For each $i\in\oneto{N-1}$: \\
  \pcind \pcind - The prover samples $(\share{\bm{x}_B}_i,\share{\bm{\beta}}_i,\share{\bm{a}}_i\share{c}_i)\samples{\seed_i}PRG$ where PRG is a pseudo-random generator \\
  \pcind \pcind - $\state_i=\seed_i$\\
  5. For the share $N$: \\
  \pcind \pcind - The prover samples $\share{\bm{a}}_{N}\samples{\seed_{N}} PRG$\\
  \pcind \pcind - The prover computes $\share{\bm{x}_B}_{N}=\bm{x}_B-\sum_{i=1}^{N-1}\share{\bm{x}_B}_i$, $\share{\bm{\beta}}_{N}=\bm{\beta}-\sum_{i=1}^{N-1}\share{\bm{\beta}}_i$ and $\share{c}_{N}=-\dotp{\bm{a}, \bm{\beta}}-\sum_{i=1}^{N-1}\share{c}_i$\\
  \pcind \pcind - $\state_{N}=(\seed_{N},\share{\bm{x}_B}_{N},\share{\bm{\beta}}_{N},\share{c}_{N})$\\
  6. The prover computes the commitments: $\cmt_i=\commit{\state_i,\rho_i}$, where $\rho_i$ is a random tape sampled in $\{0, 1\}^\lambda$ from $\seed_i$ for each $i\in\oneto{N}$.\\
  7. The prover commits all the shares: $h_0 = \hash(\cmt_1, \cdots, \cmt_{N})$\\
  8. The prover computes the main parties by summing all the leaves associated. Indexing each party by its coordinates on the hypercube: $i=(i_1,...,i_D)$ where $i_k\in\oneto{2}$. For all main party index $p=(d,k)\in\lbrace (1,1),...(D,2)\rbrace$: $\share{\bm{x}_B}_{(d,k)}=\sum_{i: i_{d}=k}\share{\bm{x}_B}_i$, $\share{\bm{\beta}}_{(d,k)}=\sum_{i: i_{d}=k}\share{\bm{\beta}}_i$, $\share{\bm{a}}_{(d,k)}=\sum_{i: i_{d}=k}\share{\bm{a}}_i$ and $\share{c}_{(d,k)}=\sum_{i: i_{d}=k}\share{c}_i$.\\
  [\baselineskip]
  \textbf{Step 2: First Challenge} \\
  9. The verifier randomly samples $\big( (\gamma_j)_{j \in \oneto{n}}, \epsilon \big) \in \Fqme^n \times \Fqme$ and sends it to the prover.\\[\baselineskip]
  \textbf{Step 3: First Response} \\
  10. For each dimension $d\in\oneto{D}$, the prover executes the algorithm in Fig. \ref{exec_pi} on the main parties of the current dimension: $(d,1),(d,2)$. He computes $(\share{\bm{\alpha}},\share{v},H_d)\longleftarrow \text{Algorithm\ref{exec_pi}}(\big( P_d,(\gamma_j)_{j \in \oneto{n}}, \epsilon \big)$\\
  11. The prover commits the executions: $h_1=\mathsf{H}(H_1,...,H_D)$\\[\baselineskip]
  \textbf{Step 4: Second Challenge}\\
  12. The verifier randomly samples $i^*\in\oneto{N}$ and sends it to the prover.\\[\baselineskip]
  \textbf{Step 4: Second Response and verification}\\
  13. The prover sends to the verifier $\cmt_{i^*}$ and $\share{\bm{\alpha}}_{i^*}$; and the sibling path of the share $i^*$ to retrieve $(\state_j,\rho_j)_{j\neq i^*}$.\\
  14. The verifier can deduce all the leaves (with the exception of index $i^*$), and recover $h_0$ using the sibling path and $\cmt_{i^*}$.\\
  15. For all dimension $d\in\oneto{D}$, the verifier runs the algorithm in Fig. \ref{check_pi} to get $\share{\bm{\alpha}}$, $\share{v}$ and $H_d$. He checks:\\
  \pcind \pcind - $v=0$ for all the executions.\\
  \pcind \pcind - $\bm{\alpha}$ is the same for all $D$ main executions.\\
  \pcind \pcind -  $h_1=\mathsf{H}(H_1,...,H_D)$
  }
\vspace{-\baselineskip}
\captionof{figure}{\footnotesize{Protocol optimized with the hypercube technique}}
\label{pzk_hypercube}
\end{figure}
\begin{figure}[h!]
    \centering
    \pcb[codesize=\scriptsize, minlineheight=0.75\baselineskip, mode=text, width=0.98\textwidth] { 
  \textbf{Inputs :} A set of shares for the dimension $k$: $(\share{\bm{x}_B},\share{\bm{\beta}},\share{\bm{a}},\share{c})$ and a protocol challenge $\big( (\gamma_j)_{j \in \oneto{n}}, \epsilon \big)$\\
   \textbf{Outputs :} A set of shares $\share{v}$ and a commitment $H$ of the execution\\[\baselineskip]
  For each party $i\in\oneto{N}$ : \\
 \pcind \pcind - Compute $\share{\bm{x}_A}_i = \bm{y} - \bm{H}' \share{\bm{x}_B}_i$ and $\share{\bm{x}}_i = (\share{\bm{x}_A}_i \,||\, \share{\bm{x}_B}_i)$ \\
  \pcind \pcind - $\share{z}_i = -\sum_{j=1}^n\gamma_j\left(\share{x_j}_i^{q^r}-\share{x_j}_i\right)$ \\
  \pcind \pcind -  Compute $\share{\omega_k} = \sum_{j=1}^n\gamma_j\left(\share{x_j}^{q^k}-\share{x_j}\right)$ for each $k \in [1, r - 1]$ \\
 \pcind \pcind - Compute $\share{\boldsymbol{\alpha}}_i = \epsilon \cdot \share{\bm{w}}_i + \share{\bm{a}}_i$ and reveals $\boldsymbol{\alpha}$\\
 \pcind \pcind -  Compute $\share{v}_i = \epsilon \cdot \share{z}_i - \dotp{\boldsymbol{\alpha}, \share{\bm{\beta}}_i} - \share{c}_i$ \\
 The parties compute together : $H=\hash \left(\share{\boldsymbol{\alpha}},\share{v})\right)$
  }
\vspace{-\baselineskip}
\captionof{figure}{\footnotesize{Execution of the MPC protocol $\Pi^\eta$ on a set of main shares}}
\label{exec_pi}
\end{figure}

\begin{figure}[h!]
\pcb[codesize=\scriptsize, minlineheight=0.75\baselineskip, mode=text, width=0.98\textwidth] { 
  \textbf{Inputs:} A leaf $i^*$ that one does not reveal, the main party shares $\share{\bm{\alpha}}$ and $\share{v}$ on which the hidden leaf $i^*$ depends on, all the other main parties shares $(\share{\bm{x}_B},\share{\bm{\beta}},\share{\bm{a}},\share{c})$. \\
   \textbf{Outputs:}  $\share{\bm{\alpha}},\share{v}$ and a commitment $H$ of the execution\\[\baselineskip]
  \textbf{For} all the main parties:\\
  \pcind \pcind \textbf{If} the party $p=(d,k)$ contains the leaf $i^*$, i.e. $i^*_d=k$:\\
  \pcind \pcind \pcind \pcind Set $\share{v}_{i^*}$ such that $v=0$\\
  \pcind \pcind \textbf{Else}:\\
  \pcind \pcind \pcind \pcind Do the same computations as in $\Pi^\eta$ to obtain the correct shares $\share{\bm{\alpha}}_p$ and $\share{v}_p$\\
  Compute $H=\hash \left(\share{\boldsymbol{\alpha}},\share{v}\right)$
  }
\vspace{-\baselineskip}
\captionof{figure}{\footnotesize{Check the executions of the MPC protocol $\Pi^\eta$ on a the main parties}}
\label{check_pi}
\end{figure}

\bigskip

An important remark is that one could have chosen a number of main parties per dimension $n\neq 2$. The size of the signature depends on $N=n^D$, but not independently on $n$ and $D$. For example, choosing the MPC parameters $(n,D)=(2,8)$ or $(n,D)=(4,4)$ have no impact on the signature size, since $n^D=256$ in the both cases. However, it is in our interest to choose $n$ as small as possible: this increase the number $D$ of interleaved MPC protocols, but reduce the number of computations (which are much more expensive). Consequently, we always choose $n=2$ and select different dimensions $D$.

On the other hand, it is not necessary to do computations on all the main shares. Once one has done the calculations on the main parties on a dimension $d\in\oneto{D}$, one knows the value of $\bm{\alpha}$ for the protocol. Therefore, one can do the computations for only $n-1$ parties of protocols associated to all other dimensions, since one can deduce the value of the last share. This makes a total of only $n+(n-1)(D-1)$ computed parties, and choosing $n=2$ makes this value the smallest as possible with $N=n^D$ fixed. One obtains that it is necessary to do computations for only $D+1$ main parties.

\newpage

\subsubsection{Application of the Fiat-Shamir transform}

We deduce from the Fiat-Shamir transform a way to convert a proof of knowledge to a signature scheme, by transforming the proof in a non-interactive protocol. The signature scheme and its verification algorithm are depicted in Figure \ref{sign_hyp} and Figure \ref{ver_hyp}. Note that we use a value $\salt$, as in \cite{FJR22}, in order to increase the security of the scheme as it reduces the probability of seeds collision.

\begin{figure}[h!]
    \pcb[codesize=\scriptsize, minlineheight=0.75\baselineskip, mode=text, width=0.98\textwidth] { 
  \textbf{Inputs} \\
  \pcind - Secret key $\sk = \bm{x}_B$ with $\bm{x} = (\bm{x}_A \,||\, \bm{x}_B) \in \Fqmn$ such that $\rw{\bm{x}} = r$ \\
  \pcind - Public key $\pk = (\bm{H}, \bm{y})$ with $\bm{H} = (\bm{I} ~ \bm{H}') \in \Fqm^{\nmktn}$ and $\bm{y} \in \Fqm^{\nmk}$ such that $\bm{y} = \bm{H}\bm{x}$ \\
  \pcind - Message $m \in \{0, 1\}^*$ \\[\baselineskip]
  \textbf{Step 1: Commitment} \\
  1. Sample a random salt value $\salt \sampler \{0, 1\}^{2\lambda}$ \\
  2. Compute $\bm{\beta} = (\bm{\beta}_k)_{k \in [1, r - 1]} \in \Fqm^r$ the coefficients of the annihilator q-polynomial $L(X)$ associated to $\bm{x}$ such that $L(X) = \prod_{u\in U}(X-u) = (X^{q^r} - X) + \sum_{k = 1}^{r - 1} \bm{\beta}_k (X^{q^k} - X)$ and $\forall j \in \oneto{n}, L(\bm{x}_j) = 0$ \\
  3. For each iteration $e \in \oneto{\tau}$: \\
  \pcind - Sample a root seed $\seed^{(e)}\leftarrow \lbrace 0,1\rbrace^\lambda$.\\
  \pcind - Expand $\seed^{(e)}$ recursively using a GGM tree with $\salt$ to obtain $N$ seeds and randomness $(\seed^{(e)}_{i'},\rho_{i'})$ \\
  \pcind \textbf{For} each $i\in\oneto{N}$ : \\
  \pcind \pcind \pcind \textbf{If} $i'\neq N$ :\\
  \pcind \pcind \pcind \pcind \pcind - Sample $(\share{\bm{x}_B^{(e)}}_i,\share{\bm{\beta}^{(e)}}_i,\share{\bm{a}^{(e)}}_i\share{c^{(e)}}_i)\samples{\seed_i^{(e)}} PRG$ where PRG is a pseudo-random generator \\
  \pcind \pcind \pcind \pcind \pcind - $\state_i^{(e)}=\seed_i^{(e)}$\\
  \pcind \pcind \pcind \textbf{Else} :\\
  \pcind \pcind \pcind \pcind \pcind - Sample $\share{\bm{a^{(e)}}}_{N}\samples{\seed_{N}^{(e)}} PRG$\\
  \pcind \pcind \pcind \pcind \pcind - Compute $\share{\bm{x}_B^{(e)}}_{N}=\bm{x}_B^{(e)}-\sum_{i=1}^{N-1}\share{\bm{x}_B^{(e)}}_i$, $\share{\bm{\beta}^{(e)}}_{N}=\bm{\beta}^{(e)}-\sum_{i=1}^{N-1}\share{\bm{\beta}^{(e)}}_i$ and $\share{c^{(e)}}_{N}=-\dotp{\boldsymbol{\alpha}^{(e)}, \bm{\beta}^{(e)}}-\sum_{i=1}^{N-1}\share{c^{(e)}}_i$\\
  \pcind \pcind \pcind \pcind \pcind - $\state_{N}^{(e)}=(\seed_{N}^{(e)},\share{\bm{x}_B^{(e)}}_{N},\share{\bm{\beta}^{(e)}}_{N},\share{c^{(e)}}_{N})$\\
  \pcind \pcind \pcind - Compute $\cmt_i^{(e)}=\mathsf{H}_0(\salt,e,\state_i^{(e)})$\\
  4. For all main party index $p=(d,k)\in\lbrace (1,1),...(D,2)\rbrace$ : $\share{\bm{x}_B^{(e)}}_{(d,k)}=\sum_{i: i_{d}=k}\share{\bm{x}_B^{(e)}}_i$, $\share{\bm{\beta}^{(e)}}_{(d,k)}=\sum_{i: i_{d}=k}\share{\bm{\beta}^{(e)}}_i$, $\share{\bm{a}^{(e)}}_{(d,k)}=\sum_{i: i_{d}=k}\share{\bm{a}^{(e)}}_i$ and $\share{c^{(e)}}_{(d,k)}=\sum_{i: i_{d}=k}\share{c^{(e)}}_i$.\\
  5. Commit all the shares : $h_0^{(e)}=\mathsf{H}_1(\salt,e,\cmt_1^{(e)},...,\cmt_{N}^{(e)})$\\
  6. Compute $h_1=\mathsf{H}_2\left(\salt,m,h_0^{(1)},...,h_0^{(\tau)}\right)$\\
  [\baselineskip]
  \textbf{Step 2: First Challenge} \\
  7. Extend hash $\big( (\gamma^{(e)}_j)_{j \in \oneto{n}}, \epsilon^{(e)} \big)_{e \in \oneto{\tau}} \leftarrow PRG(h_1)$ where $\big( (\gamma^{(e)}_j)_{j \in \oneto{n}}, \epsilon^{(e)} \big)_{e \in \oneto{\tau}}\in (\Fqme^n \times \Fqme)^{\tau}$ \\
  [\baselineskip]
  \textbf{Step 3: First Response} \\
  8. For each iteration $e \in \oneto{\tau}$ : \\
  \pcind\pcind \textbf{For} each each dimension $k\in\oneto{D}$ :\\
  \pcind \pcind \pcind - Execute the algorithm in Fig. \ref{exec_pi} to obtain $\share{\bm{\alpha}^{(e)}}$, $\share{v^{(e)}}$ and $H_k^{(e)}$\\
  9. Compute $h_2 = \hash_4(m, \pk, \salt, h_1, (H_1^{(e)},...,H_D^{(e)})_{e \in \oneto{\tau}})$\\
  [\baselineskip]
  \textbf{Step 4: Second Challenge} \\
  10. Extend hash $(i^{*(e)})_{e \in \oneto{\tau}}\leftarrow PRG(h_2)$ with $i^{*(e)}\in\oneto{N}$ \\
  [\baselineskip]
  \textbf{Step 5: Second Response} \\
  11.For each iteration $e \in \oneto{\tau}$: \\
  \pcind Compute $\rsp^{(e)}=\left( (\state_j)_{j\neq i^*},\cmt_{i^{*(e)}},\share{\alpha^{(e)}}_{i*^{(e)}}\right)$\\
  12. Output $\sigma = \left(\salt, h_1, h_2, (\rsp^{(e)})_{e \in \oneto{\tau}}\right)$
}
\vspace{-\baselineskip}
\captionof{figure}{\footnotesize{Rank-SD Signature Scheme based on Hypercube MPCitH - Signature Algorithm}}
\label{sign_hyp}
\end{figure}

\begin{figure}[h!]
    \pcb[codesize=\scriptsize, minlineheight=0.75\baselineskip, mode=text, width=0.98\textwidth] { 
  \textbf{Inputs} \\
  \pcind - Public key $\pk = (\bm{H}, \bm{y})$ with $\bm{H} = (\bm{I} ~ \bm{H}') \in \Fqm^{\nmktn}$ and $\bm{y} \in \Fqm^{\nmk}$ such that $\bm{y} = \bm{H}\bm{x}$ \\
  \pcind - Message $m \in \{0, 1\}^*$ \\
  \pcind - Signature $\sigma = (\salt, h_1, h_2, (\rsp^{(e)})_{e \in \oneto{\tau}})$, where $\rsp^{(e)}=\left( (\state_j)_{j\neq i^*},\cmt_{i^{*(e)}},\share{\alpha^{(e)}}_{i^{*(e)}}\right)$ \\
  [\baselineskip]
  \textbf{Step 1: Parse signature} \\
  1. Sample $\big( (\gamma^{(e)}_j)_{j \in \oneto{n}}, \epsilon^{(e)} \big)_{e \in \oneto{\tau}} \samples{h_1} (\Fqme^n \times \Fqme)^{\tau}$ \\
  2. Sample $i^{*(e)}\samples{h_2} [1,N]$\\
  3. Read $(\state_i^{(e)})_{i\in\oneto{N}})$ for each iteration $e\in\oneto{\tau}$.\\
  [\baselineskip]
  \textbf{Step 2: Recompute $h_1$}\\
  4. For each iteration $e \in \oneto{\tau}$: \\
  \pcind \textbf{For} each $i\neq i^{*(e)}$ :\\
  \pcind \pcind \pcind $\cmt_{i^*}^{(e)}=\hash_0(\salt,e,\state_{i^*}^{(e)})$\\
  \pcind $h_0^{(e)}=\hash_1\left(\salt,e,\cmt_1^{(e)},...,\cmt_{N}^{(e)}\right)$\\
  5. Compute $\bar{h}_1=\hash_2\left(m,\salt,h_0^{(1)},...,h_0^{(\tau)}\right)$\\
  [\baselineskip]
  \textbf{Step 3: Recompute $h_2$}\\
  6. For each iteration $e \in \oneto{\tau}$: \\
  \pcind - Simulate MPC protocol $\Pi$ on main parties\\
  \pcind - For each dimension $k\in\oneto{D}$ :\\
  \pcind \pcind \pcind Run the algorithm in Fig. \ref{check_pi} to get $\bar{H_k^{(e)}}$\\
  7. Compute $\bar{h}_2 = \hash_4\left(m, \pk, \salt, \bar{h}_1, \left(\overline{H_1^{(e)}},...,\overline{H_D^{(e)}}\right)_{e \in \oneto{\tau}}\right)$  \\[\baselineskip]
  \textbf{Step 4: Verify signature} \\
  8. Return $(\bar{h}_1 = h_1) \wedge (\bar{h}_2 = h_2)$
}
\vspace{-\baselineskip}
\captionof{figure}{\footnotesize{Rank-SD based signature scheme - Verification algorithm}}
\label{ver_hyp}
\end{figure}

\textit{Remark:} At the step 11 of the signature scheme, sending the state of all shares $j\neq i^*$ consists on:\begin{itemize}
    \item If $i^*=N$, sending the sibling path associated to the share $N$;
    \item If $i^*\neq N$, sending the sibling path associated to the share $i^*$ and the shares $\share{\bm{x}_B^{(e)}}_{N}$, $\share{\bm{\beta}^{(e)}}_{N}$ and $\share{c^{(e)}}_N$ to recover all the states $j\neq i^*$.
\end{itemize}

We show in Section \ref{sec_sgn} that this scheme guarantees good security against EUF-CMA attacks.

\subsection{Conversion to zero-knowledge proof using low-threshold linear secret sharing}

\subsubsection{Application of the MPCitH paradigm}

We explain below that the MPC-in-the-Head paradigm is a framework allowing to convert an MPC protocol with threshold linear secret sharing to a zero-knowledge proof. Like in the subsection \ref{mpc_to_zk}, the explanations given below are correct only within the framework of the restricted model \cite{FR22}. The principle of the process is that the prover emulates in his head the parties, and the verifier asks to the prover to reveal the views of $\ell$ parties out of $N$ (which includes shares and communications with other parties). It ensures the zero-knowledge property as long as the underlying MPC protocol relies on a $(\ell+1,N)$-threshold secret sharing.

To build such a proof, the prover begins by computing a sharing $\share{\omega}$ of the witness $\omega$, solution of the Rank Syndrome-Decoding instance. The prover then simulates all the parties of the protocol until the computation of $\share{f(\omega)}$ (which should correspond to a sharing of ACCEPT, since the prover is supposed to be honest), and sends a commitment of each party's view to show good faith. The verifier then chooses $\ell$ parties and asks to the prover to reveal their views. The verifier checks the computation of the their commitments. Since only $\ell$ parties have been opened, the revealed views give no information about $\omega$.

~
\newpage
~

A naive way for the prover to execute the protocol would be to simulate all parties, but each party computes LSSS sharing of all involved elements. It implies that for each $x$ in these elements, a sharing $\share{x}=(\share{x}_1,...,\share{x}_N)$ contains redundancy. One deduces that it is necessary to perform the calculations of the associated MPC protocol for only a subset $S$ of $\ell+1$ out of $N$ parties, and commit them is enough to ensure the soundness of the proof. The threshold protocol therefore ensures better performances for the simulation of the MPC protocol than the additive case (especially when $\ell$ is small compared to $n$), since it is not necessary to compute all the parties.

\textit{Remark:} This advantage on the full execution of the signing algorithm is partially mitigated by the performances of the symmetric cryptography primitives involved in the protocol. Threshold-MPCitH has better performances on the pseudo-random generation, but the commitment is faster (for the prover) with additive sharing. See \cite{FR22} for more details.

The protocol resulting from these considerations can be found in Figure \ref{pzk_thr}.

\begin{figure}[h!]
    \pcb[codesize=\scriptsize, minlineheight=0.75\baselineskip, mode=text, width=0.98\textwidth] { 
  \textbf{Public data} : An instance of a Rank-SD problem : $\bm{H} = (I ~ \bm{H}') \in \Fqm^{\nmktn}$ and $\bm{y}\in \Fqm^{\nmk}$\\
  The prover wants to convince the verifier that he knows the solution $\bm{x}\in\Fqmn$ of the instance, i.e. such that $\bm{y} = \bm{H}\bm{x}$ \\[\baselineskip]
  \textbf{Step 1: Commitment} \\
  1. The prover builds a set of shares into a $(\ell+1,N)-$threshold secret sharing of $\bm{x}_B$. The prover also build shares of the annihilator polynomial $L=\prod_{u\in U}(X-u)=X^{q^r}+\sum_{i=0}^{r-1}\beta_iX^{q^i}$, a vector $\bm{a}$ uniformly sampled from $\Fqme^r$, and $c=-\langle\bm{\beta},\bm{a}\rangle$. Each party takes one share of the previous elements.\\
  For each $i\in\oneto{N}$ : $\state_i=(\share{\bm{x}_B}_i,\share{\bm{\beta}}_i,\share{\bm{a}}_i,\share{c}_i)$\\
  2. The prover computes the commitments : $\cmt_i=\commit{\state_i,\rho_i}$, where $\rho_i$ is a random tape sampled from $\{0, 1\}^\lambda$ for each $i\in\oneto{N}$.\\
  3. The prover computes and sends the Merkle tree root : $h_0 = \textsf{Merkle}(\cmt_1, \cdots, \cmt_N)$ \\[\baselineskip]
  \textbf{Step 2: First Challenge} \\
  4. The verifier sends $\big( (\gamma_j)_{j \in \oneto{n}}, \epsilon \big) \in \Fqme^n \times \Fqme$ to the prover.\\[\baselineskip]
  \textbf{Step 3: First Response} \\
 5. The prover chooses a public subset $S$ of parties such that $|S| = \ell + 1$: \\
 For each party $i\in S$ : \\
 \pcind \pcind - Compute $\share{\bm{x}_A}_i = \bm{y} - \bm{H}' \share{\bm{x}_B}_i$ and $\share{\bm{x}}_i = (\share{\bm{x}_A}_i \,||\, \share{\bm{x}_B}_i)$ \\
  \pcind \pcind - Compute $\share{z} = -\sum_{j=1}^n\gamma_j\left(\share{x_j}^{q^r}-\share{x_j}\right)$.\\
  \pcind \pcind - Compute  for each $k$ from 1 to $r-1$ : $\share{\omega_k} = \sum_{j=1}^n\gamma_j\left(\share{x_j}^{q^k}-\share{x_j}\right)$\\
 \pcind \pcind - Compute $\share{\boldsymbol{\alpha}}_i = \epsilon \cdot \share{\bm{w}}_i + \share{\bm{a}}_i$ and the parties reveals $\boldsymbol{\alpha}$\\
 \pcind \pcind -  Compute $\share{v}_i = \epsilon \cdot \share{z}_i - \dotp{\boldsymbol{\alpha}, \share{\bm{\beta}}_i} - \share{c}_i$ \\
 6. The prover commits the values : $h_1=\hash \left((\share{\boldsymbol{\alpha}}_i,\share{v}_i)_{i\in S}\right)$ \\ [\baselineskip]
  \textbf{Step 4: Second Challenge} \\
 7. The verifier sends a subset $ I \subset \oneto{N} $ of parties such that $ |I| = \ell $ to the prover \\[\baselineskip]
 \textbf{Step 5: Second Response}\\
 8. The prover sends $(\state_i,\rho_i)_{i\in I}$, and the authentication path to these commitments, to allow the verifier to rebuild the Merkle tree. The verifier deduces a value $\Tilde{h}_0$.\\
 9. The prover chooses a public party $i^*\in S\backslash I$, and sends $\share{\boldsymbol{\alpha}}_{i^*}$ to allow the verifier to recover $\share{\boldsymbol{\alpha}}=\reconstruct_{I\bigcup \{i^*\}}(\share{\boldsymbol{\alpha}}_{I\bigcup \{i^*\}})$\\
 10. The verifier sets $\share{v}_{i^*}$ such that $v=0$.\\
 11. The computes $\Tilde{h}_1=\hash((\share{\boldsymbol{\alpha}}_i,\share{v}_i)_{i\in S})$.\\
 12. The verifier outputs ACCEPT if $(\Tilde{h}_0,\Tilde{h}_1)=(h_0,h_1)$, REJECT otherwise.
}
\vspace{-\baselineskip}
\captionof{figure}{\footnotesize{Zero-knowledge protocol, deduced from the MPC protocol with threshold}}
\label{pzk_thr}
\end{figure}

\subsubsection{Digital signature scheme from Rank-SD problem}

In this section, one applies the Fiat-Shamir transformation to the previous interactive zero-knowledge proof in order to obtain a signature scheme. One transforms the zero-knowledge proof into a non-interactive protocol by deterministically sampling challenges thanks to a hash function. It must be run several times to achieve the desired level of security.

The resulting signature scheme is depicted in the Figure \ref{sign_thr}. 

\begin{figure}[h!]
    \pcb[codesize=\scriptsize, minlineheight=0.75\baselineskip, mode=text, width=0.98\textwidth] { 
  \textbf{Inputs} \\
  \pcind - Secret key $\sk = (\bm{x})$ with $\bm{x} = (\bm{x}_A \,||\, \bm{x}_B) \in \Fqmn$ such that $\rw{\bm{x}} = r$ \\
  \pcind - Public key $\pk = (\bm{H}, \bm{y})$ with $\bm{H} = (\bm{I} ~ \bm{H}') \in \Fqm^{\nmktn}$ and $\bm{y} \in \Fqm^{\nmk}$ such that $\bm{y} = \bm{H}\bm{x}$ \\
  \pcind - Message $m \in \{0, 1\}^*$ \\[\baselineskip]
  \textbf{Step 1: Commitment} \\
  1. Sample a random salt value $\salt \sampler \{0, 1\}^{2\lambda}$ \\
  2. Compute $\bm{\beta} = (\bm{\beta}_k)_{k \in [1, r - 1]} \in \Fqm^r$ the coefficients of the annihilator q-polynomial $L(X)$ associated to $\bm{x}$ such that $L(X)=\prod_{u\in U}(X-u) = X^{q^r} + \sum_{k = 1}^{r - 1} \bm{\beta}_k (X^{q^k}-X)$ and $\forall j \in \oneto{n}, L(\bm{x}_j) = 0$ \\
  3. For each iteration $e \in \oneto{\tau}$: \\
  \pcind $\diamond$ Sample $\bm{a}^{(e)} \sampler \Fqme^{r}$ and compute $c^{(e)}$ such that $c^{(e)} \in \Fqme$ and $c^{(e)} = - \dotp{\bm{\beta}, \bm{a}^{(e)}}$ \\
  \pcind $\diamond$ For each party $i \in \oneto{N}$: \\
  \pcind \pcind - Compute the $(\ell + 1, N)$-threshold LSSS $\share{\bm{x}_B^{(e)}}, \share{\bm{\beta}^{(e)}}, \share{\bm{a}^{(e)}}, \share{c^{(e)}}$ of $\bm{x}_B, \bm{\beta}, \bm{a}^{(e)}$ and $c^{(e)}$ \\
  \pcind \pcind - Compute $\state^{(e)}_i = (\share{\bm{x}^{(e)}_B}_i, \share{\bm{\beta}^{(e)}}_i, \share{\bm{a}^{(e)}}_i, \share{c^{(e)}}_i)$ and $\cmt^{(e)}_i = \hash_0(\salt, e, i, \state^{(e)}_i)$ \\
  \pcind $\diamond$ Compute the Merkle tree root $h^{(e)}_0 = \textsf{Merkle}(\cmt^{(e)}_1, \cdots, \cmt^{(e)}_N)$ \\
  4. Compute $h_1 = \hash_1(m, \pk, \salt, (h^{(e)}_0)_{e \in \oneto{\tau}})$ \\[\baselineskip]
  \textbf{Step 2: First Challenge} \\
  5. Extend hash $\big( (\gamma^{(e)}_j)_{j \in \oneto{n}}, \epsilon^{(e)} \big)_{e \in \oneto{\tau}} \leftarrow PRG(h_1)$ where $\big( (\gamma^{(e)}_j)_{j \in \oneto{n}}, \epsilon^{(e)} \big)_{e \in \oneto{\tau}}\in (\Fqme^n \times \Fqme)^{\tau}$ \\[\baselineskip]
  \textbf{Step 3: First Response} \\
  6. For each iteration $e \in \oneto{\tau}$: \\
  \pcind $\diamond$ For each party $i \in S$ with $S$ a public subset of parties such that $|S| = \ell + 1$: \\
  \pcind \pcind - Compute $\share{\bm{x}_A^{(e)}}_i = \bm{y} - \bm{H}' \share{\bm{x}_B^{(e)}}_i$ and $\share{\bm{x}^{(e)}}_i = (\share{\bm{x}_A^{(e)}}_i \,||\, \share{\bm{x}_B^{(e)}}_i)$ \\
  \pcind \pcind - Compute $\share{z^{(e)}}_i = - \sum\nolimits_{j = 1}^{n} \gamma_j \left(\share{x_j^{(e)}}_i^{q^r}-\share{x_j^{(e)}}_i\right)$ and $\forall k \in [1, r - 1], \share{w_k^{(e)}}_i = \sum\nolimits_{j = 1}^{n} \left(\gamma_j \share{x_j^{(e)}}_i^{q^k}-\share{x_j^{(e)}}_i\right)$ \\
  \pcind \pcind - Compute $\share{\boldsymbol{\alpha}^{(e)}}_i = \epsilon^{(e)} \cdot \share{\bm{w}^{(e)}}_i + \share{\bm{a}^{(e)}}_i$ \\
  \pcind \pcind - Compute  $\share{v^{(e)}}_i = \epsilon^{(e)} \cdot \share{z^{(e)}}_i - \dotp{\boldsymbol{\alpha}^{(e)}, \share{\bm{\beta}^{(e)}}_i} - \share{c^{(e)}}_i$ \\
  7. Compute $h_2 = \hash_2(m, \pk, \salt, h_1, (\share{\boldsymbol{\alpha}^{(e)}}_i, \share{v^{(e)}}_i)_{i \in S, e \in \oneto{\tau}})$ \\[\baselineskip]
  \textbf{Step 4: Second Challenge} \\
  8. Extend hash $(I^{(e)})_{e \in \oneto{\tau}} \leftarrow PRG(h_2)$ where $ (\{ I \subset N ~|~ |I| = \ell \})^{\tau}$ \\[\baselineskip]
  \textbf{Step 5: Second Response} \\
  9. For each iteration $e \in \oneto{\tau}$: \\
  \pcind $\diamond$ Choose deterministically a party $i^{*(e)} \in S \, \backslash \, I^{(e)}$ and compute $\share{\boldsymbol{\alpha}^{(e)}}_{i^{*(e)}}$ \\
  \pcind $\diamond$ Compute the authentication path $\mathsf{auth}^{(e)}$ associated to root $h^{(e)}_0$ and $(\cmt^{(e)}_i)_{i \in I^{(e)}}$  \\
  \pcind $\diamond$ Compute $\rsp^{(e)} = \big( (\share{\bm{x}^{(e)}_B}_i, \share{\bm{\beta}^{(e)}}_i, \share{\bm{a}^{(e)}}_i, \share{c^{(e)}}_i)_{i \in I^{(e)}}, \mathsf{auth}^{(e)}, \share{\boldsymbol{\alpha}^{(e)}}_{i^{*(e)}} \big)$ \\
  10. Compute $\sigma = (\salt, h_1, h_2, (\rsp^{(e)})_{e \in \oneto{\tau}})$
}
\vspace{-\baselineskip}
\captionof{figure}{\footnotesize{Rank based signature scheme with threshold - Signing algorithm}}
\label{sign_thr}
\end{figure}

An entity that wants to verify the signature must then recompute the challenges using the hashes $h_1$ and $h_2$ provided in the signature. For each iteration $e\in\oneto{\tau}$, the prover sends $\rsp^{(e)}$, which includes all the shares belonging to the parties in $I$, and $\share{\boldsymbol{\alpha}^{(e)}}_{i^{*(e)}}$ which allows to recompute $\boldsymbol{\alpha}^{(e)}$. As a result, this information is sufficient for the verifier to check the hashes. Since hash functions are assumed to be collision free, checking equality between the hashes is sufficient to assume the validity of the signature.

The verification algorithm is formally detailed in the Figure \ref{ver_thr}.

\begin{figure}[h!]
    \pcb[codesize=\scriptsize, minlineheight=0.75\baselineskip, mode=text, width=0.98\textwidth] { 
  \textbf{Inputs} \\
  \pcind - Public key $\pk = (\bm{H}, \bm{y})$ with $\bm{H} = (\bm{I} ~ \bm{H}') \in \Fqm^{\nmktn}$ and $\bm{y} \in \Fqm^{\nmk}$ such that $\bm{y} = \bm{H}\bm{x}$ \\
  \pcind - Message $m \in \{0, 1\}^*$ \\
  \pcind - Signature $\sigma = (\salt, h_1, h_2, (\bar{\rsp}^{(e)})_{e \in \oneto{\tau}})$ \\[\baselineskip]
  \textbf{Step 1: Parse signature} \\
  1. Sample $\big( (\gamma^{(e)}_j)_{j \in \oneto{n}}, \epsilon^{(e)} \big)_{e \in \oneto{\tau}} \samples{h_1} (\Fqme^n \times \Fqme)^{\tau}$ \\
  2. Sample $(I^{(e)})_{e \in \oneto{\tau}} \samples{h_2} (\{ I \subset N ~|~ |I| = \ell \})^{\tau}$ \\
  3. For each iteration $e \in \oneto{\tau}$: \\
  \pcind $\diamond$ Choose deterministically $i^{*(e)}$ from $S\setminus I^{(e)}$ \\
  \pcind $\diamond$ Parse $\bar{\rsp}^{(e)} := \big( (\share{\bar{\bm{x}}^{(e)}_B}_i, \share{\bar{\bm{\beta}}^{(e)}}_i, \share{\bar{\bm{a}}^{(e)}}_i, \share{\bar{c}^{(e)}}_i)_{i \in \bar{I}^{(e)}}, \bar{\mathsf{auth}}^{(e)}, \share{\bar{\boldsymbol{\alpha}}^{(e)}}_{i^{*(e)}} \big)$ \\[\baselineskip]
  \textbf{Step 1: Recompute $\bm{h}_1$} \\
  4. For each iteration $e \in \oneto{\tau}$: \\
  \pcind $\diamond$ For each party $i \in I^{(e)}$: \\
  \pcind \pcind - Compute $\bar{\state}^{(e)}_i = (\share{\bar{\bm{x}}^{(e)}_B}_i, \share{\bar{\bm{\beta}}^{(e)}}_i, \share{\bar{\bm{a}}^{(e)}}_i, \share{\bar{c}^{(e)}}_i)$ and $\bar{\cmt}^{(e)}_i = \hash_0(\salt, e, i, \bar{\state}^{(e)}_i)$ \\
  \pcind $\diamond$ Compute the Merkle tree root $\bar{h}^{(e)}_0$ from $(\bar{\cmt}^{(e)}_i)_{i \in I^{(e)}}$ and $\bar{\mathsf{auth}}^{(e)}$ \\
  5. Compute $\bar{h}_1 = \hash_1(m, \pk, \salt, (\bar{h}^{(e)}_0)_{e \in \oneto{\tau}})$ \\[\baselineskip]
  \textbf{Step 2: Recompute $\bm{h}_2$} \\
  6. For each iteration $e \in \oneto{\tau}$: \\
  \pcind $\diamond$ For each party $i \in I^{(e)}$: \\
  \pcind \pcind - Compute $\share{\bar{\bm{x}}_A^{(e)}}_i = \bm{y} - \bm{H}' \share{\bar{\bm{x}}_B^{(e)}}_i$ and $\share{\bar{\bm{x}}^{(e)}}_i = (\share{\bar{\bm{x}}_A^{(e)}}_i \,||\, \share{\bar{\bm{x}}_B^{(e)}}_i)$ \\
  \pcind \pcind - Compute $\share{\bar{z}^{(e)}}_i = - \sum\nolimits_{j = 1}^{n} \gamma_j\left( \share{\bar{x}_j^{(e)}}_i^{q^r}-\share{\bar{x}_j^{(e)}}_i\right)$ and $\forall k \in [1, r - 1], \share{\bar{w}_k^{(e)}}_i = \sum\nolimits_{j = 1}^{n} \gamma_j \left(\share{\bar{x}_j^{(e)}}_i^{q^k}-\share{\bar{x}_j^{(e)}}_i\right)$ \\
  \pcind \pcind - Compute $\share{\bar{\boldsymbol{\alpha}}^{(e)}}_i = \epsilon^{(e)} \cdot \share{\bar{\bm{w}}^{(e)}}_i + \share{\bar{\bm{a}}^{(e)}}_i$ \\
  \pcind $\diamond$ Reconstruct $\bar{\boldsymbol{\alpha}}^{(e)}$ and $(\share{\bar{\boldsymbol{\alpha}}^{(e)}}_i)_{i \in S}$ from $(\share{\bar{\boldsymbol{\alpha}}^{(e)}}_i)_{i \in I^{(e)}}$ and $\share{\bar{\boldsymbol{\alpha}}^{(e)}}_{i^{*(e)}}$ \\
  7. For each iteration $e \in \oneto{\tau}$: \\
  \pcind $\diamond$ For each party $i \in I^{(e)}$: \\
  \pcind \pcind - Compute $\share{\bar{v}^{(e)}}_i = \epsilon^{(e)} \cdot \share{\bar{z}^{(e)}}_i - \dotp{\bar{\boldsymbol{\alpha}}^{(e)}, \share{\bar{\bm{\beta}}^{(e)}}_i} - \share{\bar{c}^{(e)}}_i$ \\
  \pcind $\diamond$ Reconstruct $(\share{\bar{v}^{(e)}}_i)_{i \in S}$ from $(\share{\bar{v}^{(e)}}_i)_{i \in I^{(e)}}$ and $\bar{v}^{(e)} = 0$ \\
  8. Compute $\bar{h}_2 = \hash_2(m, \pk, \salt, \bar{h}_1, (\share{\bar{\boldsymbol{\alpha}}^{(e)}}_i, \share{\bar{v}^{(e)}}_i)_{i \in S, e \in \oneto{\tau}})$ \\[\baselineskip]
  \textbf{Step 5: Verify signature} \\
  9. Return $(\bar{h}_1 = h_1) \wedge (\bar{h}_2 = h_2)$
}
\vspace{-\baselineskip}
\captionof{figure}{\footnotesize{Rank based signature scheme with threshold - Verification algorithm}}
\label{ver_thr}
\end{figure}

As for the additive sharing case, we prove in the Section \ref{sec_sgn} that this scheme guarantees good security against EUF-CMA attack.

\subsubsection{Using Shamir secret sharing}
\label{sec:using-sss}

In practice, we use the $(\ell+1)$-threshold Shamir's secret sharing scheme as LSSS. Thus the shares of a secret value $s\in\mathbb{F}_q$ are defined as $\share{s}_i=P(e_i)$, where $P$ is a polynomial whose constant term is equal to $s$, and $e_1,...,e_N\in\mathbb{F}_q$ are public non zero distinct points.
However, since each share corresponds to the evaluation of a polynomial into a distinct point of $\mathbb{F}_q$, we get that the number $N$ of parties is upper bounded by $q$. To have interesting performances, it means that $q$ must be large (for example, $q=256$).

It is possible to mitigate this constraint. Let us assume that we want to share values directly from $\Fqm$ (instead of $\mathbb{F}_q$). In that case, $e_1,\ldots,e_N$ would belong to $\Fqm$ and we would have $N \leq q^m$ instead of $N \leq q$. The only issue comes from the usage of the Frobenius endormorphism which is $\mathbb{F}_q$-linear and not $\Fqm$-linear. We can remark that, from a sharing $\share{s}$ of $s$, each party $i$ can obtain a sharing of $s^q$ by simply computing $\share{s}_i^q$. However, the evaluation points of the obtained sharing and the encoding polynomial changed: they become $e_1^q,...,e_N^q$. Indeed,
\begin{align*}
    P(X)^q &= \left(s+\sum_{i=1}^\ell r_iX^i\right)^q\\
    &= s^q+\sum_{i=1}^\ell r_i^q(X^q)^i\\
    &= \Tilde{P}(X^q)\qquad\text{where}\quad \Tilde{P}=s^q+\sum_{i=1}^\ell r_i^qX^i
\end{align*}
One can add two shares if and only if they have the same evaluation points. The optimized MPC protocol described used in Figure~\ref{sign_thr} does not satisfy this property: we can not set the input sharings such that we have the guarantee that addition is performed only on sharings with the same evaluation points. However, if we want to have a scheme relying on Threshold-MPCitH with a small $q$ (for example, with $q=2$), we can use the MPC protocol proposed in \cite{F22}. The latter satisfies the desired properties, see~\cite[Appendix A]{F22} for details.


\section{Parameter sets}

\subsection{Choice of parameters}

The problem Rank Syndrome Decoding is parameterized by the following parameters: \begin{itemize}
    \item $q\in\mathbb{N}$ - the order of the base field on which the problem is based
    \item $m\in\mathbb{N}$ - the degree of the considered extension of the above field 
    \item $n\in\mathbb{N}$ - the length of the code $\mathcal{C}$
    \item $k\in\mathbb{N}$ - the dimension of the code
    \item $r\in\mathbb{N}$ - the rank of the vector $\bm{x}$
\end{itemize}

The other parameters of the scheme are:\begin{itemize}
    \item $N\in\mathbb{N}$ - the number of parties simulated in the MPC protocol
    \item $\tau\in\mathbb{N}$ - the  number of rounds in the signature
    \item $\eta\in\mathbb{N}$ - allowing to build $\Fqme$
\end{itemize}

In order to choose the parameters, we need to consider:
\begin{itemize}
    \item The security of the Rank-SD instance, i.e, the complexity of the attacks on the chosen parameters.
    \item The security of the signature scheme, i.e, the cost of a forgery.
    \item The size of the signature
\end{itemize}

Since the protocol is more efficient as $N$ grows (up to some point), we need to take a larger value of $q$ in the threshold variant. To have a threshold scheme with $N=256$ shares, we must take $q=256$ since we necessarily have $N \le q$ (in the Shamir Secret Sharing scheme, we can't have more shares than the number of elements in the base field; this means we can't have more parties than the cardinal of $\Fq$ in our case).

In additive sharing, we take the value of $q=2$, since it is the most efficient and we have no constraints. We propose here two versions of the signature scheme with different values for $N$: we set $N = 256$ for a short version of the signature, and $N = 32$ for a fast one.

The choice of $(q,m,n,k,r)$ was then made in order to achieve the target security in regards to the attacks we present in Subsection \ref{atk_rsd}.

As for the choice of $\tau$ and $\eta$, we need to choose them such that our signature scheme resists to the forgery attack described by \cite{KZ20}, which we explain in Subsection \ref{atk_fs}. Because of this, in the additive protocol optimized with the hypercube technique, we need to set $\tau$ such that:
\begin{equation} \text{cost}_{\text{forge}}=\min_{0\leq\tau'\leq\tau}\left\lbrace\dfrac{1}{\sum_{i=\tau'}^\tau \binom{\tau}{i}p^i(1-p)^{\tau-i}}+N^{\tau-\tau'}\right\rbrace
\label{kz_eq_hyp}\end{equation} is higher than $2^\lambda$, with $p = \frac{2}{q^{m\eta}}-\frac{1}{q^{2m\eta}}$. The value depends on $\tau$ and $\eta$, we then need to take them such that the signature is safe, and as small as possible, given the parameters of the Rank-SD instance.

In the case of the threshold protocol, this formula changes:\begin{equation}
\text{cost}_{\text{forge}}=\min_{0\leq\tau'\leq\tau}\left\lbrace\dfrac{1}{\sum_{i=\tau'}^\tau \binom{\tau}{i}p^i(1-p)^{\tau-i}}+\binom{N}{\ell}^{\tau-\tau'}\right\rbrace \label{kz_eq_th}\end{equation}  with $p=( \frac{2}{q^{m\eta}}-\frac{1}{q^{2m\eta}})\cdot \binom{N}{\ell+1}$, see~\cite{FR22} for details. \\
In any case, as long as the cost of the forgery is high enough, we can just take the parameters for which the size of signature is the smallest.

We propose the following parameters for a security of $\lambda=128$ bits: \begin{itemize}
    \item For the additive scheme: $(q,m,n,k,r)=(2,31,33,15,10)$
    \item For the threshold scheme with $\ell=3$: $(q,m,n,k,r)=(256,11,12,5,5)$
    \item For the threshold scheme with $\ell=3$ with small $q$: $(q,m,n,k,r)=(2,31,29,14,10)$
\end{itemize}

\subsection{Signature and key sizes}

\subsubsection{Additive MPC}

We must send for each $e\in\oneto{\tau}$ the following elements: $\cmt_{i^*}\in\{0,1\}^{2\lambda}$ and $(\state^{(e)}_i)_{i\neq i^*}$ (which consists in seeds in $\{0,1\}^{2\lambda}$ or the following elements for one of them: $\share{\bm{x}_B}\in\Fqm^k$, $\share{\bm{\alpha}}\in\Fqme^{r-1}$, $\share{\bm{\beta}}\in\Fqm^{r-1}$, $\share{c}\in\Fqme$); and $(\salt\vert h_1\vert h_2)\in\{ 0,1\}^{6\lambda}$.

Note also that for each $e\in\oneto{\tau}$, we do not send all the $N-1$ seeds $(\state^{(e)}_i)$, since we use a tree-PRG (noted TPRG in the figures) to generate them: starting from the root seed, each seed is recursively expanded in two child seeds until we obtain $N$ seeds. We obtain a tree of depth $\log_2 N$, and we only reveal the $\log_2 N$ sibling nodes of the root-leaf path of the hidden leaf $i^*$ to reveal all the leaves except $i^*$.

We can now compute the communication cost of the protocol (in bits):
\begin{equation*}
\text{$|\sigma|$} = \underbrace{6\lambda}_\text{$\salt,h1,h2$}+\: \tau \cdot \left( \Big(\underbrace{(r-1) m \eta}_\text{$\bm{\alpha}$} +\underbrace{k m}_\text{$\bm{x}_B$}+\underbrace{(r-1)m}_\text{$\bm{\beta}$}+\underbrace{m \eta}_\text{$c$}\Big)\cdot \log_2 q + \underbrace{2\lambda + (\log_2N)\lambda}_\text{additive MPCitH}\right)\\
\end{equation*}

Parameters are chosen such the Rank-SD problem is secure against the existing attacks (see Section \ref{atk_rsd}). We recall that the security levels of the signature I, III and V correspond respectively to the security of \texttt{AES-128}, \texttt{AES-192} and \texttt{AES-256}. Tables \ref{table_add_short} and \ref{table_add_fast} give the theoretical signature size for different set of secure parameters in the case of additive shares: \begin{table}[H]
\begin{center}
{\setlength{\tabcolsep}{0.4em}
{\renewcommand{\arraystretch}{1.6}
{\scriptsize
  \begin{tabular}{|c||c|c|c|c|c||c|c|c|c||c|c|}
    \hline
    NIST security level & $q$ & $m$ & $n$ & $k$ & $r$ & $N$ & $\eta$ & $\tau$ & $\pk$ & $\sigma$ \\ \hline
    
    1 & 2 & 31 & 33 & 15 & 10 & 256  & 1 & 20 & 0.1 kB & 5.9 kB \\ \hline
    3 & 2 & 37 & 41 & 18 & 13 & 256  & 1 & 29 & 0.1 kB & 12.9 kB \\ \hline
    5 & 2 & 43 & 47 & 18 & 17 & 256  & 1 & 38 & 0.1 kB & 22.8 kB \\ \hline
  \end{tabular}
  \vspace{0.5\baselineskip}
  \caption{Parameters for additive MPC on hypercube, short signature}
  \label{table_add_short}
}}}
\end{center}
\end{table}

\begin{table}[H]
\begin{center}
{\setlength{\tabcolsep}{0.4em}
{\renewcommand{\arraystretch}{1.6}
{\scriptsize
  \begin{tabular}{|c||c|c|c|c|c||c|c|c|c||c|c|}
    \hline
    NIST security level & $q$ & $m$ & $n$ & $k$ & $r$ & $N$ & $\eta$ & $\tau$ & $\pk$ & $\sigma$ \\ \hline
    
    1 & 2 & 31 & 33 & 15 & 10 & 32  & 1 & 30 & 0.1 kB & 7.4 kB \\ \hline
    3 & 2 & 37 & 41 & 18 & 13 & 32  & 1 & 44 & 0.1 kB & 16.4 kB \\ \hline
    5 & 2 & 43 & 47 & 18 & 17 & 32  & 1 & 58 & 0.1 kB & 29.1 kB \\ \hline
  \end{tabular}
  \vspace{0.5\baselineskip}
  \caption{Parameters for additive MPC on hypercube, fast signature}
  \label{table_add_fast}
}}}
\end{center}
\end{table}

\subsubsection{Threshold MPC}

In the threshold case, the process is similar to the additive one above: we must send for each $e\in\oneto{\tau}$ the following elements: $h\in\{0,1\}^{2\lambda}$ (an hash used) and $\ell$ times all the elements$(\state^{(e)}_i)_{i\neq i^*}$; and one time $(\salt\vert h_1\vert h_2)\in\{ 0,1\}^{6\lambda}$.


The signature size of the scheme is given by
\begin{equation*}
    |\sigma| \leq \underbrace{6 \lambda}_{\salt,h1,h2} +\: \tau \cdot \Big( \ell \cdot \big(\underbrace{km}_{\bm{x}_B} + \underbrace{(r-1)m\eta}_\text{${\bm{\alpha}}$} + \underbrace{(r-1)m}_\text{$\bm{\beta}$} + \underbrace{r m\eta}_\text{$\bm{a}$ and $c$}\big)\cdot \log_2(q) + \underbrace{2\lambda\cdot \ell \log_2(\frac{N}{\ell})}_\text{Threshold MPCitH} \Big)
\end{equation*}
Let us recall that the scheme requires that $N \leq q$. If we want to use a small $q$, we need to replace our MPC protocol with the one described in \cite{F22} (see Section~\ref{sec:using-sss} for details). In that case, the signature size is given by
\begin{equation*}
    |\sigma| \leq \underbrace{6 \lambda}_{\salt,h1,h2} +\: \tau \cdot \Big( \ell \cdot \big(\underbrace{km}_{\bm{x}_B} + \underbrace{rm\eta}_\text{${\bm{\alpha}}$} + \underbrace{(r-1)m}_\text{$\bm{\beta}$} + \underbrace{r m\eta}_\text{$\bm{a}$ and $c$}\big)\cdot \log_2(q) + \underbrace{2\lambda\cdot \ell \log_2(\frac{N}{\ell})}_\text{Threshold MPCitH} \Big)
\end{equation*}

Tables \ref{table_thr_l3} and \ref{table_thr_l1} give the theoretical signature size for different set of secure parameters in the case of threshold shares for $\ell=3$ and $\ell=1$:
\begin{table}[H]
\begin{center}
{\renewcommand{\arraystretch}{1.6}
{\setlength{\tabcolsep}{0.4em}
{\scriptsize
  \begin{tabular}{|c||c|c|c|c|c||c|c|c|c||c|c|}
    \hline
    NIST security level & $q$ & $m$ & $n$ & $k$ & $r$ & $N$ & $\eta$ & $\tau$ & $\pk$ & $\sigma$ \\ \hline
    
    1 & 256 & 11 & 12 & 5 & 5 & 256  & 2 & 6 & 0.1 kB & 8.2 kB \\ \hline
    3 & 256 & 13 & 17 & 7 & 6 & 256  & 1 & 11 & 0.1 kB & 18.3 kB \\ \hline
    5 & 256 & 17 & 17 & 7 & 7 & 256  & 3 & 14 & 0.1 kB & 32.5 kB \\ \hline
  \end{tabular}
  \vspace{0.5\baselineskip}
  \caption{Parameters for threshold with $\ell=3$, $q=256$}
  \label{table_thr_l3}
}}}
\end{center}
\end{table}

\begin{table}[H]
\begin{center}
{\renewcommand{\arraystretch}{1.6}
{\setlength{\tabcolsep}{0.4em}
{\scriptsize
  \begin{tabular}{|c||c|c|c|c|c||c|c|c|c||c|c|}
    \hline
    NIST security level & $q$ & $m$ & $n$ & $k$ & $r$ & $N$ & $\eta$ & $\tau$ & $\pk$ & $\sigma$ \\ \hline
    
    1 & 2 & 31 & 33 & 15 & 10 & 256  & 2 & 18 & 0.1 kB & 9.3 kB \\ \hline
    3 & 2 & 37 & 41 & 18 & 13 & 256  & 2 & 27 & 0.1 kB & 21.4 kB \\ \hline
    5 & 2 & 43 & 47 & 18 & 17 & 256  & 2 & 35 & 0.1 kB & 34.8 kB \\ \hline
  \end{tabular}
  \vspace{0.5\baselineskip}
  \caption{Parameters for threshold with $\ell=1$, $q=2$}
  \label{table_thr_l1}
}}}
\end{center}
\end{table}

\section{Security analysis}

\subsection{Security proof for the proof of knowledge}

\subsubsection{Additive MPC}

We prove here that our protocol fulfils the three required properties: completeness, soundness and zero-knowledge. The proofs of this section are close to those in \cite{AGHHJY22} (which are themselves close to \cite{FJR22}) due to the same structure of hypercube used, as only the MPC protocol changed.

We firstly prove that an honest prover is always accepted. Conversely, a prover who commits a bad witness in the first step of the protocol has a probability lower than $\epsilon=\left(p+(1-p)\frac{1}{N}\right)$ of being accepted. Consequently, a prover that has a higher probability of acceptance knows the secret.

\begin{theorem}
The protocol in Figure \ref{pzk_hypercube} is perfectly complete. Namely, a prover that knows $x$ and executes the protocol correctly is accepted by a verifier with probability 1.
\end{theorem}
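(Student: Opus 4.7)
The plan is to unpack what it means for a prover that knows $\bm{x}$ to follow the protocol honestly, and then verify each of the three checks performed by the verifier at step 15.

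First, I would observe that by construction the leaf shares $(\share{\bm{x}_B}_i,\share{\bm{\beta}}_i,\share{\bm{a}}_i,\share{c}_i)_{i\in\oneto{N}}$ form a valid additive sharing of $(\bm{x}_B,\bm{\beta},\bm{a},c)$, where $\bm{\beta}$ are the coefficients of the annihilator $q$-polynomial $L_U$ of $\supp(\bm{x})$ (so $L_U(x_j)=0$ for all $j$) and $c=-\langle\bm{\beta},\bm{a}\rangle$. Since addition is linear and the partitions $\{i:i_d=k\}_{k\in\oneto{2}}$ of $\oneto{N}$ are exact, for every dimension $d$ the two main-party shares $(\share{\cdot}_{(d,1)},\share{\cdot}_{(d,2)})$ also form a valid $(2,2)$-additive sharing of the same values. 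Hence in each dimension the MPC protocol $\Pi^\eta$ is executed on a legitimate additive sharing of the honest inputs.

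Next, I would invoke the completeness of $\Pi^\eta$ (the proposition from \cite{F22} recalled in the excerpt): since $W_R(\bm{x})\le r$, for every dimension $d$ the reconstructed value $v^{(d)}$ equals $0$, and the reconstructed value $\bm{\alpha}^{(d)}=\varepsilon\cdot\bm{\omega}+\bm{a}$ depends only on the plain secret and the public challenge, not on the sharing. Therefore $\bm{\alpha}^{(d)}$ is the same across all $D$ dimensions and all $v^{(d)}$ vanish, which immediately yields the first two verifier checks of step 15.

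Finally, I would argue that the verifier's recomputation in Figure~\ref{check_pi} reproduces exactly the prover's commitments $H_d$. For the main party $(d,k)$ that does \emph{not} contain the hidden leaf $i^*$, the verifier can recompute its share from the revealed leaves in that half of the hypercube and thus obtains $\share{\bm{\alpha}}_{(d,k)}$ and $\share{v}_{(d,k)}$ identically. For the other main party, the verifier is given $\share{\bm{\alpha}}_{i^*}$ directly (via $\rsp^{(e)}$) and sets $\share{v}_{i^*}$ so that the two main shares sum to $v=0$; since the honest prover's corresponding share also yields $v=0$ by the previous paragraph, the reconstructed pair coincides with the prover's. Thus each $H_d$ matches, hence $h_1$ matches, and all step-15 checks succeed with probability $1$. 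The one point requiring care --- and the only mildly non-routine step --- is the bookkeeping for the main party containing $i^*$: one must check that ``setting $\share{v}_{i^*}$ so $v=0$'' really matches what the honest prover committed, which follows precisely from $v^{(d)}=0$ in the honest execution.
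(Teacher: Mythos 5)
Your proof is correct. The paper's own argument is a one-sentence appeal ("By construction, if the prover has knowledge of a solution of the Rank-SD instance, he is always able to execute the protocol correctly"), and your proposal is simply that argument with the details actually filled in: valid additive sharings at the leaf and main-party levels, completeness of $\Pi^\eta$ forcing $v^{(d)}=0$ and a dimension-independent $\bm{\alpha}$, and the verifier's recomputation in Figure~\ref{check_pi} matching the prover's commitments — with the correct observation that the only step requiring care is that the verifier's ``set $\share{v}$ so $v=0$'' rule for the main party containing $i^*$ coincides with the honest share precisely because $v=0$ holds. Same approach as the paper, just made explicit.
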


\begin{proof}
    By construction, if the prover has knowledge of a solution of the Rank-SD instance, he is always able to execute the protocol correctly.
\end{proof}

\begin{theorem} \label{sound_hyp}
    If an efficient prover $\pt$ with knowledge of only public data $(\bm{H},\bm{y})$ can convince a verifier $\ver$ with probability: $$\Tilde{\epsilon}=\prb\big(\langle\pt,\ver\rangle\rightarrow ACCEPT\big)>\epsilon = p+(1-p)\dfrac{1}{N}$$
    then there exists an extraction function $\mathcal{E}$ that produces a commitment collision, or a good witness $\bm{x'}$ solution of the Rank-SD instance, by making an average number of calls to $\pt$ upper bounded by: $$\dfrac{4}{\Tilde{\epsilon}-\epsilon}\left(1+\dfrac{2\Tilde{\epsilon}\ln 2}{\Tilde{\epsilon}-\epsilon}\right)$$
\end{theorem}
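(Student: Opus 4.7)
The plan is to construct an extractor $\mathcal{E}$ that, given rewindable black-box access to $\tilde{\mathcal{P}}$, runs it several times with fresh challenges and combines accepting transcripts to either exhibit a commitment collision or reconstruct a valid Rank-SD witness. The argument closely mirrors the soundness proofs of the hypercube MPCitH schemes in \cite{AGHHJY22,FJR22}, the only real difference being the underlying MPC protocol $\Pi^\eta$ of Figure~\ref{mpc_opti}. The reasoning splits naturally into three parts: first, isolating a set of ``heavy'' first challenges via the Splitting Lemma; second, rewinding on the second challenge $i^{*}$ to obtain two accepting transcripts sharing the same prefix but differing on the hidden leaf; third, using those transcripts either to produce a collision or to recover the witness.

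First I view the randomness of an interaction as a pair $(X,Y)$ where $X$ collects the prover's internal coins together with the first challenge $((\gamma_j)_{j},\epsilon)$, and $Y$ is the second challenge $i^{*}\in\oneto{N}$. Let $A$ be the set of $(X,Y)$ leading to acceptance; by hypothesis $\Pr[A]=\tilde{\epsilon}$. Instantiating the Splitting Lemma with $\alpha=(\tilde{\epsilon}-\epsilon)/2$ yields a set $B$ of heavy $X$-values satisfying $\Pr[B\mid A]\geq \alpha/\tilde{\epsilon}$ and, for each $x\in B$, a conditional success probability $\Pr[A\mid X{=}x]\geq \epsilon+\alpha>1/N$. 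Since this conditional probability is an integer multiple of $1/N$, strictly more than one value of $i^{*}$ must be accepting whenever $X\in B$.

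The extractor therefore queries $\tilde{\mathcal{P}}$ until it obtains a first accepting transcript $T_{0}$ (expected cost $1/\tilde{\epsilon}$), then rewinds just past the second challenge and samples fresh values $i^{*\prime}\neq i^{*}$ until it obtains a second accepting transcript $T_{1}$ with the same prefix (identical $h_{0}$, same first challenge and same first response). Conditioning on $X\in B$, a standard rewinding analysis in the style of \cite{AFK21} bounds the expected number of additional calls by $O(1/(\tilde{\epsilon}-\epsilon))$; the explicit constants in $\frac{4}{\tilde{\epsilon}-\epsilon}\bigl(1+\frac{2\tilde{\epsilon}\ln 2}{\tilde{\epsilon}-\epsilon}\bigr)$ arise from choosing the optimal cut-off on the number of rewinds together with a $\ln 2$ factor coming from the required $1/2$ success probability in the definition of an extractor. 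Once $T_{0}$ and $T_{1}$ are in hand, the two sets of opened leaves (of size $N-1$ each) cover all $N$ leaves; combining the responses lets $\mathcal{E}$ recover every leaf state, and by additive summation the full sharings $\share{\bm{x}_B},\share{\bm{\beta}},\share{\bm{a}},\share{c}$ together with reconstructed values $\bm{x}_B^{*},\bm{\beta}^{*},\bm{a}^{*},c^{*}$.

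Finally $\mathcal{E}$ forms $\bm{x}^{*}=(\bm{y}-\bm{H}'\bm{x}_B^{*}\parallel\bm{x}_B^{*})$, which automatically satisfies $\bm{H}\bm{x}^{*}=\bm{y}$. If any of the recomputed party states disagrees with the corresponding leaf commitment $\cmt_{i}$ that the verifier checked in one of the two transcripts, then $\mathcal{E}$ has found a commitment collision and stops. Otherwise I must show $W_R(\bm{x}^{*})\leq r$: if it were not, then for uniformly random first challenges the false-positive bound on $\Pi^\eta$ yields $\Pr[\text{accept}\mid \bm{x}^{*}\text{ bad}]\leq p+(1-p)/N=\epsilon$, so the marginal success probability over prover coins that commit to a bad $\bm{x}^{*}$ is at most $\epsilon<\tilde{\epsilon}$, forcing a constant fraction of the accepting transcripts to come from coins committing to a valid witness; conditioning on $X\in B$ then guarantees that $\bm{x}^{*}$ is a genuine Rank-SD solution with constant probability. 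The main obstacle is the probabilistic bookkeeping needed to convert these rough estimates into the exact stated expected-call bound: this requires the two-phase analysis of \cite{AFK21,FJR22}, splitting the expectation according to whether the first accepting transcript falls in the heavy set $B$ and bounding tightly the geometric tail on the number of rewinds before a matching second transcript is obtained.
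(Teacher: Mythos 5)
Your high-level strategy matches the paper's almost exactly: invoke the Splitting Lemma with $\alpha = (\tilde\epsilon - \epsilon)/2$ to isolate a set $B$ of heavy prefixes, rewind on the second challenge $i^{*}$ to obtain two accepting transcripts sharing the same commitment and first challenge but with distinct $i^{*}$, and then either exhibit a commitment collision or reconstruct all $N$ leaf states (hence the full sharings and the witness). Your integrality observation --- that $\Pr[A \mid X = x]$ is a multiple of $1/N$, so a heavy prefix necessarily admits at least two accepting values of $i^{*}$ --- is a nice touch that the paper does not make explicit, though it does not by itself replace the tail analysis. Your argument that a heavy prefix forces the reconstructed $\bm{x}^{*}$ to satisfy $W_R(\bm{x}^{*})\leq r$ (since a bad witness caps the per-execution acceptance probability at $\epsilon$) is correct and mirrors the paper's.

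The genuine gap is that you assert but do not derive the expected-call bound $\frac{4}{\tilde\epsilon-\epsilon}\left(1+\frac{2\tilde\epsilon\ln 2}{\tilde\epsilon-\epsilon}\right)$, which is the quantitative content of the theorem; you correctly identify this as ``the main obstacle'' but leave it unexecuted. The paper derives it in three concrete steps. First, for a free parameter $\delta\in(0,1)$ with $(1-\delta)\tilde\epsilon>\epsilon$, it defines a prefix $r_h$ to be good when $\Pr(\mathrm{Succ}\mid r_h) > (1-\delta)\tilde\epsilon$, and shows that conditioned on a good prefix the per-try probability of finding a second accepting transcript with $i_2^{*}\neq i_1^{*}$ is at least $(1-\delta)\tilde\epsilon - \epsilon$; hence after $L \approx \ln 2 / \bigl((1-\delta)\tilde\epsilon - \epsilon\bigr)$ rewinds one succeeds with probability at least $1/2$. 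Second, it sets up and solves the recurrence
$$\mathbb{E}(\tilde{\mathcal{P}}) \leq 1 + (1-\tilde\epsilon)\,\mathbb{E}(\tilde{\mathcal{P}}) + \tilde\epsilon\left(L + \left(1-\tfrac{\delta}{2}\right)\mathbb{E}(\tilde{\mathcal{P}})\right),$$
where $1-\delta/2$ bounds, given an accepting $\mathcal{T}_1$, the probability of having to restart (either $r_h$ is not good, probability $\le 1-\delta$, or it is good but the $L$ rewinds fail, probability $\le 1/2$); this gives $\mathbb{E}(\tilde{\mathcal{P}}) \leq \frac{2}{\tilde\epsilon\delta}(1+\tilde\epsilon L)$. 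Third, it optimizes by choosing $\delta = \frac{\tilde\epsilon-\epsilon}{2\tilde\epsilon}$ (equivalently $(1-\delta)\tilde\epsilon=\frac12(\tilde\epsilon+\epsilon)$, matching your choice of $\alpha$), which produces exactly the stated constant. Without the recurrence and its optimization, the proof does not actually establish the theorem's numerical bound.
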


\begin{proof} The proof is similar to that of \cite{AGHHJY22}.
We first need to establish that the probability for the malicious prover (who has no knowledge of the solution of the Rank-SD instance used) to cheat is at most $\epsilon = \frac{1}{N}+(1-\frac{1}{N})\cdot p$. \\
There are two situations where a malicious prover can be accepted by the verifier: \begin{itemize}
    \item He obtains the value $v=0$ when executing the MPC protocol thanks to a false positive.
    \item The verifier believes that the value obtained $v$ is $0$.
\end{itemize}
The first case occurs with probability $p = \frac{2}{q^{m\eta}}-\frac{1}{q^{2m\eta}}$, as false positive rate of the protocol $\Pi^\eta$.

In the second case, the malicious prover needs to alter the communications in order to fool the verifier. Among the $N$ shares, only the share $i^*$ is not be revealed by the prover. This means that, if he cheats on more than one share, the verifier notices the cheating, and thus rejects the proof. However, if he cheats on zero share, he is rejected as well since the value $v$ isn't $0$ since the malicious prover does not have a good witness.

This means the malicious prover can only cheat on one share exactly. However, cheating on one share $s$ implies that this it appears on one main share of all the $D$ dimensions, as the main shares are the sum of leaves that have the same index $i_k$ along the current dimension. This means that the cheating is not detected if and only if the share the prover cheated on is $i^*$, since there exists a bijection between leaves and the set of their associated main parties. This happens with probability $\frac{1}{N}$ (as the prover does not know the value of $i^*$ before cheating). No other cheating is possible, because all leaf parties except $i^*$ are revealed. Remark that cheating on the main parties give the same probability of success, since the prover should cheat for each dimension on the main party associated to the unrevealed leaf $i^*$.

Thus, the probability for the malicious prover to fool the verifier is at most: $p+(1-p)\cdot \frac{1}{N} = \frac{1}{N}+(1-\frac{1}{N})\cdot p$.\\ [0.3cm]
We then need to show that the soundness is exactly this value, which we do in what follows: \\
let $\mathcal{T}_1$ and $\mathcal{T}_2$ two transcripts with the same commitments, i.e, the same $h_0 = \hash({\cmt_1, \dots, \cmt_{N}})$ , but the second challenges $i^*_1$ (for $\mathcal{T}_1$) and $i^*_2$ (for $\mathcal{T}_2$) differ.

Then, we have two possibilities: \begin{itemize}
    \item $\share{\bm{x}_B}, \share{\bm{\beta}}, \share{\bm{a}}$ and $\share{c}$ differ in the two transcripts, and the malicious prover found a collision in the commitment hash.
    \item the openings of the commitments are equal, and thus the shares $\share{\bm{x}_B}, \share{\bm{\beta}}, \share{\bm{a}}$ and $\share{c}$ are equal in the transcripts.
\end{itemize}
We only consider the second case, as we suppose that we use secure hash functions and secure commitment schemes.

Then, since $i^*_1$ and $i^*_2$ are different challenges and the commitments are the same, it is possible to recover the witness. We show further that this means we can build an extractor function in order to obtain a good witness.

$\bm{x}$ is called a good witness if it is solution to the Rank-SD instance defined by public data, i.e. $\bm{H}\bm{x}=\bm{y}$ and $W_R(\bm{x}) \le r$. Let $R_h$ the random variable associated to the randomness in initial commitment, and $r_h$ is the value it takes.

For a malicious prover $\pt$ to be able to get these two transcripts, he does the following: \begin{itemize}
    \item Run the protocol with randomness $r_h$ with the verifier until $\mathcal{T}_1$ is found, i.e, $\mathcal{T}_1$ is the first accepted transcript found by $\pt$. We note $i^*_1$ the leaf challenge obtained.
    \item Then, using the same randomness $r_h$ that was used, i.e, building the same commitments, $\Tilde{\mathcal{P}}$ repeats the process, until he finds another accepted transcript, $\mathcal{T}_2$, for which the leaf challenge, $i^*_2$, is different than $i^*_1$.
    \item He then recovers the witness. If it is a bad witness, he repeats from the beginning.
\end{itemize}

In order to establish the soundness of the protocol, we need to estimate the number of times a malicious prover needs to repeat the authentication protocol in order to get the good witness $\bm{x}$. \\
Let $\delta \in ]0,1[$, and $\Tilde{\epsilon}$ such that $(1-\delta)\cdot\Tilde{\epsilon} > \epsilon$. We define the randomness $r_h$ to be a good randomness if $\prb(\operatorname{Succ}_{\pt}\vert r_h ) > (1-\delta)\cdot\Tilde{\epsilon}$.

By the Splitting Lemma, we have that $\prb (r_h\: good | \operatorname{Succ}_{\Tilde{\mathcal{P}}}) \ge \delta$. This means that after $\frac{1}{\delta}$ accepted transcripts, we have good odds to obtain a good randomness. Furthermore, we know that if the malicious prover uses a bad witness, his probability to cheat is bounded from above by $\epsilon$. Since the probability of success is greater than $\epsilon$, this means that a good witness has been used (when $r_h$ is good). 

To continue this proof, we look at the probability to have, given an accepted transcript $\mathcal{T}_1$, a second accepted transcript, $\mathcal{T}_2$, with a challenge different than the one in $\mathcal{T}_1$. This means we are looking to bound from below the probability: 
$$ \prb(\operatorname{Succ}_{\Tilde{\mathcal{P}}} \cap (i^*_1 \ne i^*_2) | r_h \: good )$$
Trivially, we know that this probability is equal to the probability of success knowing that $r_h$ is good, minus the probability of success with $i^*_1 = i^*_2$ knowing $r_h$ is good. This means: \small\begin{align*}   
\prb(\operatorname{Succ}_{\Tilde{\mathcal{P}}} \cap (i^*_1 \ne i^*_2) | r_h \text{ }good ) &=\prb(\operatorname{Succ}_{\Tilde{\mathcal{P}}} | r_h \text{ }good ) - \prb(\operatorname{Succ}_{\Tilde{\mathcal{P}}} \cap (i^*_1 = i^*_2) | r_h \text{ }good )\\ & \ge \prb(\operatorname{Succ}_{\Tilde{\mathcal{P}}}|r_h \: good) - \frac{1}{N}\\ & \ge (1-\delta)\Tilde{\epsilon}-\frac{1}{N} \\ & \ge (1-\delta)\Tilde{\epsilon}-\epsilon\qquad \text{ (since $\epsilon \ge \frac{1}{N}$ trivially)}\end{align*}\normalsize

Now that we have this lower bound, we want to estimate the number of time one has to repeat the protocol to find $\mathcal{T}_2$. For that, we take the opposite probability, i.e, $1-\prb(\operatorname{Success}_{\Tilde{\mathcal{P}}} \cap (i^*_1 \ne i^*_2) | r_h \text{ }good )$, which is lower bounded by $1-((1-\delta)\Tilde{\epsilon}-\epsilon)$. We now want a probability of $\frac{1}{2}$ at least of success after $L$ tries of the authentication protocol. This means then that we want: \begin{align*}
\Big(1-\prb(\operatorname{Success}_{\Tilde{\mathcal{P}}} \cap (i^*_1 \ne i^*_2) | r_h \text{ }good )\Big)^L &< {\frac{1}{2}} \\
L \cdot \ln(1-((1-\delta)\Tilde{\epsilon}-\epsilon)) & < -\ln(2)\\
L &> -\frac{\ln(2)}{\ln(1-((1-\delta)\Tilde{\epsilon}-\epsilon))}\end{align*}
One obtains the following majoration for the number of calls to $\pt$:
$$L>\frac{\ln(2)}{\ln(\frac{1}{1-((1-\delta)\Tilde{\epsilon}-\epsilon)})} \approx \frac{\ln(2)}{(1-\delta)\Tilde{\epsilon}-\epsilon} $$
This means that, when repeating the protocol $L$ times, the probability to get the second transcript is higher than $\frac{1}{2}$.

Finally, we can look at the number of protocol repetitions that have to be done. 
To quickly remind the steps of the extraction: \begin{itemize}
    \item $\Tilde{\mathcal{P}}$ repeats the authentication protocol until he finds an accepted transcript $\mathcal{T}_1$, where the commitments are generated by $r_h$, and with second challenge $i^*_1$.
    \item When $\mathcal{T}_1$ is found, repeat the protocol with the same value $r_h$, $L$ times. After that, $\Tilde{\mathcal{P}}$ has more than $\frac{1}{2}$ chance of being successful. If he is not, he repeats from the first step of the procedure.
\end{itemize}

We note $\mathbb{E}(\Tilde{\mathcal{P}})$ the number of repetitions $\Tilde{\mathcal{P}}$ has to make. After $L$ calls (to find $\mathcal{T}_2$), if $r_h$ is good (which happens with probability $\delta$), we have $\frac{1}{2}$ chance of not finding $\mathcal{T}_2$. However, if $r_h$ is not good (with probability $1-\delta$), then we can consider that $\mathcal{T}_2$ is never found.

Thus, $\prb(\text{no }\mathcal{T}_2 | \operatorname{Succ}_{\Tilde{\mathcal{P}}}) = \frac{\delta}{2}+ (1-\delta) = 1-\frac{\delta}{2}$. If that happens, then, $\Tilde{\mathcal{P}}$ has to return to the first step, i.e, find $\mathcal{T}_1$ again. This means:$$ \mathbb{E}(\Tilde{\mathcal{P}}) \le 1+{\Big((1-\prb(\operatorname{Succ}_{\Tilde{\mathcal{P}}}))\mathbb{E}(\Tilde{\mathcal{P}}) \Big)} + \prb(\operatorname{Succ}_{\Tilde{\mathcal{P}}})\Big( L+(1-\frac{\delta}{2})\mathbb{E}(\Tilde{\mathcal{P}})\Big)$$

Obviously, $\Tilde{\mathcal{P}}$ needs to run at least once. Then, we need to add to that the number of times expected before finding $\mathcal{T}_1$, and then, the number of times expected before finding $\mathcal{T}_2$. \\
Since $\prb(\operatorname{Succ}_{\Tilde{\mathcal{P}}}) = \Tilde{\epsilon}$ (by assumption), we can replace, and simplify the expression. We find then: \begin{align*}
    \mathbb{E}(\Tilde{\mathcal{P}}) & \le 1+{\Big((1-\Tilde{\epsilon}\cdot\mathbb{E}(\Tilde{\mathcal{P}}) \Big)} + \Tilde{\epsilon}\cdot\Big( L+(1-\frac{\delta}{2})\cdot\mathbb{E}(\Tilde{\mathcal{P}})\Big) \\ \mathbb{E}(\Tilde{\mathcal{P}}) & \le 1+\mathbb{E}(\Tilde{\mathcal{P}})-\Tilde{\epsilon}\cdot \mathbb{E}(\Tilde{\mathcal{P}}) + \Tilde{\epsilon} \cdot L + \Tilde{\epsilon} \cdot\mathbb{E}(\Tilde{\mathcal{P}}) - \Tilde{\epsilon} \cdot \frac{\delta}{2}\cdot \mathbb{E}(\Tilde{\mathcal{P}})\\
    \Tilde{\epsilon}\cdot\frac{\delta}{2}\cdot \mathbb{E}(\Tilde{\mathcal{P}}) &\le 1+ \Tilde{\epsilon}\cdot L \\
    \mathbb{E}(\Tilde{\mathcal{P}}) &\le \frac{2}{\Tilde{\epsilon}\cdot \delta}\Big(1+\Tilde{\epsilon}\cdot L\Big) = \frac{2}{\Tilde{\epsilon}\cdot \delta}\Big(1+\Tilde{\epsilon}\cdot \frac{\ln(2)}{(1-\delta)\Tilde{\epsilon}-\epsilon}\Big)
\end{align*}

Since this equality holds for any $\delta \in ]0,1[$, we can take $\delta$ such that $(1-\delta)\Tilde{\epsilon} = \frac{1}{2}(\Tilde{\epsilon}+\epsilon)$, and thus, we obtain the result: \begin{align*}
    \mathbb{E}(\Tilde{\mathcal{P}}) \le \frac{4}{\Tilde{\epsilon}-\epsilon}\cdot \Big(1+2\Tilde{\epsilon}\cdot \frac{\ln(2)}{\Tilde{\epsilon}-\epsilon}\Big)
\end{align*}
This means we found an upper bound on the number of iterations $\Tilde{\mathcal{P}}$ has to do before being able to retrieve a good witness, in the case where the probability to cheat was higher than $\epsilon$. This proves that $\epsilon$ is exactly the value $\frac{1}{N}+(1-\frac{1}{N})\cdot p$.
\end{proof}

We now have to prove zero-knowledge property by building a simulator which outputs indistinguishable transcripts with the distribution of transcripts from honest executions of the protocol.

The main intuition is that if the prover knows the challenge before committing the initial shares, then he knows for which party to cheat. We consequently build an HVZK simulator which firstly generate the challenge to build valid transcriptions of the protocol.

\begin{theorem}
    If the algorithm PRG is a secure pseudo-random generator and the commitment $\mathsf{Com}$ is hiding, then the algorithm \ref{hvzk_hyp} is Honest-Verifier Zero Knowledge.
\end{theorem}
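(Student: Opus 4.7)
The plan is to build an HVZK simulator $\mathsf{Sim}$ that, on input the statement $(\bm{H},\bm{y})$ and uniformly random challenges $\bigl((\gamma_j)_j,\epsilon,i^*\bigr)$ drawn in advance, outputs a transcript whose distribution is computationally indistinguishable from an honest execution, and then to argue indistinguishability through a short hybrid sequence.

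I would first describe $\mathsf{Sim}$ explicitly. It samples $\salt$ and a fresh root seed, expands the GGM tree, and for every leaf $i\neq i^*$ derives $(\state_i,\rho_i)$ exactly as the real prover. For the hidden leaf it draws a fresh $\rho_{i^*}$ and sets $\cmt_{i^*}\leftarrow\Com(0,\rho_{i^*})$, using a zero (or otherwise arbitrary) state instead of shares of the witness. It forms $h_0$ honestly. To emulate the $D$ hypercube MPC executions, $\mathsf{Sim}$ draws once $\bm{\alpha}^\star\sampler\Fqme^{r-1}$; for each dimension $d$ it computes the main-party share on the side \emph{not} containing $i^*$ honestly by summing the relevant leaf contributions, and programs on the opposite side
\[
\share{\bm{\alpha}}_{(d,i^*_d)}:=\bm{\alpha}^\star-\share{\bm{\alpha}}_{(d,1-i^*_d)},\qquad \share{v}_{(d,i^*_d)}:=-\share{v}_{(d,1-i^*_d)}.
\]
The hashes $h_1,h_2$, the punctured sibling path, and the reply $\share{\bm{\alpha}}_{i^*}:=\bm{\alpha}^\star-\sum_{i\neq i^*}\share{\bm{\alpha}}_i$ are then assembled as in the honest signing algorithm.

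Indistinguishability would be shown via three game hops. In $G_0$ we run the honest prover. In $G_1$ we replace the pseudo-random expansion of $\seed_{i^*}$ by truly uniform shares; since only the co-path of $i^*$ in the GGM tree is revealed, this hop is bounded by the security of the puncturable PRG. In $G_2$ we replace $\cmt_{i^*}$ by $\Com(0,\rho_{i^*})$; since $\rho_{i^*}$ is never opened, this hop is bounded by the hiding property of $\Com$. Conditioned on $G_2$, leaf $i^*$'s input shares are uniform and independent of everything revealed; by the $(N-1)$-privacy of additive sharing the main-party share $\share{\bm{\alpha}}_{(d,i^*_d)}$ is therefore uniform conditioned on the corresponding non-$i^*$-side share, and the correctness of the MPC protocol forces $\share{v}_{(d,i^*_d)}=-\share{v}_{(d,1-i^*_d)}$. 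This matches exactly the distribution produced by $\mathsf{Sim}$.

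The main technical point, which I expect to be the delicate step, is the joint consistency of the programmed shares across the $D$ intertwined executions: a single leaf $i^*$ contributes the same data $(\bm{x}_B,\bm{\beta},\bm{a},c)_{i^*}$ to the $i^*_d$-indexed main party of \emph{every} dimension, so the $D$ values $\share{\bm{\alpha}}_{(d,i^*_d)}$ cannot be programmed independently. The key observation is that, once the non-$i^*$ leaves are fixed, the contribution of leaf $i^*$ to each $i^*_d$-side main-party share is exactly the same affine function of its data; hence drawing $\bm{\alpha}^\star$ uniformly and setting $\share{\bm{\alpha}}_{(d,i^*_d)}=\bm{\alpha}^\star-\share{\bm{\alpha}}_{(d,1-i^*_d)}$ for every $d$ is equivalent to drawing a uniform leaf share for $i^*$ and recomputing the main-party shares from it, which is precisely the distribution arising in $G_2$. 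Once this invariant is verified, the three hybrids chain together to yield the theorem.
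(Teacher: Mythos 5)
Your overall strategy---sample the challenges in advance, replace the PRG expansion of $\seed_{i^*}$ by true randomness, and replace the hidden commitment---follows the same hybrid route as the paper, and your observation that the $D$ programmed main-party shares on the $i^*$ side are not independent (they all contain the same leaf $i^*$) is correct and correctly resolved by fixing a single $\bm{\alpha}^\star$ and defining the programmed shares from it. This matches what the paper does more directly by sampling $\share{\bm{\alpha}}_{i^*}$ uniformly at the leaf level and then summing.

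There is, however, a genuine gap in the simulator you build. You state that $\mathsf{Sim}$ "for every leaf $i\neq i^*$ derives $(\state_i,\rho_i)$ exactly as the real prover." That is impossible when $i^*\neq N$: the state of leaf $N$ is $\state_N=(\seed_N,\share{\bm{x}_B}_N,\share{\bm{\beta}}_N,\share{c}_N)$, and the auxiliary shares $\share{\bm{x}_B}_N=\bm{x}_B-\sum_{i<N}\share{\bm{x}_B}_i$ (similarly for $\share{\bm{\beta}}_N$ and $\share{c}_N$) are computed from the witness, not from a seed. Moreover $\state_N$ is revealed to the verifier in the response (step 13 of the protocol and the remark after the signing algorithm), so the simulator must actually output it. Your hybrid chain $G_0\to G_1\to G_2$ never removes this dependence: $G_1$ uniformizes the \emph{unopened} leaf $i^*$, and $G_2$ fakes $\cmt_{i^*}$, but $\state_N$ for $N\neq i^*$ stays witness-dependent and visible, so the final game is not witness-free and no valid $\mathsf{Sim}$ has been produced. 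The paper closes exactly this hole with an extra hop (its Simulator~2) that replaces $(\share{\bm{x}_B}_N,\share{\bm{\beta}}_N,\share{c}_N)$ by fresh uniform samples; the crucial observation you would need to add is that, once $G_1$ has made $\share{\bm{x}_B}_{i^*}$ truly uniform and it is never opened, the conditional distribution of $\share{\bm{x}_B}_N$ given all revealed data is already uniform, so this replacement is a perfect (information-theoretic) transition. With that additional hop inserted, and with your $\share{v}_{i^*}$ set so that the opened $v$ equals $0$, the argument goes through; without it, it does not.
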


\begin{proof}
Consider a simulator, described in Figure \ref{hvzk_hyp}, which produces the transcript responses $(h_0,\ch_1,h_1,\ch_2,\rsp)$. We demonstrate that this simulator produces indistinguishable transcripts from the real distribution (the one that we would obtain if it were generated by an honest prover who knows $\bm{x}$) by considering a succession of simulators: we begin by a simulator which produces true transcripts, and and change it gradually until arriving the following simulator. We explains why the distribution of transcripts is always the same at each step.

\begin{figure}[h!]
    \pcb[codesize=\scriptsize, minlineheight=0.75\baselineskip, mode=text, width=0.98\textwidth] { 
   \textbf{Public data} : An instance of a Rank-SD problem : $\bm{H} = (I ~ \bm{H}') \in \Fqm^{\nmktn}$ and $\bm{y}\in \Fqm^{\nmk}$\\
   [\baselineskip]
  \textbf{Step 1: Sample challenges} \\
  1. Sample challenges :\\
  \pcind \pcind - First challenge : $\ch_1=\big( (\gamma_j)_{j \in \oneto{n}}, \epsilon \big)\sampler \Fqme^n \times \Fqme$ \\
  \pcind \pcind - Second challenge : $\ch_2= i^* \sampler\oneto{N}$ \\
  [\baselineskip]
  \textbf{Step 2: Compute shares and their commitments} \\
  2. Sample a seed for pseudo-random generator : $\seed\sampler\{0, 1\}^\lambda$\\
  3. Expand root seed recursively using TreePRG to obtain $N$ leafs and seeds $(\seed_{i},\rho_{i})$.\\
  4. For each $i\in\oneto{N}\backslash\lbrace i^*\rbrace$ : \\
  \pcind \pcind - Sample $\share{\bm{a}}_i\samples{\seed_i} \Fqme^{r-1}$\\
  \pcind \pcind If $i\neq N$\\
  \pcind \pcind \pcind \pcind  - Sample $(\share{\bm{x}_B}_i,\share{\bm{\beta}}_i,\share{c}_i)\samples{\seed_i} TPRG$\\
  \pcind \pcind \pcind \pcind - $\state_i=\seed_i$\\
  \pcind \pcind - Else:\\
  \pcind \pcind \pcind \pcind - Sample $\left(\share{\bm{x}_B}_{N},\share{\bm{\beta}}_{N},\share{c}_{N}\right)\sampler \Fqmk\times\Fqm^r\times\Fqme$\\
    \pcind \pcind \pcind \pcind - $\aux_N=\left(\share{\bm{x}_B}_{N},\share{\bm{\beta}}_{N},\share{c}_{N}\right)$\\
  \pcind \pcind \pcind \pcind - $\state_{N}=\seed_N\,||\,\aux_N$\\
  \pcind \pcind - Simulate the computation of the party $i$ to get $\share{\bm{\alpha}}_i$ and $\share{v}_i$.\\
  5. For the party $i^*$:\\
  \pcind \pcind - $\share{\bm{\alpha}}_{i^*}\sampler\Fqme^{r-1}$\\
  \pcind \pcind - $\share{v}_{i^*}=-\sum_{i\neq i^*}\share{v}_i$\\
  6. Compute the commitment: $\cmt_{i^*}=\commit{\state_{i^*},\rho_{i^*}}$.\\
  7. Commit the full hypercube: $h_0=\commit{cmt_1,...,cmt_N}$.\\
  [\baselineskip]
  8. For each main party $p\in\oneto{2}$ : compute $\share{\bm{\alpha}}_p$ and $\share{v}_p$\\
  9. For each dimension $k\in\oneto{D}$ : compute $H_k=\hash (\share{\bm{\alpha}},\share{v})$\\
  10. Compute $h_1=\hash (H_1,...,H_D)$\\
  [\baselineskip]
  \textbf{Step 3 : Output transcript} \\
  11. The prover outputs the transcript $(h_0,\ch_1,\rsp_1,\ch_2,\rsp_2)$, where $\rsp_1=h_1$ and $\rsp_2=\left(\cmt_{i^*},\share{\bm{\alpha}}_{i^*},(\state_j,\rho_j)_{j\neq i^*}\right)$.
}
\vspace{-\baselineskip}
\captionof{figure}{\footnotesize{HVZK simulator for zero-knowledge proof optimized with the hypercube technique}}
\label{hvzk_hyp}
\end{figure}

\begin{itemize}
    \item \textbf{Simulator 0 (real world):} It correctly executes the algorithm \ref{pzk_hypercube}, hence its output is the correct distribution.
    
    \item \textbf{Simulator 1:} Same as the \textbf{Simulator 0}, but starts by sampling the random challenges, and then uses true randomness instead of seed-derived randomness for leaf $i^*$. If $i^*=N$, the leafs $\share{\bm{x}_B}_{N},\share{\bm{\beta}}_{N},\share{c}_{N}$ are computed as in the MPC protocol.
    
    The pseudo-random generator is supposed to be $(t,\epsilon_{PRG})$-secure, its outputs are indistinguishable from the uniform distribution. Since the principal parts correspond to the sum of a certain number of leaves whose distributions are indistinguishable from that of the real world, their distribution is also indistinguishable.

    \item \textbf{Simulator 2:} Replace the leafs $\share{\bm{x}_B}_{N},\share{\bm{\beta}}_{N},\share{c}_{N}$ in \textbf{Simulator 1} by uniform sampled values. Compute $\share{v}_{i^*}=-\sum_{i\neq i^*}\share{v}_i$. Note that this simulator becomes independent of the secret witness $\bm{x}_B$.

    If $i^*=N$, it only impacts the shares $\share{\bm{\alpha}}_{i^*}$ and $\share{v}_{i^*}$. Note that this change does not alter the uniform distribution of these values. It does not alter the distribution of any other leaf.

    If $i^*\neq N$, it only impacts $\aux_N=\left(\share{\bm{x}_B}_{N},\share{\bm{\beta}}_{N},\share{c}_{N}\right)$ in the simulated response and the values computed from them in the MPC protocol. It does not alters the distribution of other leaves. We observe that the shares in $\aux_N$ are calculated by adding a randomness value from each seed of party $i\neq i^*$, which amounts to adding a uniform random value from $\seed_{i^*}$. Since this distribution was uniform too in \textbf{Simulator 1}, the output distributions are the same. Remember that this does not change the distributions of the main parts for the same reasons as before.

    \item \textbf{Simulator 3:} Rather than computing the value of $\share{\bm{\alpha}}_{i^*}$ as in the MPC protocol, Simulator 3 samples it uniformly at random from $\Fqme^{r-1}$. As in the previous simulator, it does not change their output distribution.
\end{itemize}

As such, the output of the simulator is indistinguishable from the real distribution.
\end{proof}

\subsubsection{MPC with threshold}

We discuss here the security properties of the threshold protocol. The following proofs are only sketched, as they are very similar to those of the hypercube protocol.

\begin{theorem}
The protocol in Figure \ref{pzk_thr} is perfectly complete. Namely, a prover who knows $x$ and performs the protocol correctly is accepted by a verifier with probability 1.
\end{theorem}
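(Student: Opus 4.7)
The plan is to follow the same blueprint as the hypercube completeness proof, but paying careful attention to the two reconstruction steps that are specific to the threshold variant: the Merkle-root verification at step 8 and the on-the-fly computation of $\share{v}_{i^*}$ at step 10. The guiding idea is that an honest prover produces genuine $(\ell+1,N)$-threshold sharings of consistent inputs, so every quantity the verifier recomputes from the opened shares must match the prover's value by the correctness of the LSSS and of the commitment / Merkle tools.

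First I would observe that, since $\pt$ knows $\bm{x}$ with $W_R(\bm{x})\le r$, the annihilator polynomial $L$ of $\supp(\bm{x})$ exists and has the prescribed form, so $\bm{\beta}$, $\bm{a}$, $c$ are well-defined and $c=-\langle\bm{\beta},\bm{a}\rangle$. By the correctness of the threshold LSSS (Definition of $(\ell+1,N)$-threshold LSSS), the shares produced at step 1 indeed reconstruct $(\bm{x}_B,\bm{\beta},\bm{a},c)$ from any subset of $\ell+1$ parties. Step 2 commits the individual states, and step 3 builds a Merkle root from them; after the verifier receives the $\ell$ opened states and the authentication path $\mathsf{auth}^{(e)}$, the correctness of the Merkle tree guarantees $\tilde{h}_0 = h_0$.

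Next I would argue the first-response hash matches. For each $i\in S$, the prover performs only $\mathbb{F}$-linear operations on the shares (the map $x\mapsto x^{q^k}$ being $\mathbb{F}_q$-linear on $\Fqm$, which is compatible with the Shamir sharing as discussed in Section~\ref{sec:using-sss}, and the remaining operations being $\Fqme$-linear with public scalars $\gamma_j,\epsilon$ or with the broadcast value $\bm{\alpha}$). Therefore the sharings $\share{\bm{\alpha}}$ and $\share{v}$ are valid $(\ell+1,N)$-threshold sharings of the quantities $\bm{\alpha}$ and $v$ computed in $\Pi^\eta$. By Proposition on unique sharings, the verifier can recover $\bm{\alpha}$ (and the full share vector on $S$) from $(\share{\bm{\alpha}}_i)_{i\in I}$ together with $\share{\bm{\alpha}}_{i^*}$; this is precisely what the verifier does at step 9 of Figure~\ref{ver_thr}. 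Then the verifier recomputes $\share{v}_i$ for $i\in I$ from the opened data and the reconstructed $\bm{\alpha}$, so these values coincide with the prover's. Finally, since the witness is valid, the proposition about $\Pi^\eta$ guarantees $v=0$, so setting $\share{v}_{i^*}$ such that $v=0$ yields exactly the share vector $(\share{v}_i)_{i\in S}$ produced by the prover. Hashing the same $(\share{\bm{\alpha}}_i,\share{v}_i)_{i\in S}$ gives $\tilde{h}_1=h_1$, and the verifier accepts.

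I do not expect any real obstacle here; the only delicate point is the compatibility of the Frobenius with Shamir sharing, which is already handled by the discussion in Section~\ref{sec:using-sss}, and the fact that setting $\share{v}_{i^*}$ to enforce $v=0$ reproduces the honest share (not merely a valid sharing of $0$): this holds because of the uniqueness statement in the proposition on $\expand_J$ combined with $v=0$ from the completeness of $\Pi^\eta$. Once these two remarks are made, the proof reduces to a one-line invocation of the correctness of the LSSS, the Merkle tree, and the underlying MPC protocol.
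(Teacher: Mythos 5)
Your proof is correct and takes essentially the same approach as the paper, which simply says the prover can "by construction" execute the protocol correctly and then, for the threshold case, refers back to the hypercube one-liner. You spell out the Merkle-root recomputation, the reconstruction of $\bm{\alpha}$ and $v$ via the uniqueness of the $(\ell+1,N)$-threshold sharing, and the Frobenius/Shamir compatibility point explicitly, but the underlying argument is the same direct verification.
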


\begin{proof}
    Same trivial justification as in the previous case of the hypercube.
\end{proof}

\begin{theorem}
    Suppose that there is an efficient prover $\pt$ that on input $(\bm{H},\bm{y})$ convinces an honest verifier $\mathcal{V}$ to accept with probability $$\Tilde{\epsilon}=\prb\big(\langle\pt,\ver\rangle\rightarrow ACCEPT\big)>\epsilon = \dfrac{1}{\binom{\ell}{N}}+p\cdot\dfrac{\ell\cdot (N-\ell)}{\ell+1}$$
    then there exists an extraction algorithm $\mathcal{E}$ that produces a commitment collision, or a good witness $\bm{x'}$ solution of the Rank-SD instance, by making an average number of calls to $\pt$ bounded from above: $$\dfrac{4}{\Tilde{\epsilon}-\varepsilon}\left(1+\dfrac{8\cdot (N-\ell)}{\Tilde{\epsilon}-\epsilon}\right)$$
\end{theorem}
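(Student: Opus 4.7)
The plan is to mirror the structure of the proof for the hypercube variant (Theorem \ref{sound_hyp}) while adapting the two places where the threshold setting genuinely differs: the bound on a cheater's success probability, and the combinatorial structure that the extractor must exploit.

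First, I would establish that a prover who commits to input shares that are not consistent with a valid witness is accepted with probability at most $\epsilon = \frac{1}{\binom{N}{\ell}} + p\cdot\frac{\ell(N-\ell)}{\ell+1}$. This bound is already given in the paragraph on soundness in the preliminaries (citing \cite{FR22}); I would reproduce the argument in our setting. Either (i) the MPC protocol $\Pi^\eta$ produces a false positive (probability $p=\frac{2}{q^{m\eta}}-\frac{1}{q^{2m\eta}}$), or (ii) the malicious prover must alter the broadcast values so that the verifier's chosen subset $I$ of $\ell$ opened parties together with the publicly revealed $\share{\bm{\alpha}}_{i^*}$ appears consistent. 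A standard counting argument on $(\ell+1,N)$-threshold LSSS (identical to the argument in the FR22 paper) yields the second term. Calling this cheating probability $\epsilon$ is exactly what we need.

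Second, I would build the extractor. The key structural observation is that, by Proposition on threshold expand-functions earlier in the paper, once the extractor has obtained $\ell+1$ \emph{mutually consistent} opened shares of the inputs $(\bm{x}_B,\bm{\beta},\bm{a},c)$ under the same commitment $h_0$, the full sharing is determined and the witness $\bm{x}_B$ can be read off. The extractor therefore proceeds as follows: (a) rewind $\tilde{\mathcal{P}}$ until it outputs an accepting transcript $\mathcal{T}_1$ with randomness $r_h$, first challenge $\chi_1$ and second challenge $I_1$; (b) keeping $(r_h,\chi_1)$ fixed, rewind the second-challenge stage until a second accepting transcript $\mathcal{T}_2$ is found with $I_2\neq I_1$; (c) check that the two transcripts' opened shares are consistent with each other (else we have a commitment collision or a hash collision, which we return); (d) combine $I_1\cup I_2$ with the revealed $\share{\bm{\alpha}}_{i^*}$ values to reconstruct. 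If the reconstructed witness is bad, restart the whole procedure. Because any two distinct $I_1,I_2$ together cover strictly more than $\ell$ indices, and any additional revealed party's input share is forced to equal its value derived from the unique interpolation polynomial through $I_1$'s shares, any disagreement immediately yields a commitment collision or breaks the assumed soundness bound; therefore an honest pair $(\mathcal{T}_1,\mathcal{T}_2)$ must produce a valid witness (otherwise $\tilde{\mathcal{P}}$'s success probability was bounded by $\epsilon$, contradicting $\tilde{\epsilon}>\epsilon$).

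Third, I would bound the expected number of calls exactly as in Theorem \ref{sound_hyp}. Apply the Splitting Lemma with parameter $\delta\in(0,1)$ to define ``good'' randomness $r_h$ as those for which the conditional success probability exceeds $(1-\delta)\tilde{\epsilon}$; then $\Pr[r_h\text{ good}\mid \operatorname{Succ}]\ge\delta$. Conditioning on good $r_h$, bound the probability of obtaining a \emph{useful} second transcript ($I_2\neq I_1$) below by $(1-\delta)\tilde{\epsilon}-\Pr[I_2=I_1\mid r_h\text{ good}]\ge (1-\delta)\tilde{\epsilon}-\epsilon$, using $\frac{1}{\binom{N}{\ell}}\le\epsilon$. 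Invert with $\ln(1-x)\approx -x$ to get $L\approx \tfrac{\ln 2}{(1-\delta)\tilde{\epsilon}-\epsilon}$ calls per attempt, then set up and solve the same recursion for $\mathbb{E}(\tilde{\mathcal{P}})$, optimising $\delta$ so that $(1-\delta)\tilde{\epsilon}=\tfrac{1}{2}(\tilde{\epsilon}+\epsilon)$.

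The main obstacle, compared with the additive/hypercube case, is the step where the second transcript's second challenge is bound: in the hypercube case the challenge $i^*$ lies in a set of size $N$, which yields the clean $\tfrac{1}{N}$ term. Here the second challenge is a subset of size $\ell$, and two transcripts can fail to produce extractable information not only when $I_1=I_2$ but also when the overlap is too large relative to the cheating pattern of $\tilde{\mathcal{P}}$ on the unopened parties. Quantifying this loss is what replaces the $2\tilde{\epsilon}\ln 2$ factor in the hypercube bound with the $8(N-\ell)$ factor in the statement; tracking this combinatorial factor through the Splitting-Lemma accounting is the delicate part and is precisely where one must argue carefully that a bad extraction event costs at most a factor $\binom{N}{\ell+1}/\binom{N-1}{\ell}=O(N-\ell)$ in the probability of obtaining a fresh usable transcript.
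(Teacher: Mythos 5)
The paper's own ``proof'' of this theorem is essentially a citation: it notes that the threshold protocol is an exact instance of the model of \cite{FR22} and refers the reader to Appendix~D of that paper, with no further argument. You instead try to reconstruct the argument by adapting the hypercube proof (Theorem~\ref{sound_hyp}) to the threshold setting. This is a genuinely different (and more informative) route than the paper takes, so a comparison is warranted.

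Your high-level skeleton — bound a cheater by $\epsilon$, rewind to two accepting transcripts with the same $h_0$ but different $I_1\neq I_2$, reconstruct from $I_1\cup I_2$, and control the expected number of calls via the Splitting Lemma — is the right shape. But two places are not actually argued, and they are precisely the places where the threshold case differs from the additive/hypercube case. First, your extraction step (d) reads off a witness from $\ell+1$ consistent opened shares by interpolation, and declares success unless the result is bad. In the threshold setting, however, a cheating prover may commit to $N$ shares that do \emph{not} form a valid $(\ell+1,N)$-threshold sharing, so the reconstructed secret can depend on which $\ell+1$ shares the extractor happens to hold; a single pair $(\mathcal{T}_1,\mathcal{T}_2)$ need not isolate the ``right'' subset. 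This is the core difficulty that Appendix~D of \cite{FR22} is built to handle (and is the reason the soundness error has the extra term $p\cdot\frac{\ell(N-\ell)}{\ell+1}$ rather than just $p$). Your sketch neither reproduces that case analysis nor reduces to a lemma that does.

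Second, the factor $\frac{8(N-\ell)}{\tilde\epsilon-\epsilon}$ in the statement is exactly the quantitative output of that case analysis, and you openly leave it as ``the delicate part'' without deriving it. Running your own Splitting-Lemma recursion with $L\approx\frac{\ln 2}{(1-\delta)\tilde\epsilon-\epsilon}$ and $(1-\delta)\tilde\epsilon=\frac12(\tilde\epsilon+\epsilon)$ would, as in the hypercube case, yield the factor $2\tilde\epsilon\ln 2$, not $8(N-\ell)$; to land on $8(N-\ell)$ you need an explicitly larger $L$ that accounts for collecting enough second challenges to defeat the inconsistent-sharing cheating strategy. (Minor point: your ratio $\binom{N}{\ell+1}/\binom{N-1}{\ell}$ equals $\frac{N}{\ell+1}$, which is $\le N-\ell$, so the asymptotic claim is not wrong, but it does not by itself give the constant~$8$ or the required lower bound on useful rewinds.) Unless you either import the relevant lemma from \cite{FR22} or supply this counting argument, the proof does not reach the stated bound and remains a sketch of the hypercube argument transplanted into a setting where its extractor is not sound as written.
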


\begin{proof}
For this proof, we refer to \cite{FR22}, which proves this theorem for any MPC protocol fitting this model. Our threshold protocol is an exact application of this model. Hence, the proof of the above theorem is the same as the proof in appendix D of \cite{FR22}.
\end{proof}

\begin{theorem}
    If the algorithm PRG is pseudo-random generator and the commitment $\mathsf{Com}$ is hiding, then there exists an honest-Verifier Zero Knowledge algorithm.
\end{theorem}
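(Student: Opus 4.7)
The plan is to construct an HVZK simulator following the same hybrid strategy as in the hypercube case, adapted to the threshold setting. The simulator first samples the challenges $\ch_1 = \big((\gamma_j)_{j\in\oneto{n}}, \epsilon\big)$ and $\ch_2 = I \subset \oneto{N}$ with $|I| = \ell$, together with the deterministic $i^* \in S \setminus I$. It then needs to produce a transcript $(h_0, \ch_1, h_1, \ch_2, \rsp)$ where $\rsp = \big((\state_i, \rho_i)_{i \in I}, \mathsf{auth}, \share{\bm{\alpha}}_{i^*}\big)$ whose joint distribution is computationally indistinguishable from an honest execution.

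The sequence of hybrids I would use is: (i) \textbf{Simulator 0}: real prover, uses the witness and executes the protocol honestly. (ii) \textbf{Simulator 1}: sample the challenges first and replace, for every party $i \notin I$, the pseudo-random tape $\rho_i$ with true randomness; the commitments $\cmt_i$ for $i \notin I$ are then replaced by commitments to arbitrary fixed values, which is indistinguishable by the hiding property of $\Com$ and by the $(t,\epsilon_{\mathsf{PRG}})$-security of the PRG used to expand $\seed$. The Merkle root $h_0$ and authentication path $\mathsf{auth}$ are then computed from the (now partly fake) leaves $(\cmt_i)_{i \in \oneto{N}}$. (iii) \textbf{Simulator 2}: replace the shares $(\share{\bm{x}_B}_i, \share{\bm{\beta}}_i, \share{\bm{a}}_i, \share{c}_i)_{i \in I}$ by a sharing of \emph{uniformly random} values $(\bm{x}_B', \bm{\beta}', \bm{a}', c')$; by the perfect $(t-1)$-privacy of the $(\ell+1, N)$-threshold LSSS, the marginal distribution on $I$ is identical to the real one. (iv) \textbf{Simulator 3}: replace $\share{\bm{\alpha}}_{i^*}$ by a uniformly sampled element of $\Fqme^{r-1}$, and set $\share{v}_{i^*}$ so that $\reconstruct_{I \cup \{i^*\}}(\share{v}_{I \cup \{i^*\}}) = 0$. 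As in the real protocol the masking vector $\bm{a}$ makes $\bm{\alpha} = \epsilon \bm{\omega} + \bm{a}$ uniform, so this does not alter the distribution; and setting $\share{v}_{i^*}$ to force $v=0$ is exactly what the verifier does anyway, hence it is consistent with an honest execution (since honest executions always produce $v=0$).

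Finally I would argue that the view of the verifier in Simulator 3 is independent of the witness $\bm{x}$, so the total statistical distance between Simulator 0 and Simulator 3 is bounded by the PRG advantage plus the hiding advantage of $\Com$ summed over the at most $N - \ell - 1$ unopened and uncomputed parties, hence is negligible. The main obstacle is step (iii): one must check carefully that when the $\Share$ algorithm is invoked with a uniform secret and uniform randomness, the marginal on any $\ell$ coordinates is uniform and independent of the underlying secret — this is precisely the perfect $(t-1)$-privacy guarantee, so the obstacle dissolves. A secondary subtlety is ensuring that the Merkle authentication path opens correctly to the (possibly fake) commitments of parties in $I$; this is handled because the Merkle tree is built after the commitments $\cmt_i$ have been fixed, so $\mathsf{auth}$ and $h_0$ are always consistent by construction.
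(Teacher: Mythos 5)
Your proposal is correct and follows the same approach the paper gestures at: the paper's own "proof" is a two-line deferral to the hypercube-case argument and to [FR22, Appendix C], citing exactly the ingredients you use ($(\ell)$-privacy of the threshold LSSS and hiding of the commitments). You have merely written out the simulator and hybrid sequence explicitly — sampling the challenges first, faking the unopened commitments by hiding, replacing the $\ell$ revealed states by shares of random values via perfect $\ell$-privacy, re-randomizing $\share{\bm{\alpha}}_{i^*}$ via the fresh mask $\bm{a}$, and forcing $v=0$ as the verifier already does — and you correctly flag that the Merkle path is consistent because it is computed after the (possibly fake) leaves are fixed.
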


The proof is similar to what is done in the case of the additive sharing, and holds by the $t$-privacy of the MPC protocol, as well as the hiding property of the commitments. Moreover, a proof in the general case is provided in [FR22, Appendix C]. Since we are in their model of MPCitH, the proof applies here as well.

\subsection{Security proof for signature schemes} \label{sec_sgn}

As in the previous subsection, the proofs are largely inspired by the work of \cite{AGHHJY22}, which is largely inspired by \cite{FJR22}. The two proofs are very similar, and proceed in two main steps: we begin by explaining how efficiently simulate a signature oracle to a CMA adversary using public key and the honest verifier zero-knowledge simulator. We then show how such a RUF-CMA adversary can be used to extract a solution to the Rank-SD instance.

\subsubsection{Additive-RYDE}

We begin by computing the probability of an attack on the signature based in the hypercube MPC.

\begin{theorem}\label{secu_sig_hypercube}
    Let the signature be $(t,\epsilon_{PRG})$-secure and with adversary of advantage at most $\epsilon_{RSD}$ against the underlying Rank-SD problem. Let $\hash_0$, $\hash_1$, $\hash_2$, $\hash_3$, and $\hash_4$ be random oracles with output length $2\lambda$ bits. If an adversary makes $q_i$ queries to $\hash_i$ and $q_S$ queries to the signature oracle, the probability of such an adversary producing an existential forgery under chosen message attack (EUF-CMA) is bounded from above by:
    $$\prb[\mathsf{Forge}]\:\leq\: \frac{3 \cdot (q+\tau\cdot N \cdot q_S)^2}{2\cdot 2^\lambda}+\frac{q_S \cdot (q_S + 5q) }{2^\lambda}]+\epsilon_{PRG}+\prb[X+Y=\tau]+\epsilon_{RSD}$$
    where $q = \operatorname{max}\{q_0,q_1,q_2,q_3,q_4\}$, $\tau$ is the number of rounds of the signature, $p=\frac{1}{q^{m\eta}}+\left(1-\frac{1}{q^{m\eta}}\right)\frac{1}{q^{m\eta}}$ is the false positive rate of the underlying MPC protocol, $X = \operatorname{max}_{i\in [0,q_2]}\{X_i\}$ with $X_i \sim \mathcal{B}(\tau,p)$, and $Y = \operatorname{max}_{i\in [0,q_4]}\{Y_i\}$ with $Y_i \sim \mathcal{B}(\tau-X,\frac{1}{N})$.
\end{theorem}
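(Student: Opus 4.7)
The plan is a classical game-hopping argument in the random oracle model, in the spirit of \cite{FJR22,AGHHJY22}, adapted to the $5$-round hypercube-MPCitH construction of Figure~\ref{sign_hyp}. I will start from $\mathsf{Game}_0 = G_{\textsc{euf-cma}}$ and successively remove all ``bad'' events that would let the adversary $\mathcal{A}$ win without actually building a valid MPC transcript, eventually reducing any remaining forgery to either a Kales--Zaverucha style guessing strategy (whose success is $\prb[X+Y=\tau]$) or to an algorithm that solves the underlying Rank-SD instance with advantage $\epsilon_{RSD}$.

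The first block of hops deals with \emph{collision events} on the five random oracles $\hash_0,\dots,\hash_4$. Summing over the oracles that can be ``primed'' by the adversary's own queries and by the commitments produced during signing, the total number of random oracle inputs of length $2\lambda$ bits is at most $q + \tau N q_S$ per relevant oracle; a birthday-style bound gives each such collision event probability at most $(q+\tau N q_S)^2/2^{\lambda+1}$, and summing over the three problematic oracles yields the $\tfrac{3(q+\tau N q_S)^2}{2\cdot 2^\lambda}$ term. In parallel I bound the probability that two signing queries use the same $\salt$, and that $\mathcal{A}$ queries a hash input whose $\salt$ coincides with one previously sampled by the signer: these contribute the $q_S(q_S+5q)/2^\lambda$ term. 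After this block, every random-oracle query made by $\mathcal{A}$ is either fresh (no salt clash) or corresponds to a transcript entirely produced by the simulated signer.

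The second block replaces the real signing oracle by the HVZK simulator of Figure~\ref{hvzk_hyp}. The idea, exactly as in \cite{AGHHJY22,FJR22}, is to first sample the challenges $\ch_1=((\gamma_j)_j,\epsilon)$ and $\ch_2=i^\ast$ for each of the $\tau$ executions, program $\hash_3$ and $\hash_4$ (which define $h_1$ and $h_2$) at the corresponding inputs to return these challenges, then run the HVZK simulator to produce a transcript that passes verification. Two things can go wrong: the random tape used for the $N-1$ revealed seeds could be distinguishable from truly random (absorbed into the $\epsilon_{PRG}$ term), or $\mathcal{A}$ could have already queried $\hash_3$/$\hash_4$ on the exact input the signer is about to program (already absorbed into the collision block above). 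From this hop on, the whole game is independent of the secret key $\bm{x}_B$.

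The final reduction upper-bounds $\prb[\mathsf{Forge}]$ in the simulated game. Following the Kales--Zaverucha analysis, any forgery on a fresh message is classified by two counts: $X$ = number of the $\tau$ iterations in which the MPC protocol produces a false positive for the (necessarily inconsistent) committed witness, and $Y$ = number of the remaining iterations in which the adversary correctly pre-guesses the challenge $i^{\ast(e)}$ (which happens with probability $1/N$ per iteration, see Theorem~\ref{sound_hyp}). A forgery is possible only when $X+Y=\tau$; the best-response adversary that makes $q_2$ queries to $\hash_2$ and $q_4$ queries to $\hash_4$ achieves success bounded by $\prb[X+Y=\tau]$ with $X,Y$ as defined in the statement. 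Conversely, any forgery that does \emph{not} arise from this combinatorial strategy can be turned, by rewinding on the $\hash_4$ query that fixes the second challenge, into two accepting transcripts sharing the first challenge but differing on $i^{\ast(e)}$ in some round; by the extractor of Theorem~\ref{sound_hyp}, this yields a solution of the Rank-SD instance, contributing at most $\epsilon_{RSD}$. Adding the four blocks gives the announced bound.

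The main obstacle is the bookkeeping of the second block: one must argue that \emph{each} of the $q_S$ signing queries can be answered by the HVZK simulator without forcing an abort, despite the adversary having made up to $q$ oracle queries beforehand. The clean way is to attach a fresh $2\lambda$-bit $\salt$ to every signing query and to show that the probability of a clash with one of $\mathcal{A}$'s prior inputs to $\hash_0,\hash_1,\hash_2,\hash_3,\hash_4$ is at most $5q/2^\lambda$ per query, which combined with the $q_S^2$ self-collisions gives the $q_S(q_S+5q)/2^\lambda$ summand. Everything else is standard once this simulation lemma is in place.
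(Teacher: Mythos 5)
Your proposal follows essentially the same game-hopping skeleton as the paper: remove hash-output collisions on $\hash_0,\hash_1,\hash_3$ (the $3(q+\tau N q_S)^2/(2\cdot 2^\lambda)$ term), remove salt re-use/input collisions (the $q_S(q_S+5q)/2^\lambda$ term), replace the signing oracle by the HVZK simulator (the $\epsilon_{PRG}$ term), and then split the residual forgery probability between a Kales--Zaverucha guessing strategy ($\prb[X+Y=\tau]$) and solving the Rank-SD instance ($\epsilon_{RSD}$). The first three blocks and the KZ bound are all in line with the paper's Games~1--4 and the first half of its Game~5.

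The one genuine gap is in how you obtain the $\epsilon_{RSD}$ term. You propose to \emph{rewind} the forger on its $\hash_4$ query and invoke the knowledge extractor of Theorem~\ref{sound_hyp} to pull out two accepting transcripts with different second challenges. That is the argument used to prove soundness of the \emph{interactive} protocol, not the EUF-CMA bound, and it would not give you the stated term: the extractor makes a number of calls to the prover on the order of $1/(\tilde\epsilon-\epsilon)$, so a rewinding-based reduction loses a (potentially large) multiplicative factor and cannot yield the clean $+\,\epsilon_{RSD}$ summand. The paper avoids rewinding entirely: Game~5 defines the event $\mathsf{Solve}$ as ``there exists a chain of random-oracle queries $\hash_0\to\hash_1\to\hash_2\to\hash_3\to\hash_4$ whose committed states $\{\state_i\}$ encode a correct Rank-SD witness,'' which the reduction can detect and read off from the oracle query log \emph{in a single forward pass}. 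This gives $\prb[\mathsf{Solve}]\le\epsilon_{RSD}$ directly, and the complementary event $\mathsf{Forge}\cap\overline{\mathsf{Solve}}$ is what is bounded by $\prb[X+Y=\tau]$ via the KZ analysis. Your partition (``KZ strategy'' vs.\ ``not KZ strategy, so rewind'') is also in the wrong order logically: the clean split is ``witness appears in the query log'' vs.\ ``it does not,'' and only the second branch forces a round-by-round cheating analysis. Replacing the rewinding step with the direct query-log extraction, and reorganizing the case split around $\mathsf{Solve}$, would make your sketch match the paper's argument and actually yield the stated bound.
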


\begin{proof}
In this proof, we adopt a game hopping strategy in order to find the upper bound. We note $\prb_i[\mathsf{Forge}]$ the probability of forgery when considering game $i$.

The first game is the access to the standard signing oracle by the adversary $\mathcal{A}$. Then, we game hop in order to eliminate the cases where collisions happen, and, through some other games, we manage to find an upper bound. The aim of the proof is to find this bound on $\prb_1[\mathsf{Forge}]$.\\
\begin{itemize} 
\item\textbf{Game 1:} \\
    This is the interaction between $\mathcal{A}$ and the real signature scheme. \\
    $\mathsf{KeyGen}$ generates $(\bm{H},\bm{y})$ and $\bm{x}$, and $\mathcal{A}$ receives $(\bm{H},\bm{y})$. $\mathcal{A}$ can make queries to each $\hash_i$ independently, and can make signing queries. At the end of the attack, $\mathcal{A}$ outputs a message/signature pair, $(m,\sigma)$. The event $\mathsf{Forge}$ happens when the message output by $\mathcal{A}$ was not previously used in a query to the signing oracle.\\

\item\textbf{Game 2:} \\
    In this game, we add a condition to the success of the attacker. The condition we add is that if there is a collision between outputs of $\hash_0$, or $\hash_1$, or $\hash_3$, then, the forgery isn't valid. \\
    The first step is to look at the number of times every $\hash_i$ is called when calling the signing oracle. For $\hash_0$, we make $\tau \cdot N$ queries. The signing oracle contains also $\tau$ calls to $\hash_1$, one to $\hash_2$, $\tau\cdot D$ to $\hash_3$, and finally, a single one to $\hash_4$. \\
    The number of queries to $\hash_0$ or $\hash_1$ or $\hash_3$ is then bounded from above by $q+\tau \cdot N \cdot q_S$.\\
    We can then have the following result (it comes simply from the probability to have at least one collision with $q+\tau\cdot N\cdot q_S$ values):  
    $$\abs{\prb_1[\mathsf{Forge}]-\prb_2[\mathsf{Forge}]} \le \frac{3 \cdot (q+\tau\cdot N \cdot q_S)^2}{2\cdot 2^\lambda}$$

\item\textbf{Game 3:} \\
    The attacker now fails if the inputs to any of the $\hash_i$ has already appeared in a previous query. If that happens, this means that at least the salt used was the same (we emphasize on \textit{at least}). We have one $\salt$ sampled every time a query is made to the signing oracle, and it can collide each time with: a previous $\salt$, or any of the queries to the $\hash_i$. This means, we can bound this with: $$\abs{\prb_2[\mathsf{Forge}]-\prb_3[\mathsf{Forge}]} \le \frac{q_S \cdot (q_S + q_0 + q_1 + q_2 +q_3 +q_4)}{2^{\lambda}} \le \frac{q_S \cdot (q_S + 5q)}{2^\lambda}$$

\item\textbf{Game 4:} \\
    To answer the signing queries, we now use the \textbf{HVZK} simulator built in the previous proof, in order to generate the views of the open parties. By security of the PRG, the difference with the previous game is:     $$\abs{\prb_7[\mathsf{Forge}]-\prb_6[\mathsf{Forge}]}\le\epsilon_{PRG}$$

\item\textbf{Game 5:} \\
    Finally, we say that an execution $e^*$ of a query $ h_2 = \hash_4(m,\salt,h_1,{H_1^{(e)} \dots H_D^{(e)}}_{e\in \oneto{\tau}})$ defines a good witness $\bm{x}$ if:\begin{itemize}
        \item Each of the $H_k^{(e)}$ are the output of a query to $\hash_3$
        \item $h_1$ is the output of a query to $\hash_2$, i.e.: $$h_1 = \hash_2(\salt,m,h_0^{(1)},\dots , h_0^{(\tau)})$$
        \item Each $h_0^{(e)}$ is the output of a query to $\hash_1$, i.e.: $$h_0^{(e)} = \hash_1(\cmt_1^{(e)} ,\dots, \cmt_{N}^{(e)})$$
        \item Each $\cmt_i^{(e)}$ is the output of a query to $\hash_0$, i.e.: $$\cmt_i^{(e)} = \hash_0(\salt,e,i,\state_i^{(e)})$$
        \item The vector $\bm{x} \in \Fqk$ defined by states $\{\state_i\}_{i \in \oneto{N}}$ is a correct witness, i.e.: $\bm{H}\bm{x} = \bm{y}$ such that $\operatorname{W}_R(\bm{E}) \le r$.
    In the case where an execution like this happens, we are able to retrieve the correct witness from the states $\{\state_i\}_{i \in \oneto{N}}$, and as a consequence, we are able to solve the Rank-SD instance. This means that $\prb_8[\mathsf{Solve}] \le \epsilon_{RSD}$.
    \end{itemize}

Finally, we only need to look at the upper bound of $\abs{\prb_8[\mathsf{Forge} \cap \mathsf{\overline{Solve}}]}$. This probability is upper bounded by the value $$\prb[X+Y=\tau]$$ where $X = \operatorname{max}_{i\in [0,q_2]}\{X_i\}$ with $X_i \sim \mathcal{B}(\tau,p)$, $Y = \operatorname{max}_{i\in [0,q_4]}\{Y_i\}$ with $Y_i \sim \mathcal{B}(\tau-X,\frac{1}{N})$. We explain this bound below:

$\mathsf{Solve}$ does not happen here, meaning that, to have a forgery after a query to $\hash_4$, $\mathcal{A}$ has no choice but to cheat either on the first round or on the second one. \\
    
    \textit{Cheating at the first round.} For any query $Q_2$ to $\hash_2$, we call the output of this query $h_1$. For any query $Q_2$, if a false positive appears in a round $e$ with this value of $h_1$, then we add this round $e$ to the set we call $G_2(Q_2,h_1)$. This means that $\prb(e \in G_2(Q_2,h_1)\:|\:\mathsf{\overline{Solve}}) \le p$. Since the response $h_1$ is uniformly sampled, each round $e$ has the same probability to be in the set $G_2(Q_2,h_1)$. This means that $\#G_2(Q_2,h_1)$ follows the binomial distribution $X_{Q_2} = \mathcal{B}(\tau,p)$. We can then define $(Q_{2\text{best}},h_{1\text{best}})$ such that $\#G_2(Q_2,h_1)$ is maximized, i.e, $$\#G_2(Q_{2\text{best}},h_{1\text{best}}) \sim X = {\operatorname{max}\{X_{Q_2}\}}_{(Q_2 \in \mathcal{Q}_2)}$$

    \textit{Cheating at the second round.} Now, we need to look at the cheating in the second round, i.e, the queries to $\hash_4$. We note this query $Q_4$, with the output of this query $h_2$.
    For the signature to be accepted, we know that, if in a round, the prover sends a wrong value of $h_1$, then he needs to cheat on exactly one party (it is already established that is isn't possible to cheat on less, or on more, than one party). He only needs to cheat when the value of $h_1^{(e)}$ is wrong, i.e, he needs to cheat for every round $e \notin G_2(Q_{2\text{best}},h_{1\text{best}})$. Since every time he cheats, the probability to be detected is $\frac{1}{N}$, it is easy to see the probability that the verification outputs ACCEPT is upper bounded by $\Big( \frac{1}{N}\Big)^{\tau-\#G_2(Q_{2\text{best}},h_{1\text{best}})} $ \\
    The probability that the prover is accepted on one of the $q_4$ queries is then upper bounded by \begin{displaymath}
         1-\Bigg(1- \Big( \frac{1}{N}\Big)^{\tau-\tau_1} \Bigg)
    \end{displaymath}
    where $\tau_1 =\#G_2(Q_{2\text{best}},h_{1\text{best}})$.
    By summing over all values of $\tau_1$ possible, we have then the upper bound: $$\prb_8(\mathsf{Forge} \cap \overline{\mathsf{Solve}}) \le \prb(X+Y = \tau)$$ where $X$ is as before, and $Y = {\operatorname{max}\{Y_{Q_2}\}}_{(Q_2 \in \mathcal{Q}_2)}$ where the $Y_{Q_2}$ are distributed following $\mathcal{B}(\tau-X,\frac{1}{N})$. 
    \end{itemize}
    All that is left to do is then to compute the sum of all the upper bounds we retrieved: this gives us the wanted result.
\end{proof}

\subsubsection{Threshold-RYDE}

We compute now the probability to forge the signature based on the MPC threshold.

\begin{theorem}\label{secu_sig_th}
    Let the signature be $(t,\epsilon_{PRG})$-secure and with adversary of advantage at most $\epsilon_{RSD}$ against the underlying Rank-SD problem. Let $\hash_0$, $\hash_1$, $\hash_2$ and $\hash_M$ be random oracles with output length $2\lambda$ bits. The probability of such an adversary producing an existential forgery under chosen message attack (EUF-CMA) is bounded from above by:$$\prb[\mathsf{Forge}]\:\leq\: \frac{\cdot (q+\tau\cdot (2N-1) \cdot q_S)^2}{2^\lambda}+\frac{q_S \cdot (q_S + 3q)}{2^\lambda}+q_S\cdot\tau\cdot\epsilon_{PRG}+\prb[X+Y=\tau] +\epsilon_{RSD}$$
    where $\tau$ is the number of rounds of the signature, $p=\frac{1}{q^{m\eta}}+\left(1-\frac{1}{q^{m\eta}}\right)\frac{1}{q^{m\eta}}$ the false positive rate of the MPC protocol, $X = \operatorname{max}_{i\in [0,q_1]}\{X_i\}$ with $X_i \sim \mathcal{B}(\tau,\binom{N}{\ell+1}\cdot p)$, and $Y = \operatorname{max}_{i\in [0,q_2]}\{Y_i\}$ with $Y_i \sim \mathcal{B}\left(\tau-X,\frac{1}{\binom{N}{\ell}}\right)$.
\end{theorem}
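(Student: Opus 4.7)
The plan is to run through essentially the same sequence of game hops as in the proof of Theorem \ref{secu_sig_hypercube}, adapting the counting of hash invocations and the soundness arithmetic to the threshold/Merkle-tree structure of the scheme. In Game~1 the adversary interacts with the real scheme; in Game~2 a forgery is disqualified if any pair of inputs to $\hash_0$, $\hash_1$ or $\hash_M$ collide. A single signing query now triggers $\tau N$ invocations of $\hash_0$ (one per leaf commitment) together with $\tau(N-1)$ invocations of $\hash_M$ for the internal Merkle nodes, giving a total of at most $q + \tau(2N-1)\,q_S$ hash inputs under the random-oracle model, so a standard birthday argument contributes the first term $(q+\tau(2N-1)q_S)^2/2^\lambda$ to the advantage gap.

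In Game~3 I forbid the freshly sampled $\salt$ from colliding with any salt previously exposed via signing queries or via hash queries that take $\salt$ as input; only $\hash_0$, $\hash_1$, $\hash_2$ accept the salt in the threshold signing algorithm, which accounts for the $q_S(q_S+3q)/2^\lambda$ term (the Merkle oracle $\hash_M$ is not indexed by $\salt$ and does not contribute here). In Game~4 I replace the honest signer by the HVZK simulator of the threshold protocol, applied independently in each of the $\tau$ rounds of each of the $q_S$ signing queries; the only place where we actually depend on $\omega$ is through the PRG-derived randomness, so by a standard hybrid argument the distinguishing advantage is at most $q_S\,\tau\,\epsilon_{PRG}$.

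Game~5 defines an extraction event \textsf{Solve}: a query to $\hash_2$ is \emph{solving} if one can climb the random-oracle graph backwards (through $\hash_1$, the Merkle root, $\hash_0$ and $\hash_M$) to consistent leaf states $\state_i^{(e)}$ from which the $(\ell+1,N)$-threshold reconstruction produces a vector $\bm{x}$ with $\bm{H}\bm{x}=\bm{y}$ and $W_R(\bm{x})\le r$. Whenever \textsf{Solve} occurs, we have a reduction to Rank-SD, so $\prb_5[\mathsf{Solve}]\le \epsilon_{RSD}$, yielding the $\epsilon_{RSD}$ term.

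The main obstacle, as in the additive case, is to bound $\prb_5[\mathsf{Forge}\cap\overline{\mathsf{Solve}}]$, and this is where the threshold structure changes the analysis nontrivially. First, a malicious prover without a valid witness can produce a ``good'' $h_1$ for a given round only via an MPC false positive; however, because an inconsistent share is detected only if the opened subset $I$ (completed by the deterministic $i^*$) actually contains the cheated index, the per-round cheating probability for the first response is amplified to $\binom{N}{\ell+1}\,p$ rather than $p$ (this is the soundness statement recalled from \cite{FR22} in the preliminaries, with $p$ the false-positive rate $\frac{2}{q^{m\eta}}-\frac{1}{q^{2m\eta}}$). Letting $\tau_1$ denote the number of rounds in which the adversary wins the first challenge for a fixed query to $\hash_1$, $\tau_1$ is stochastically dominated by $\mathcal{B}(\tau,\binom{N}{\ell+1}p)$, and taking the maximum over all $q_1$ queries yields the random variable $X$. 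Second, in the remaining $\tau-\tau_1$ rounds the adversary must also guess the subset $I$ of size $\ell$; each such guess is uniform in $\binom{N}{\ell}$ possibilities, so the number of cheated rounds succeeding against the second challenge is dominated by the maximum $Y$ over $q_2$ queries to $\hash_2$ of $\mathcal{B}(\tau-X,1/\binom{N}{\ell})$. The forgery survives Game~5 only when $X+Y=\tau$, giving the $\prb[X+Y=\tau]$ term. Summing the game-hop losses closes the bound. The most delicate step to write out carefully will be this last counting argument, because one must verify that conditioning on $\overline{\mathsf{Solve}}$ preserves the independence needed to apply the binomial bound uniformly across queries.
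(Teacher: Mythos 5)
Your proposal follows essentially the same game-hopping route as the paper's proof, and each of your five games corresponds one-to-one to the paper's Games 1 through 5. In a few places your accounting is actually cleaner than the paper's: your decomposition of the per-signing-query hash budget into $\tau N$ calls to $\hash_0$ and $\tau(N-1)$ calls to $\hash_M$ (summing to $\tau(2N-1)$) makes the birthday term transparent, whereas the paper's phrasing is garbled on this point; your explicit hybrid over the $q_S \tau$ PRG instances justifies the $q_S\tau\epsilon_{PRG}$ loss that appears in the theorem statement (the paper's Game~4 text only writes $\epsilon_{PRG}$, an internal inconsistency); and your indexing of $X$ over the $q_1$ queries to $\hash_1$ and $Y$ over the $q_2$ queries to $\hash_2$ matches the theorem, while the paper's proof recycles the $q_2,q_4$ indices from the additive case. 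On the final term, the paper simply cites \cite{FR22} (Lemma~6, Theorem~4, Appendix~F), whereas you unpack the soundness amplification $\binom{N}{\ell+1}p$ and the $1/\binom{N}{\ell}$ second-challenge probability; that unpacking is correct and consistent with what \cite{FR22} proves, so the two routes agree. The only caveat to keep in mind when writing out the last step is the one you already flag: the binomial tail bound relies on per-round and per-query independence after conditioning on $\overline{\mathsf{Solve}}$, and the clean way to discharge it is to cite the relevant lemma from \cite{FR22} directly rather than re-derive it.
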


\begin{proof}
The proof is based on the same operation as the previous one, adopting a game hopping strategy. The first game is the access to the standard signing oracle by the adversary $\mathcal{A}$. The aim of the proof is to find this bound on $\prb_1[\mathsf{Forge}]$.\\

\begin{itemize}

\item\textbf{Game 1:} \\
    This is the interaction between $\mathcal{A}$ and the real signature scheme. \\
    $\mathsf{KeyGen}$ generates $(\bm{H},\bm{y})$ and $\bm{x}$, and $\mathcal{A}$ receives $(\bm{H},\bm{y})$. $\mathcal{A}$ can make queries to each $\hash_i$ independently, and can make signing queries. At the end of the attack, $\mathcal{A}$ outputs a message/signature pair, $(m,\sigma)$. The event $\mathsf{Forge}$ happens when the message output by $\mathcal{A}$ was not previously used in a query to the signing oracle.\\

\item\textbf{Game 2:} \\
    We add a condition to the success of the attacker now. If there is a collision in the outputs of $\hash_0$ or on $\hash_M$, then the forgery isn't valid. Here, $\hash_0$ is called $q_0$ times by $\mathcal{A}$, $\hash_M$ $q_M$ times. When $\mathcal{A}$ calls the signing oracle, there are in total: $\tau \cdot N$ calls to $\hash_0$, $\tau \cdot (2N-1)$ calls to $\hash_M$, and one to $\hash_1$ and $\hash_2$. In this game, only $\hash_0$ and $\hash_M$ are of interest. We can then give an upper bound to the queries made by $\mathcal{A}$ to the hash functions, which is then: $q + \tau \cdot (2N-1)\cdot q_S$ where $q_S$ is the number of queries to the signing oracle. \\
    When making this many queries, we can now bound from above the probability of having a collision, with  $$\abs{\prb_1[\mathsf{Forge}]-\prb_2[\mathsf{Forge}]} \le \frac{(q + \tau \cdot (2N-1)\cdot q_S)^2}{2^\lambda}$$

\item\textbf{Game 3:} \\
    The attacker now fails if the inputs to any of the hash functions has already appeared in a previous query. If that happens, this means that at least the $\salt$ used was the same. Since we don't use $\salt$ in the Merkle Tree, this only concerns $\hash_0,\hash_1,\hash_2$. We sample $\salt$ $q_S$ time (once by signing oracle query), and $3q$ times as well (each time we call $\hash_0,\hash_1$ or $\hash_2$). If there is an input which already appears for $\hash_M$, this must be because a collision has been found either on $\hash_M$ or on $\hash_0$. However, we already excluded this in \textbf{Game 2}. This means we can give the following bound: 
    $$\abs{\prb_2[\mathsf{Forge}]-\prb_3[\mathsf{Forge}]} \le \frac{q_S \cdot (q_S + q_0 + q_1 + q_2)}{2^{\lambda}} \le \frac{q_S \cdot (q_S + 3q)}{2^\lambda}$$

\item\textbf{Game 4:} \\
    To answer the signing queries, we now use the \textbf{HVZK} simulator built in the previous proof, in order to generate the views of the open parties. By security of the PRG, the difference with the previous game is:     $$\abs{\prb_7[\mathsf{Forge}]-\prb_6[\mathsf{Forge}]}\le\epsilon_{PRG}$$

\item\textbf{Game 5:} \\
    Finally, we say that an execution $e^*$ of a query $ h_2 = \hash_2(m,\mathsf{pk},\salt,h_1, (\share{\boldsymbol{\alpha}^{(e)}}_i, \share{v^{(e)}}_i)_{i \in S, e \in \oneto{\tau}})$ defines a good witness $\bm{x}$ if: \begin{itemize}
        \item $h_1$ is the output of a query to $H_1$, i.e, $$h_1 = \hash_2(\salt,m,h_0^{(1)},\dots , h_0^{(\tau)})$$
        \item Each $h_0^{(e)}$ is the output of a query to the MerkleTree oracle, i.e, $$h_0^{(e)} = \mathsf{Merkle}(\cmt_1^{(e)} ,\dots, \cmt_{N}^{(e)})$$
        \item Each $\cmt_i^{(e)}$ is the output of a query to $\hash_0$, i.e, $$\cmt_i^{(e)} = \hash_0(\salt,e,i,\state_i^{(e)})$$
        \item The vector $\bm{x} \in \Fqk$ defined by states $\{\state_i\}_{i \in S}$ is a correct witness, i.e.: $\bm{H}\bm{x} = \bm{y}$ such that $\operatorname{W}_R(\bm{E}) \le r$.
    \end{itemize}
    In the case where an execution like this happens, we are able to retrieve the correct witness from the states $\{\state_i\}_{i \in \oneto{N}}$, and as a consequence, we are able to solve the Rank-SD instance. This means that $\prb_8[\mathsf{Solve}] \le \epsilon_{RSD}$.

    Finally, we only need to look at the upper bound of $\abs{\prb_8[\mathsf{Forge} \cap \mathsf{\overline{Solve}}]}$.
    This probability is upper bounded by the value $$\prb[X+Y=\tau]$$ with $X = \operatorname{max}_{i\in [0,q_2]}\{X_i\}$ with $X_i \sim \mathcal{B}(\tau,\binom{N}{\ell+1}\cdot p)$, $Y = \operatorname{max}_{i\in [0,q_4]}\{Y_i\}$ with $Y_i \sim \mathcal{B}(\tau-X,\frac{1}{\binom{N}{\ell}})$ and where $p=\frac{1}{q^{m\eta}}+\left(1-\frac{1}{q^{m\eta}}\right)\frac{1}{q^{m\eta}}$. \\
    This result comes directly from \cite[Lemma 6 and Theorem 4, Appendix F]{FR22}.

\end{itemize}

All that is left to do is then to compute the sum of all the upper bounds we retrieved: this gives us the wanted result.

\end{proof}

\section{Known Attacks}

\subsection{Attacks against Fiat-Shamir signatures} \label{atk_fs}

There are several attacks against signatures from zero-knowledge proofs obtained thanks to the Fiat-Shamir heuristic. \cite{AAB} propose an attack more efficient than the brute-force one for protocols with more than one challenge, i.e. for protocols of a minimum of 5 rounds.

Kales and Zaverucha proposed in \cite{KZ20} a forgery achieved by guessing separately the two challenges of the protocol. It results an additive cost rather than the expected multiplicative cost. The cost associated with forging a transcript that passes the first 5 rounds of the Proof of Knowledge (Figure\ref{pzk_hypercube} or Figure\ref{pzk_thr}) relies on achieving an optimal trade-off between the work needed for passing the first step and the work needed for passing the second step. To achieve the attack, one can find an optimal number of repetitions with the formula: $$\tau_1=\arg\min_{0\leq\tau\leq r}\left\{\frac{1}{P_1}+P_2^{r-\tau_1}\right\}$$ where $P_1$ and $P_2$ are the probabilities to pass respectively the first and the second challenge.

In the case of the additive case, one obtains:
$$\text{cost}_{\text{forge}}=\min_{0\leq\tau'\leq\tau}\left\lbrace\dfrac{1}{\sum_{i=\tau'}^\tau \binom{\tau}{i}p^i(1-p)^{\tau-i}}+(N)^{\tau-\tau'}\right\rbrace$$
with $p = \frac{2}{q^{m\eta}}-\frac{1}{q^{2m\eta}}$. 

In the case of low-threshold case, one obtains:
$$\text{cost}_{\text{forge}}=\min_{0\leq\tau'\leq\tau}\left\lbrace\dfrac{1}{\sum_{i=\tau'}^\tau \binom{\tau}{i}p'^i(1-p')^{\tau-i}}+\binom{N}{\ell}^{\tau-\tau'}\right\rbrace \label{kz_eq_th_1}$$
with $p' = \Big(\frac{2}{q^{m\eta}}-\frac{1}{q^{2m\eta}}\Big) \binom{N}{\ell+1}$.

\subsection{Attacks against Rank-SD problem}\label{atk_rsd}

We give the complexity of the main attacks against the Rank-Syndrome-Decoding problem. The interested reader can found the detailed attacks and proofs in \cite{specs}.

The enumeration of basis is a combinatorial attack which consists in trying all the different possible supports fof the error. The complexity of this attack is upper bounded by: $$O((nr+m)^3q^{(m-r)(r-1)})$$.

An other combinatorial attack is the error support attack, which is the adaptation of the information set decoding attack used in Hamming metric. It consists in guessing a set of $n-k$ coordinates which contains the support of the vector $\bm{x}$ to obtain a system of $n-k$ equations with $n-k$ variables, for which there exists often a solution. This attack can be achieved with an average complexity: $$O((n-k)^3m^3q^{(r-1)\lfloor\frac{(k+1)m}{n}\rfloor})$$

There exists also algebraic attacks to solve the system of equations $\bm{H}\bm{x}=\bm{y}$ using computer algebra techniques like Gröbner basis. Today, the best algebraic modelings for solving the Rank-SD problem are the MaxMinors modeling~\cite{BBBGNRT20,BBCGPSTV20} and the Support Minors modeling~\cite{BBCGPSTV20,BBBGT22}. The final cost in $\Fq$ operations is given by:
  \begin{align*}
    \mathcal{O}\left( m^2N {M }^{\omega-1}\right),
  \end{align*} where \begin{align*}
  N &=  \sum_{i = 1}^{k} \binom{n-i}{ r}\binom{k+b-1-i}{b-1} 
      - \binom{n-k-1}{ r}\binom{k+b-1}{b} \\
  &~ -  (m-1)\sum_{i=1}^{b} (-1)^{i+1} 
    \binom{k+b-i-1}{b-i}\binom{n-k-1}{r+i}.\\
  M &= \binom{k+b-1}{b}\left(\binom{n}{r} - m\binom{n-k-1}{r} \right).
\end{align*} and $\omega$ is the linear algebra constant.

\newpage

\bibliographystyle{alpha}
\bibliography{biblio}

\nocite{isit}

\end{document}